\newtheorem{theorem}{\bf Theorem}
\newtheorem{definition}{\bf Definition}
\newtheorem{prop}{\bf Proposition}
\definecolor{myGreen}{RGB}{13, 134, 13} 
\begin{document}
	\title{An Efficient and Exact Algorithm for Locally $h$-Clique Densest Subgraph Discovery}
	
	\author{Xiaojia Xu}
	\affiliation{%
		\institution{Renmin University of China}
		\city{Beijing}
		\country{China}}
	\email{xuxiaojia@ruc.edu.cn}
	
	\author{Haoyu Liu}
	\affiliation{%
		\institution{Renmin University of China}
		\city{Beijing}
		\country{China}}
	\email{hyliu2187@ruc.edu.cn}
	
	\author{Xiaowei Lv}
	\affiliation{%
		\institution{Renmin University of China}
		\city{Beijing}
		\country{China}}
	\email{lvxiaowei@ruc.edu.cn}
	
	\author{Yongcai Wang$^{\ast}$}
	\affiliation{%
		\institution{Renmin University of China}
		\city{Beijing}
		\country{China}}
	\email{ycw@ruc.edu.cn}
	
	\author{Deying Li}
	\affiliation{%
		\institution{Renmin University of China}
		\city{Beijing}
		\country{China}}
	\email{deyingli@ruc.edu.cn}

\begin{abstract}
	Detecting locally, non-overlapping, near-clique densest subgraphs is a crucial problem for community search in social networks. As a vertex may be involved in multiple overlapped local cliques, detecting locally densest sub-structures considering $h$-clique density, i.e., \emph{locally $h$-clique densest subgraph (L$h$CDS)}  attracts great interests. This paper investigates the L$h$CDS detection problem and proposes an efficient and exact algorithm to list the top-$k$ non-overlapping, locally $h$-clique dense, and compact subgraphs. 
	We  in particular jointly consider $h$-clique compact number and L$h$CDS  and design a new ``Iterative Propose-Prune-and-Verify'' pipeline (\texttt{IPPV}) for top-$k$ L$h$CDS detection. 
	(1) In the proposal part, we derive initial bounds for $h$-clique compact numbers; prove the validity, and extend a convex programming method to tighten the bounds for proposing L$h$CDS candidates without missing any. 
	(2) Then a tentative graph decomposition method is proposed to solve the challenging case where a clique spans multiple subgraphs in graph decomposition. 
	(3) To deal with the verification difficulty, both a basic and a fast verification method are proposed, where the fast method constructs a smaller-scale flow network to improve efficiency while preserving the verification correctness. The verified L$h$CDSes are returned, while the candidates that remained unsure reenter the \texttt{IPPV} pipeline. 
	(4) We further extend the proposed methods to locally more general pattern densest subgraph detection problems.
	We prove the exactness and low complexity of the proposed algorithm. Extensive experiments on real datasets show the effectiveness and high efficiency of \texttt{IPPV}. 
\end{abstract}

\maketitle

\section{Introduction}
Finding dense subgraphs can uncover highly connected and cohesive structures in graphs, making it an effective tool for understanding complex systems. The discovery of dense subgraphs and communities has numerous applications in diverse fields including social networks \cite{Chen2012DenseSE,Tsourakakis2013DenserTT,Gibson2005DiscoveringLD}, web analysis \cite{Gionis2013PiggybackingOS,Angel2012DenseSM}, graph databases \cite{Jin20093HOPAH,Zhao2012LargeSC}, and biology \cite{Saha2010DenseSW,Li2022DensestSM}.  In these applications, the identification of near-clique subgraphs holds significant importance, as it relaxes the requirement of complete connectivity within cliques and allows for a certain degree of sparsity or missing connections while still maintaining a high level of connectivity. 

Given the importance of detecting large near-clique subgraphs \cite{tsourakakis2015k}, 
the $h$-clique densest subgraph (CDS) problem that finds near-cliques formed by overlapping cliques has attracted great research attention \cite{tsourakakis2015k,Mitzenmacher2015ScalableLN,Fang2019EfficientAF,sun2020kclist++}. 
This is due to the fact that a vertex is generally involved in multiple overlapping cliques, such as a person may be involved in cliques as family members, office mates, etc \cite{Palla2005UncoveringTO}. By finding the subgraph with the highest density of $h$-cliques, CDS uncovers the highly-connected component that exhibits strong internal interactions \cite{Benson2016HigherorderOO,Spirin2003ProteinCA,liu2018graph}. 
Whereas, in the context of the real world, the discovery of a single CDS offers limited insights. 
Listing the top-$k$ CDSes is desired, but due to the substantial overlap inherent in $h$-cliques \cite{Wang2013RedundancyawareMC,Yuan2015DiversifiedTC}, the vanilla top-$k$ CDSes may refer to the same dense region, still providing limited structural insights. 

Therefore, detecting the top-$k$ non-overlapping, locally maximal, dense, and compact, i.e., \emph{locally $h$-clique densest subgraphs (L$h$CDS)} attracts great interest. For example, Figure \ref{fig:realexample} shows the relationships between a subset of characters in ``\textit{Harry Potter}''. The top-$1$ and top-$2$ L$3$CDSes are the blue and green subgraphs, respectively. The top-$1$ L$3$CDS is a family named \textit{Weasley}, and the top-$2$ L$3$CDS is an organization named \textit{Death Eaters}, which indicate the potential of L$h$CDS discovery for mining diverse dense communities.
\begin{figure}[h]
	\centering
	\includegraphics[width=0.9\linewidth]{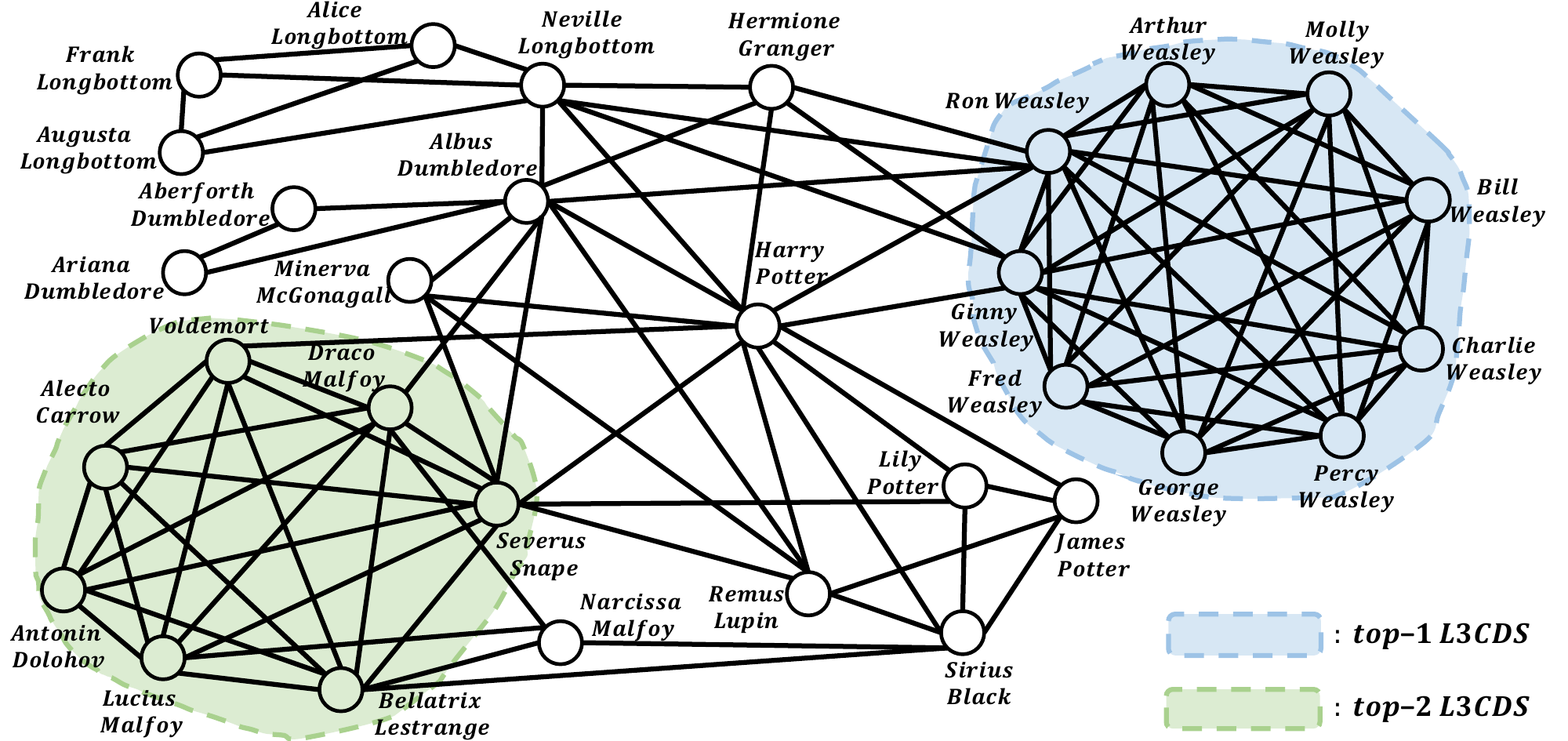}
	\caption{Part of the ``Harry Potter'' Network}
	\label{fig:realexample}
\end{figure}

However, no efficient and exact algorithm is known for detecting L$h$CDS yet. The closest work to L$h$CDS discovery is the locally densest subgraph (LDS) discovery \cite{qin2015locally}, which is a special case of L$h$CDS with $h=2$. But LDS only considers the density of edges, and it is challenging to generalize the techniques to arbitrary $h$. Firstly, the $h$-clique compactness is harder to evaluate, because the number of $h$-cliques can be several orders of magnitude more than the number of edges. Secondly, an $h$-clique spans over $h$ vertices, making the subgraph division more difficult. Thirdly, verification of L$h$CDS is more complex than verifying LDS since the clique density and clique compactness are harder to evaluate and verify. 

To address the above difficulties, we jointly consider the $h$-clique compact number estimation and L$h$CDS detection, 
so as to design a new \emph{iterative propose-prune-and-verify (\texttt{IPPV})} pipeline. 
\texttt{IPPV} is composed of the following iterative steps: (1) estimate bounds of $h$-clique compact number to propose L$h$CDS candidates efficiently; (2) use the bounds to prune vertices that are definitely not in any L$h$CDS to narrow the candidate set; (3) use verification algorithm to ensure the exactness of IPPV for extracting L$h$CDS; (4) remained candidates reenter the above steps until the top-$k$ L$h$CDSes are found.
To the best of our knowledge, this paper is the first to explore the L$h$CDS detection. The key contributions of \texttt{IPPV} are as follows: 

\begin{enumerate}[leftmargin=*]
	\item  
	 The initial $h$-clique compact number bounds are proposed based on the structures of graphs, and we show that a convex programming, which provides $h$-clique diminishingly dense decomposition, can be extended to tighten the bounds. 
	\item We propose a tentative graph decomposition method to deal with the case when a clique is spanning multiple subgraphs to generate correct decomposition proposals. 
	\item Efficient verification is the critical part since the verification is complex for verifying both the $h$-clique density and $h$-clique compactness. We propose a novel fast verification algorithm by carefully constructing a size-reduced flow network and using the maximum flow algorithm. We prove the correctness and efficiency of the proposed fast verification algorithm. 
	\item At last, we further extend the \emph{iterative propose-prune-and-verify} pipeline to detect locally general pattern densest subgraphs. More than six patterns are investigated, showing the potential of detecting locally more general pattern densest subgraphs. 
\end{enumerate}
We theoretically verify the exactness and efficiency of the proposed algorithm and conduct extensive experiments with different quality measures on large real datasets to evaluate the algorithm.

\section{related work}
\subsection{Densest Subgraph}  
The algorithms to the densest subgraph (DS) problem can be classified into two categories: exact algorithms and approximation algorithms. 
The DS problem can be solved in polynomial time by exact methods based on maximum flow, linear programming, or convex optimization. Picard et al. \cite{Picard1982ANF} and Goldberg \cite{Goldberg1984FindingAM} first introduced the maximum-flow-based exact algorithm for the densest subgraph problem. Charikar \cite{Charikar2000GreedyAA} proposed an LP-based exact algorithm for the DS problem. The convex-optimization-based exact algorithm is proposed by Danisch et al. \cite{Danisch2017LargeSD} and can handle graphs containing tens of billions of edges. Fang et al. \cite{Fang2019EfficientAF} improved the efficiency of the flow-based exact algorithm by locating the densest subgraph in a specific $k$-core. Exact algorithms cannot scale well to large graphs, so a large number of works on faster approximation algorithms for the DS problem are also presented. 

Charikar \cite{Charikar2000GreedyAA} proposed a 2-approximation algorithm for the DS problem, which is known as the greedy peeling algorithm. Proving that the $k_{max}$-core is a 2-approximation solution to the DS problem, Fang et al. \cite{Fang2019EfficientAF} improved the greedy peeling algorithm based on $k_{max}$-core. Inspired by the multiplicative weights update method, Boob et al. \cite{Boob2019FlowlessED} designed an iterative version of the greedy peeling algorithm. Based on the MapReduce model, Bahmani et al. \cite{Bahmani2012DensestSI} proposed an $2(1+\epsilon)$-approximation algorithm, where $\epsilon > 0$. Based on the dual of Charikar’s LP relaxation, Harb et al. \cite{Harb2022FasterAS} presented a new iterative algorithm for the DS problem. Chekuri et al. \cite{Chekuri2022DensestSS} proposed a flow-based approximation algorithm for the DS problem.\\ 
\indent The DS problem has various variants focusing on different aspects and different types of graphs. Two recent surveys \cite{lanciano2023survey,luo2023survey} detail different variations of the DS problem and their applications to different types of graphs, such as directed graphs \cite{Charikar2000GreedyAA}, labeled graphs \cite{fazzone2022discovering}, and uncertain graphs \cite{zou2013polynomial}.

\subsection{$h$-clique Densest Subgraph}
Tsourakakis \cite{tsourakakis2015k} defined the notion of $h$-clique density and introduced the $h$-clique densest subgraph (CDS) problem.
Mitzenmacher et al. \cite{Mitzenmacher2015ScalableLN} presented a sampling scheme called the densest subgraph sparsifier, yielding a randomized algorithm that produces a well-approximate solution to the CDS problem.
Fang et al. \cite{Fang2019EfficientAF} proposed more efficient exact and approximation algorithms for the CDS problem.
Sun et al. \cite{sun2020kclist++} aimed at developing near-optimal and exact algorithms for the CDS problem on large real-world graphs. They modified the Frank-Wolfe algorithm for CDS to their algorithm kClist++ and proved the effectiveness of the proposed algorithm. 

\subsection{Locally Densest Subgraph}
The locally densest subgraph (LDS) problem is a variant of the densest subgraph (DS) problem.
Qin et al. \cite{qin2015locally} proposed a method to discover the top-$k$ representative locally densest subgraphs of a graph. The method involves defining a parameter-free definition of an LDS, showing that the set of LDSes in a graph can be computed in polynomial time, and proposing three novel pruning strategies to reduce the search space of the algorithm.
Trung et al. \cite{Trung2023VerificationFreeAT} observed the hierarchical structure of maximal $\rho$-compact subgraphs and presented verification-free approaches to improve the efficiency of finding top-$k$ LDSes.
Ma et al. \cite{ma2022finding} proposed a convex-programming-based algorithm called LDScvx to the LDS problem by introducing the concept of the compact number and using the relations of compactness to the LDS problem and a specific convex program.
Capitalizing on previous results \cite{qin2015locally}, Samusevich et al. \cite{samusevich2016local} studied the local triangle densest subgraph (LTDS) problem, which extended the LDS model to triangle-based density. It's worth noting that, in essence, LDS is a special instance of L$h$CDS when $h=2$; LTDS is a special instance of L$h$CDS when $h=3$.

\section{Preliminaries}
Given an undirected graph $G=(V,E)$, we use $\psi_{h}(V_{\psi_{h}}, E_{\psi_{h}})$ to denote an $h$-clique with $|V_{\psi_{h}}|$ vertices and $|E_{\psi_{h}}|$ edges. $\Psi_{h}(G)$ is the collection of $h$-cliques of $G$. $d_{\psi_{h}}(G)$ denotes the $h$-clique density of $G$, $d_{\psi_{h}}(G)=\frac{|\Psi_{h}(G)|}{|V|}$, and  $deg_{G}(v,\psi_{h})$ is the $h$-clique degree of $v$, i.e., the number of $h$-cliques containing $v$. Given a subset $S\subseteq V$, $G[S]=(S,E(S))$ is the subgraph induced by $S$, and $E(S)=E(G)\cap(S\times S)$. Table \ref{tab:notation} summarizes the main notations used in this paper.
\begin{table}[htbp]\footnotesize
	\caption{MAIN NOTATIONS}
	\begin{tabular}{c|l}
		\toprule
		\textbf{Notation} & \textbf{Definition} \\ 
		\midrule
		$G=(V,E)$ &  a graph with vertex set $V$ and edge set $E$\\ 
		$n,m$ & $n=|V|, m=|E|$ \\ 
		$G[S]$ & the subgraph induced by $S$ \\ 
		$\Psi_{h}(G)$ & the collection of all $h$-cliques of $G$ \\ 
		$\psi_{h}(V_{\psi_{h}}, E_{\psi_{h}})$ & an $h$-clique ($V_{\psi_{h}}$ is vertex set, $E_{\psi_{h}}$ is edge set) \\ 
		$d_{\psi_{h}}(G)$ & the $h$-clique density of $G$, $d_{\psi_{h}}(G)=\frac{|\Psi_{h}(G)|}{|V|}$ \\ 
		$\phi_{h}(u)$ & the $h$-clique compact number of vertex $u$ \\ 
		$deg_{G}(v,\psi_{h})$ & the $h$-clique degree of vertex $v$ in $G$ \\ 
		$\overline{\phi}_{h}(u)$ & an upper bound of $\phi_{h}(u)$ in $G$ \\ 
		$\underline{\phi}_{h}(u)$ & a lower bound of $\phi_{h}(u)$ in $G$ \\
		$CP(G,h)$ & the convex programming of $G$ for $h$-clique densest subgraph\\ 
		$\alpha$ & the weights distributed from $h$-cliques to vertices \\ 
		$r$ & the weights received by each vertex \\ 
		\bottomrule
	\end{tabular}
	\label{tab:notation}
\end{table}

A densest subgraph in a local region not only means that such a subgraph is not included in any other denser subgraph, but also requires the inner density to be compact and evenly distributed. 
Qin et al. \cite{qin2015locally} proposed the concept of $\rho$-compact, which gives a reasonable definition of locally densest subgraph. A graph $G$ is $\rho$-compact when removing any subset $S$ from $G$ removes at least $\rho \times |S|$ edges. Considering the $h$-clique density in a graph, we define an $h$-clique $\rho$-compact graph as:

\begin{definition}[$h$-clique $\rho$-compact]
	A graph $G=(V,E)$ is $h$-clique $\rho$-compact if $G$ is connected, and removing any subset of vertices $S\subseteq V$ will result in the removal of at least $\rho\times|S|$ $h$-cliques in $G$, where $\rho$ is a non-negative real number.
	\label{def:hcc}
\end{definition}

If $G$ is $h$-clique $\rho$-compact, then the $h$-clique degree of each vertex in $G$ is at least $\lceil\rho\rceil$, because removing any vertex will remove at least $\rho$ $h$-cliques. 
Besides, the $h$-clique density of an $h$-clique $\rho$-compact graph is at least $\rho$. For any $\hat{\rho} > \rho$, an $h$-clique $\hat{\rho}$-compact graph is also an $h$-clique $\rho$-compact graph, so we define the $h$-clique compactness of a graph $G$ as the largest $\rho$ such that $G$ is $h$-clique $\rho$-compact. A subgraph $G[S]$ of $G$ is a maximal $h$-clique $\rho$-compact subgraph if none of the supergraphs of $G[S]$ is $h$-clique $\rho$-compact.
\begin{prop}
	If a graph $G$ has $h$-clique density $d_{\psi_{h}}(G)$, then the $h$-clique compactness of the graph is at most $d_{\psi_{h}}(G)$, i.e., $\rho \le d_{\psi_{h}}(G)$.
	\label{prop:densandcompa}
\end{prop}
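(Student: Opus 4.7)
The plan is to invoke the defining inequality of $h$-clique $\rho$-compactness with the entire vertex set as the removed subset. By Definition \ref{def:hcc}, if $G$ is $h$-clique $\rho$-compact, then for every $S \subseteq V$, removing $S$ eliminates at least $\rho \cdot |S|$ $h$-cliques of $G$. The key observation is that this universal statement can be instantiated with the maximal choice $S = V$, which trivially eliminates every $h$-clique in $G$.

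Concretely, I would proceed in three short steps. First, let $\rho$ be any value for which $G$ is $h$-clique $\rho$-compact; in particular this covers the $h$-clique compactness, which by our earlier discussion is the largest such $\rho$. Second, apply the compactness condition with $S = V$: since every $\psi_h \in \Psi_h(G)$ has $V_{\psi_h} \subseteq V$, removing $V$ destroys all of $\Psi_h(G)$, so the number of removed $h$-cliques is exactly $|\Psi_h(G)|$. The compactness inequality therefore becomes
\[
|\Psi_h(G)| \;\ge\; \rho \cdot |V|.
\]
Third, divide both sides by $|V|$ to obtain $\rho \le |\Psi_h(G)|/|V| = d_{\psi_h}(G)$, which is the claim.

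There is no real obstacle: the whole argument is an instantiation of a universally quantified condition at a single well-chosen $S$. The only point worth double-checking in the write-up is the harmless fact that every $h$-clique contains at least one vertex (so is indeed removed when $S = V$), and, optionally, that the bound is tight exactly when $G$ is an $h$-clique with uniformly distributed density, which motivates why $d_{\psi_h}(G)$ is the natural ceiling for $\rho$.
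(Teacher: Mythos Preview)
Your proposal is correct and takes essentially the same approach as the paper: both arguments instantiate the compactness condition with $S = V$ to get $|\Psi_h(G)| \ge \rho\,|V|$. The only cosmetic difference is that the paper phrases this as a proof by contradiction (supposing $\rho > d_{\psi_h}(G)$ and deriving that more than $d_{\psi_h}(G)\cdot|V|$ cliques are removed), whereas you give the cleaner direct version.
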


\begin{proof}
	Suppose the compactness of $G$ is higher than $d_{\psi_{h}}(G)$, then removing all vertices in $G$ will result in the removal of more than $d_{\psi_{h}}(G)\times|V|$ $h$-cliques, which means that the $h$-clique density of $G$ must be higher than $d_{\psi_{h}}(G)$, and contradicts the fact that the $h$-clique density of $G$ is $d_{\psi_{h}}(G)$. 
\end{proof}

Proposition \ref{prop:densandcompa} clarifies that the $h$-clique compactness of a graph cannot be greater than $d_{\psi_{h}}(G)$. We are then interested in finding the locally $h$-clique dense and compact subgraph $G[S]$ in $G$. We formally define a locally $h$-clique densest subgraph as follows.

\begin{definition}[Locally $h$-clique densest subgraph (L$h$CDS)]
	A subgraph $G[S]$ of $G$ is a locally $h$-clique densest subgraph of $G$ if the following two conditions hold:
	\begin{enumerate}
		\item[1.] $G[S]$ is an $h$-clique $d_{\psi_{h}}(G[S])$-compact subgraph;
		\item[2.] There does not exist a supergraph $G[S']$ of $G[S]$ ($S'\supsetneq S$), such that $G[S']$ is also $h$-clique $d_{\psi_{h}}(G[S])$-compact.
	\end{enumerate}
	\label{def:lhcds}
\end{definition}

\begin{prop}[Disjoint property]
	Suppose $G[S]$ and $G[S']$ are two L$h$CDSes in $G$, we have $S\cap S'=\emptyset$.
	\label{prop:disjoint}
\end{prop}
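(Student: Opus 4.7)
The plan is to argue by contradiction. Assume $G[S]$ and $G[S']$ are two distinct L$h$CDSes with $S \cap S' \neq \emptyset$, set $\rho = d_{\psi_{h}}(G[S])$ and $\rho' = d_{\psi_{h}}(G[S'])$, and suppose without loss of generality that $\rho \leq \rho'$. By Definition \ref{def:hcc}, the property of being $h$-clique $\hat{\rho}$-compact is monotone in $\hat{\rho}$, so $G[S']$ is also $h$-clique $\rho$-compact.

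The centerpiece of the proof will be an auxiliary union lemma: \emph{if $G[A]$ and $G[B]$ are both $h$-clique $\rho$-compact and $A \cap B \neq \emptyset$, then $G[A \cup B]$ is $h$-clique $\rho$-compact as well.} To prove it, I would fix $T \subseteq A \cup B$ and partition it \emph{asymmetrically} as $T = (T \cap A) \sqcup (T \cap (B \setminus A))$. By $\rho$-compactness of $G[A]$, at least $\rho\,|T \cap A|$ $h$-cliques of $G[A]$ contain a vertex of $T \cap A$; by $\rho$-compactness of $G[B]$, at least $\rho\,|T \cap (B \setminus A)|$ $h$-cliques of $G[B]$ contain a vertex of $T \cap (B \setminus A)$. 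These two families are disjoint, because the first consists of cliques inside $A$, whereas each clique of the second contains a vertex of $B \setminus A$. Both families live in $G[A \cup B]$ and meet $T$, so removing $T$ from $G[A \cup B]$ destroys at least $\rho\,|T \cap A| + \rho\,|T \cap (B \setminus A)| = \rho\,|T|$ $h$-cliques, establishing the lemma.

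Applying the lemma to $A = S$, $B = S'$, I would conclude $G[S \cup S']$ is $h$-clique $\rho$-compact. If $S' \not\subseteq S$ then $S \cup S' \supsetneq S$, giving a proper $\rho$-compact supergraph of $G[S]$ and contradicting the maximality clause of Definition \ref{def:lhcds} for $G[S]$. The remaining case $S' \subsetneq S$ needs a separate density argument: applying $\rho$-compactness of $G[S]$ to $T = S \setminus S'$ shows that the $|\Psi_{h}(G[S])| - |\Psi_{h}(G[S'])|$ cliques removed must be at least $\rho\,|S \setminus S'|$; substituting $|\Psi_{h}(G[S])| = \rho|S|$ and $|\Psi_{h}(G[S'])| = \rho'|S'|$ and rearranging forces $\rho' \leq \rho$, hence $\rho = \rho'$, whereupon $G[S]$ is itself a proper $\rho'$-compact supergraph of $G[S']$, contradicting the maximality of $G[S']$.

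The main obstacle is the union lemma. A naive inclusion--exclusion over cliques of $G[A]$ and $G[B]$ leaves an overlap term counting cliques of $G[A \cap B]$ that meet $T \cap A \cap B$, which cannot be bounded by $\rho\,|T \cap A \cap B|$ in general. The key trick that unblocks the argument is the asymmetric partition of $T$: by attributing vertices of $A \cap B$ only to the ``$A$ side'' and reserving the ``$B$ side'' for $B \setminus A$, the two witnessed families of removed cliques become disjoint by construction, completely bypassing any overlap bookkeeping.
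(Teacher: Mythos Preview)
Your proof is correct and follows essentially the same strategy as the paper: show that $G[S\cup S']$ is $h$-clique $\min(\rho,\rho')$-compact and use this to violate the maximality clause of one of the two L$h$CDSes. The paper's proof simply asserts that the union is compact and that the containment case cannot occur ``by definition of L$h$CDS,'' whereas you actually supply both missing arguments---the asymmetric-partition proof of the union lemma and the density computation ruling out $S'\subsetneq S$---so your write-up is strictly more complete than the paper's. One small omission: Definition~\ref{def:hcc} also requires connectedness, so when stating that $G[A\cup B]$ is $\rho$-compact you should note that $A\cap B\neq\emptyset$ together with the connectedness of $G[A]$ and $G[B]$ makes $G[A\cup B]$ connected.
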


\begin{proof}
	Without loss of generality, we assume $d_{\psi_{h}}(G[S]) \geq d_{\psi_{h}}(G[S'])$. We prove the proposition by contradiction. Suppose $S\cap S'\neq \emptyset$. According to the definition of L$h$CDS, $G[S]\nsubseteq G[S']$. Since $G[S]$ and $G[S']$ are two L$h$CDSes, the graph induced by $S\cup S'$ is a connected $h$-clique $d_{\psi_{h}}(G[S'])$ - compact graph which is larger than $S'$. This contradicts the fact that $G[S']$ is an L$h$CDS.
\end{proof}
Proposition \ref{prop:disjoint} proves that all L$h$CDSes in a graph $G$ are pairwise disjoint. Therefore, the number of L$h$CDSes of $G$ is bounded by $|V|$, and the L$h$CDSes can be used to identify all the non-overlapping $h$-clique dense regions of a graph.

Most applications in the real world usually require finding the top-$k$ dense regions of a graph \cite{qin2015locally}, so we focus on finding the top-$k$ L$h$CDSes with the largest densities. When $k$ is large enough, all L$h$CDSes can be found. We formulate the problem as follows.
\begin{definition} [Locally $h$-clique densest subgraph Problem (L$h$CDS Problem)]
	Given a graph $G$, an integer $h$, and an integer $k$, the locally $h$-clique densest subgraph problem is to compute the top-$k$ L$h$CDSes ranked by the $h$-clique density in $G$.
\end{definition}

Figure \ref{fig:example} shows an example of the L$h$CDS. We use $S_{1}$,$S_{2}$, and $S_{3}$ to represent $\{v_{12},...,v_{17}\}$, $\{v_{2},...,v_{6}\}$, and $\{v_{8},...,v_{11}\}$. When $h=3$, the top-$1$ L$3$CDS is $G[S_{1}]$, which has a 3-clique density of $\frac{13}{6}$, since there are thirteen 3-cliques in it. The top-$1$ and top-$2$ L$4$CDSes are $G[S_{2}]$ and $G[S_{1}]$. They both have a 4-clique density of $1$. 
\begin{figure}[htbp]
	\centering
	\includegraphics[width=0.8\linewidth]{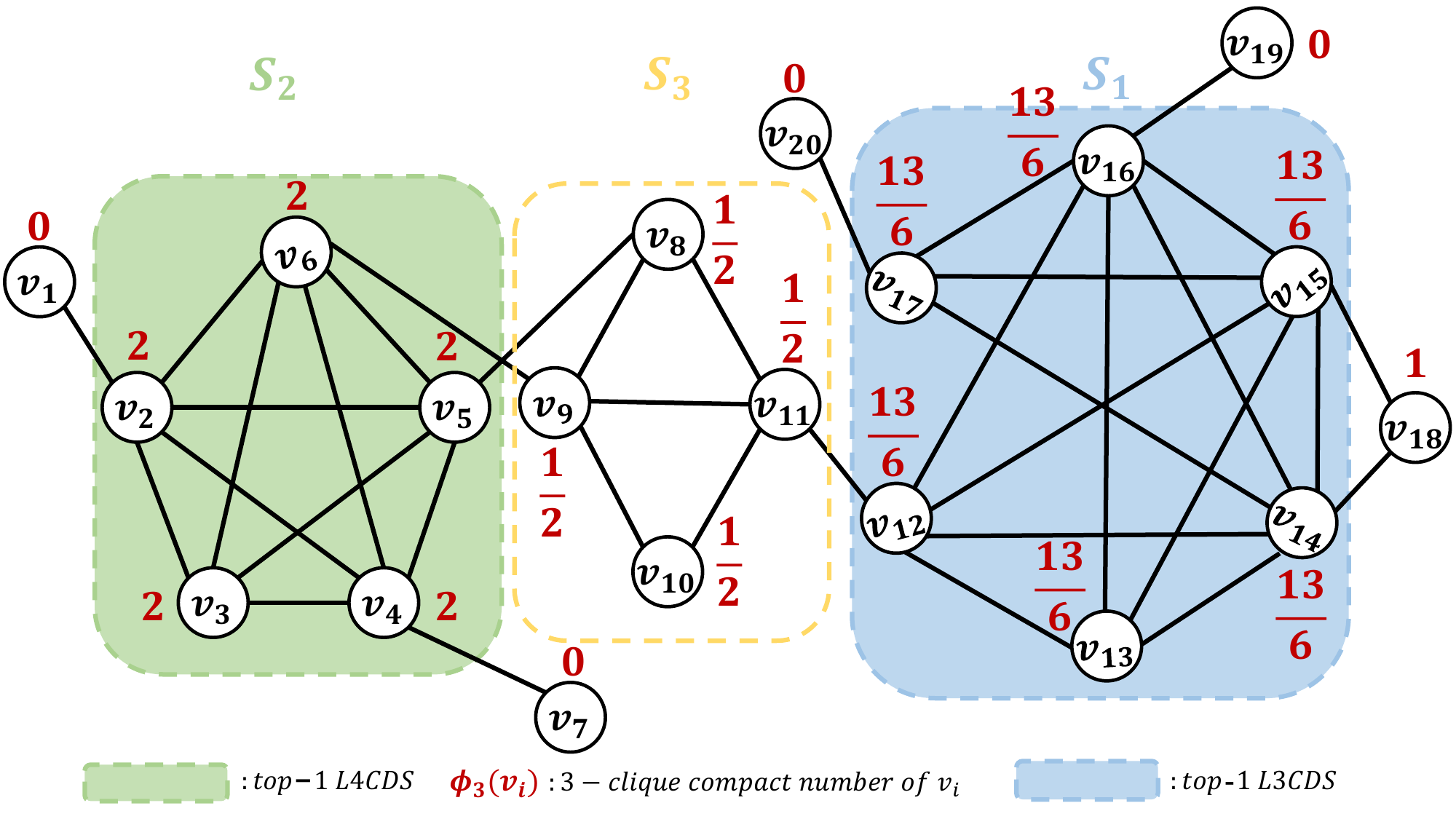}
	\caption{An example of the locally $h$-clique densest subgraph}
	\label{fig:example}
\end{figure}

Note that an edge in a graph $G$ is a 2-clique; therefore, the intensively studied LDS problem \cite{qin2015locally,ma2022finding} can be seen as an instance of the L$h$CDS problem when $h=2$.  Similarly, the LTDS problem \cite{samusevich2016local} is exactly the L$3$CDS problem. Therefore, the L$h$CDS problem studied in this paper provides a more general framework, and we boldly infer that our method can be generalized from $h$-clique to general patterns, which means that we can give an algorithmic framework to solve a wider range of locally pattern densest problems. 

\section{L$h$CDS Discovery}
In this section, we focus on the design of an L$h$CDS discovery algorithm. According to the concept of $h$-clique compactness, each subgraph of a graph $G$ has its own compactness. However, a vertex may be contained in various subgraphs with different compactness. Therefore, we introduce the concept of $h$-clique compact number for each vertex in a graph.  
\begin{definition}[$h$-clique compact number]
	Given a graph $G=(V,E)$, for each vertex $u\in V$, the $h$-clique compact number of $u$ is the largest $\rho$ such that $u$ is contained in an $h$-clique $\rho$-compact subgraph of $G$, denoted by $\phi_{h}(u)$.
	\label{def:hccn}
\end{definition} 
In the following theorem, we prove the relationship between the L$h$CDS and the $h$-clique compact numbers of vertices within it.
\begin{theorem}
	Given an  L$h$CDS $G[S]$ in $G$, for each vertex $u \in S$, the $h$-clique compact number of $u$ is equal to the $h$-clique density of $G[S]$, i.e., $\phi_{h}(u)=d_{\psi_{h}}(G[S])$.
	\label{theo:uin}
\end{theorem}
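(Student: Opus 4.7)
The equality will be established by proving both inequalities separately. The easy direction $\phi_{h}(u) \ge d_{\psi_{h}}(G[S])$ is immediate: by the first condition of Definition \ref{def:lhcds}, $G[S]$ is itself an $h$-clique $d_{\psi_{h}}(G[S])$-compact subgraph containing $u$, so $d_{\psi_{h}}(G[S])$ is a feasible value in the supremum that defines $\phi_{h}(u)$. All of the work lies in the reverse direction $\phi_{h}(u) \le d_{\psi_{h}}(G[S])$, which I plan to prove by contradiction. Assume there exists $T \subseteq V$ with $u \in T$ such that $G[T]$ is $h$-clique $\rho'$-compact for some $\rho' > d_{\psi_{h}}(G[S])$. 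The argument then splits on whether $T$ escapes $S$.

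If $T \subseteq S$, a direct clique-counting argument inside $G[S]$ suffices. By Proposition \ref{prop:densandcompa} applied to $G[T]$ we have $|\Psi_{h}(G[T])| \ge \rho' |T|$, and by the $d_{\psi_{h}}(G[S])$-compactness of $G[S]$, removing $S \setminus T$ from $G[S]$ deletes at least $d_{\psi_{h}}(G[S]) \cdot |S \setminus T|$ $h$-cliques, each of which contains a vertex of $S \setminus T$ and therefore lies outside $G[T]$. Summing the cliques entirely inside $T$ with those destroyed gives
\[
|\Psi_{h}(G[S])| \;\ge\; \rho' |T| + d_{\psi_{h}}(G[S]) \cdot |S \setminus T| \;>\; d_{\psi_{h}}(G[S]) \cdot |S|,
\]
contradicting the defining identity $|\Psi_{h}(G[S])| = d_{\psi_{h}}(G[S]) \cdot |S|$ (and this covers $T = S$ as well, where the middle term vanishes). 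If instead $T \not\subseteq S$, I plan to invoke a \emph{union lemma}: whenever $G[A]$ and $G[B]$ are both $h$-clique $\rho$-compact and share at least one vertex, $G[A \cup B]$ is also $h$-clique $\rho$-compact. Granting this with $\rho = d_{\psi_{h}}(G[S])$, $A = S$, and $B = T$ (both are at least $d_{\psi_{h}}(G[S])$-compact since $\rho' > d_{\psi_{h}}(G[S])$), we obtain an $h$-clique $d_{\psi_{h}}(G[S])$-compact subgraph $G[S \cup T]$ strictly containing $G[S]$, which violates the maximality condition in Definition \ref{def:lhcds} and closes the contradiction.

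The main obstacle is a clean proof of the union lemma, because an $h$-clique sitting inside $A \cap B$ could a priori be double-counted when its compactness contributions from $A$ and from $B$ are summed. The fix is to use the asymmetric partition $X_{A} = X \cap A$ and $X_{B} = X \setminus A$ of any tested subset $X \subseteq A \cup B$, so that $X_{B}$ lies entirely in $B \setminus A$ rather than merely in $B$. The $\rho$-compactness of $G[A]$ then yields at least $\rho |X_{A}|$ destroyed cliques, all contained in $A$, while the $\rho$-compactness of $G[B]$ yields at least $\rho |X_{B}|$ destroyed cliques, each containing a vertex of $B \setminus A$ and therefore not contained in $A$. The two collections are disjoint, giving at least $\rho(|X_{A}| + |X_{B}|) = \rho |X|$ destroyed cliques in $G[A \cup B]$; connectivity of the union follows from connectivity of $G[A]$ and $G[B]$ together with their shared vertex. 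Once this lemma is in hand the remainder of the argument is bookkeeping.
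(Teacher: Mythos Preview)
Your proof is correct and follows the same overall strategy as the paper: argue by contradiction, reduce to the case where the hypothetical more-compact subgraph sits inside $S$, and derive a clique-counting contradiction there. The paper's version is terser---it simply asserts ``$S'\subseteq S$, because $G[S]$ is a maximal $h$-clique $d_{\psi_{h}}(G[S])$-compact subgraph and $S'\cap S \neq \emptyset$'' without justification---so your explicit statement and proof of the union lemma in fact supplies an argument the paper leaves implicit (the same implicit step also underlies the proof of Proposition~\ref{prop:disjoint}).
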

\begin{proof}
	As $G[S]$ is a maximal $h$-clique $d_{\psi_{h}}(G[S])$-compact subgraph, for each $u\in S$, there exists no other subgraph $G[S']$ containing $u$ such that $G[S']$ is an $h$-clique $\rho$-compact subgraph with $\rho > d_{\psi_{h}}(G[S])$. We prove the claim by contradiction. Suppose $G[S']$ is an $h$-clique $\rho$-compact subgraph with $\rho > d_{\psi_{h}}(G[S])$ and $u\in S'$, we have $d_{\psi_{h}}(G[S']) \geq \rho > d_{\psi_{h}}(G[S])$. First, $S'\subseteq S$, because $G[S]$ is a maximal $h$-clique $d_{\psi_{h}}(G[S])$-compact subgraph and $S'\cap S \neq \emptyset$. If we remove $U=S\backslash S'$ from $G[S]$, the number of $h$-cliques removed is $|\Psi_{h}(G[S])|-|\Psi_{h}(G[S'])| = d_{\psi_{h}}(G[S]) \times |S|-d_{\psi_{h}}(G[S'])\times |S'| < d_{\psi_{h}}(G[S]) \times (|S|-|S'|) = d_{\psi_{h}}(G[S]) \times |U|$. This contradicts that $G[S]$ is $h$-clique $d_{\psi_{h}}(G[S])$-compact. Hence, $d_{\psi_{h}}(G[S])$ is the $h$-clique compact number of all vertices in $S$.
\end{proof}

Based on Theorem \ref{theo:uin}, once we get the $h$-clique compact number of each vertex in $G$, we can arrange the vertices in descending order based on the $h$-clique compact number and then check whether the vertices with the same $h$-clique compact number satisfy the definition of L$h$CDS to obtain top-$k$ L$h$CDSes. For example, in Figure \ref{fig:example}, we list the $3$-clique compact numbers of all vertices of $G$. It is obvious that $G[S_{1}]$ and $G[S_{2}]$ are both L$3$CDSes. 

However, computing the $h$-clique compact numbers directly is difficult. So we jointly consider $h$-clique compact number and L$h$CDS to design a new ``iterative propose-prune-and-verify'' pipeline for top-$k$ L$h$CDS detection, which is called \texttt{IPPV}. In the proposal part, the true L$h$CDSes are allowed to be encapsulated in the proposed candidates, but without missing true L$h$CDSes. Proper graph decomposition methods should be designed, since a clique may span multiple subgraphs to be decomposed.  In the verification part, each correct L$h$CDS should be outputted, and L$h$CDS candidates that can be further pruned should be indicated.   

\begin{figure}[h]
	\centering
	\includegraphics[width=\linewidth]{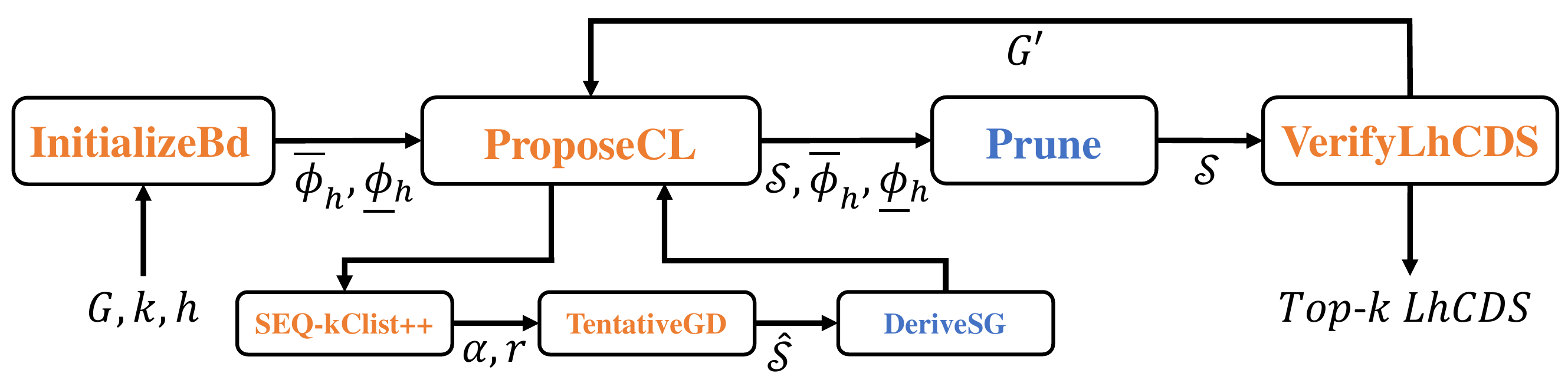}
	\caption{Flow diagram of \texttt{IPPV}}
	\label{fig:flowdiagram}
\end{figure}
Figure \ref{fig:flowdiagram} gives the flow diagram of \texttt{IPPV}. It has four main parts: \textbf{1)} calculate the initial bounds of the $h$-clique compact numbers of vertices; \textbf{2)} iteratively propose all L$h$CDS candidates (generating approximate $h$-clique compact numbers; decomposing the graph tentatively; grouping vertices and tightening bounds); \textbf{3)} prune invalid vertices; \textbf{4)} verify the locally densest property of all candidates to find top-$k$ L$h$CDSes, and we, in particular, propose a basic algorithm and a fast algorithm for verification.  As a general algorithm framework, all blue parts are extensions of existing methods, while all orange parts are our proof and innovation for this problem.

\subsection{Initial $h$-clique Compact Number Bounds}
In order to derive L$h$CDS candidates, we first give initial upper and lower bounds of $h$-clique compact number $\phi_{h}(u)$. 
Specifically, we denote $\overline{\phi}_{h}(u)$ and $\underline{\phi}_{h}(u)$ as the upper and lower bound of $\phi_{h}(u)$ in $G$. 
We use $(k,\psi_{h})$-core\cite{Fang2019EfficientAF}, which is a cohesive subgraph model, to compute the initial bounds. 

\begin{definition}[$(k,\psi_{h})$-core]
	Given a graph $G$, the $(k,\psi_{h})$-core is the largest subgraph of $G$, in which the $h$-clique degree of each vertex is at least $k$. The $h$-clique-core number of a vertex $u\in V$, denoted by $core_{G}(u,\psi_{h})$, is the highest $k$ of $(k,\psi_{h})$-core containing $u$.
\end{definition}
\begin{prop} $h$-clique compact number $\phi_{h}(u)$ has the following relations to the $h$-clique-core number $core_{G}(u,\psi_{h})$.
	\begin{enumerate}[leftmargin = \leftmargin-\widthof{nei}]
		\item A $(k,\psi_{h})$-core subgraph is $h$-clique $\frac{k}{h}$-compact. Thus, for any $u\in V$, $\underline{\phi}_{h}(u)$ can be assigned as $\frac{core_{G}(u,\psi_{h})}{h}$;
		\item If $G[S]$ is an L$h$CDS of $G$, for all $u\in S$, $core_{G}(u,\psi_{h}) \geq d_{\psi_{h}}(G[S])$. Thus, for any $u\in V$, $\overline{\phi}_{h}(u)$ can be assigned as $core_{G}(u,\psi_{h})$.
	\end{enumerate}
	\label{prop:corebound}
\end{prop}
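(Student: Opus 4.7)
My plan is to address the two parts independently, as each follows in one step from the relevant definition, with the only subtle ingredient being a double-counting argument in the lower-bound case.

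For part (1), I would fix an arbitrary $(k,\psi_h)$-core subgraph $H$ of $G$ and a nonempty $S \subseteq V(H)$, and bound from below the number of $h$-cliques of $H$ touched by $S$. Because every vertex of $H$ has $h$-clique degree at least $k$ inside $H$, the sum $\sum_{v \in S} deg_H(v,\psi_h)$ is at least $k|S|$. Each $h$-clique of $H$ that contains at least one vertex of $S$ is counted in this sum once per vertex it shares with $S$, which is at most $h$ since the clique has only $h$ vertices total. Hence at least $k|S|/h$ distinct $h$-cliques in $H$ are destroyed when $S$ is deleted, which is exactly the $h$-clique $\tfrac{k}{h}$-compactness condition from Definition \ref{def:hcc}. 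Since $u$ lies in the $(core_G(u,\psi_h),\psi_h)$-core by the definition of the $h$-clique-core number, and that core is $h$-clique $\tfrac{core_G(u,\psi_h)}{h}$-compact by what we just showed, we conclude $\phi_h(u) \geq \tfrac{core_G(u,\psi_h)}{h}$, validating $\underline{\phi}_h(u)$.

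For part (2), I would start from an L$h$CDS $G[S]$ with density $d := d_{\psi_h}(G[S])$. Definition \ref{def:lhcds} gives that $G[S]$ is $h$-clique $d$-compact, so removing any single vertex $u \in S$ from $G[S]$ destroys at least $d$ cliques; but the cliques destroyed by deleting $u$ are exactly those containing $u$, so $deg_{G[S]}(u,\psi_h) \geq \lceil d \rceil$. Thus every vertex of $G[S]$ has $h$-clique degree at least $\lceil d \rceil$ within $G[S]$, meaning $G[S]$ is a subgraph of $G$ with minimum $h$-clique degree at least $\lceil d \rceil$ and is therefore contained in the $(\lceil d \rceil, \psi_h)$-core of $G$. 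In particular, $core_G(u,\psi_h) \geq \lceil d \rceil \geq d$. Combined with Theorem \ref{theo:uin}, which gives $\phi_h(u) = d$ for every $u \in S$, this justifies using $\overline{\phi}_h(u) = core_G(u,\psi_h)$ as an upper bound of $\phi_h(u)$.

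Nothing here is deep, but the one place to be careful is the double-counting in part (1): the worst case is when all $h$ vertices of an incident clique belong to $S$, and this is exactly what forces the $1/h$ factor in the compactness parameter, so the estimate cannot be tightened in general. Everything else follows routinely from the definitions of $h$-clique $\rho$-compactness, $(k,\psi_h)$-core, and $core_G(u,\psi_h)$, together with Theorem \ref{theo:uin} for the upper-bound direction.
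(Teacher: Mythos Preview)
Your proposal is correct and follows essentially the same approach as the paper. The paper's proof is terser---it asserts the $\tfrac{k}{h}|S|$ bound in part~(1) without writing out the double-counting, and for part~(2) it simply states that an L$h$CDS $G[S]$ is a $(\lceil d_{\psi_h}(G[S])\rceil,\psi_h)$-core subgraph of $G$---but the underlying reasoning is identical to yours, including the implicit appeal to Theorem~\ref{theo:uin} to convert the inequality $core_G(u,\psi_h)\geq d_{\psi_h}(G[S])$ into the upper bound on $\phi_h(u)$.
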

\begin{proof}
	Any vertex in a $(k,\psi_{h})$-core subgraph is contained in at least $k$ $h$-cliques. By removal of any subset $S$ from the $(k,\psi_{h})$-core, at least $\frac{k}{h}\times |S|$ $h$-cliques would be removed. For any $u\in V$, there is an $h$-clique $\frac{core_{G}(u,\psi_{h})}{h}$-compact subgraph of $G$ that contains $u$, then $\frac{core_{G}(u,\psi_{h})}{h}$ is a lower bound of $\phi_{h}(u)$. The second relation can be obtained from the fact that an L$h$CDS $G[S]$ in a graph $G$ is a ($\lceil d_{\psi_{h}}(G[S])\rceil$,$\psi_{h}$)-core subgraph of $G$. For any $u\in V$, if an L$h$CDS contains $u$, then $core_{G}(u,\psi_{h})$ is an upper bound of $\phi_{h}(u)$.
\end{proof}
\begin{algorithm}\small
	\caption{Bound initialization: \texttt{InitializeBd}}
	\label{alg:initbound}
	\KwIn {$G=(V,E), h$} 
	\KwOut {$\overline{\phi}_{h}, \underline{\phi}_{h}$} 
	\lForEach{$u\in V$}{
		compute $core_{G}(u,\psi_{h})$
	}
	\ForEach{$u\in V$}
	{
		$\overline{\phi}_{h}(u)\leftarrow core_{G}(u,\psi_{h})$; $\underline{\phi}_{h}(u)\leftarrow \frac{core_{G}(u,\psi_{h})}{h}$;
	}
	\textbf{return} $\overline{\phi}_{h}$, $\underline{\phi}_{h}$;
\end{algorithm}

According to Proposition \ref{prop:corebound}, we can get the initial bounds of $h$-clique compact number $\phi_{h}(u)$ of $G$ (Lines 2-3) by Algorithm \ref{alg:initbound}.

\subsection{Candidate L$h$CDS Proposal} 
The initial upper and lower bounds for $h$-clique compact numbers from $h$-clique-core numbers are relatively loose. In this section, we focus on how to tighten the bounds and propose L$h$CDS candidates. 

\subsubsection{Overall Algorithm for Candidate L$h$CDS Proposal}
The overall candidate L$h$CDS proposal algorithm is given in Algorithm \ref{alg:dc}.
Approximate $h$-clique compact number is calculated via \texttt{SEQ-kClist++} (Line 1); the preliminary partition of $G$ and recalculated values are obtained via \texttt{TentativeGD} (Line 2); tighter upper and lower bounds for $h$-clique compact numbers and the further partition of $G$ (stable $h$-clique group) are calculated via \texttt{DeriveSG} (Line 3). The sub-procedures introduce each of the above functions (Lines 5-33).

\begin{algorithm}\small
	\caption{Candidate L$h$CDS proposal: \texttt{ProposeCL}}
	\label{alg:dc}
	\KwIn {$G=(V,E)$, number of iterations $T$, $\overline{\phi}_{h}, \underline{\phi}_{h}$} 
	\KwOut {$\mathcal{S}, \overline{\phi}_{h}, \underline{\phi}_{h}$} 
	($\alpha, r$)$\leftarrow$ \texttt{SEQ-kClist++} ($G,T$) ;\\
	$\hat{\mathcal{S}}, \alpha, r \leftarrow$ \texttt{TentativeGD} ($G,\alpha, r$);\\
	$\mathcal{S}, \overline{\phi}_{h}, \underline{\phi}_{h} \leftarrow$ \texttt{DeriveSG} ($\hat{\mathcal{S}}, \alpha, r, \overline{\phi}_{h}, \underline{\phi}_{h}$);\\
	return $\mathcal{S}, \overline{\phi}_{h}, \underline{\phi}_{h}$;\\
	\SetKwFunction{Fseq}{SEQ-kClist++}
	\SetKwProg{Fn}{Procedure}{}{}
	\Fn{\Fseq{$G,T$}}{
		\lForEach{ $h$-clique $\psi_{h}$ in $G$}
		{
			$\alpha_{u,\psi_{h}}\leftarrow \frac{1}{h}$ , $\forall u\in V_{\psi_{h}}$
		}
		\lForEach{ $u\in V$}
		{
			$r(u)\leftarrow \sum_{\psi_{h}\in \Psi_{h}(G): u\in \psi_{h}}\alpha_{u,\psi_{h}}$
		}
		\ForEach{ iteration $t = 1,...,T$}
		{
			$\gamma_{t}\leftarrow \frac{1}{t+1}$;
			$\alpha\leftarrow (1-\gamma_{t})*\alpha$; $r\leftarrow (1-\gamma_{t})*r$;\\
			\ForEach{ $h$-clique $\psi_{h}$}
			{
				$v_{min}\leftarrow arg min_{v\in \psi_{h}}r(v)$;\\
				$\alpha_{v_{min},\psi_{h}} \leftarrow \alpha_{v_{min},\psi_{h}} + \gamma_{t}$; $r(v_{min})\leftarrow r(v_{min}) + \gamma_{t}$;\\
			}  
		}
		\textbf{return} $(\alpha,r)$;
	}
	\SetKwFunction{Ften}{TentativeGD}
	\Fn{\Ften{$G,\alpha, r$}}{
		sort vertices in $V$ in descending order according to $r$;\\
		$P \leftarrow \{p|p = arg max_{p\leq q\leq n}d_{\psi_{h}}(G[V_{[1:q]}])\}$ ;\\
		$\hat{\mathcal{S}} \leftarrow$ partition $V$ according to $P$;\\
		\ForEach{$\psi_{h}\in \Psi_{h}(G)$}
		{
			$p \leftarrow \max \left\{1 \leq i \leq l: \psi_{h} \cap \hat{S}_{i} \neq \emptyset \right\}$; \\
			$s \leftarrow \sum_{u \in \psi_{h} \backslash \hat{S}_{p}} \alpha_{u,\psi_{h}}$; \\
			$\forall u \in \psi_{h} \backslash \hat{S}_{p}, \alpha_{u,\psi_{h}} \leftarrow 0$; \\
			$\forall u \in \psi_{h} \cap \hat{S}_{p}, \alpha_{u,\psi_{h}} \leftarrow \alpha_{u,\psi_{h}}+\frac{s}{\left|\psi_{h} \cap \hat{S}_p\right|}$;	
		}
		\lForEach{$u\in V$}
		{
			$r(u)\leftarrow \sum_{\psi_{h}\in \Psi_{h}(G): u\in \psi_{h}}\alpha_{u,\psi_{h}}$
		}
		\textbf{return} $\hat{\mathcal{S}}$, $\alpha$, $r$;
	}
	\SetKwFunction{Fder}{DeriveSG}
	\Fn{\Fder{$\hat{\mathcal{S}}, \alpha, r, \overline{\phi}_{h}, \underline{\phi}_{h}$}}{
		\While {$\hat{\mathcal{S}}$ is not empty}
		{
			$S' \leftarrow$ pop out the first candidate from $\hat{\mathcal{S}}$;
			$S \leftarrow S \cup S'$;\\
			\lIf {$S$ is a stable $h$-clique group}
			{
				put $S$ into $\mathcal{S}$; $S\leftarrow \emptyset$
			}
		}
		\ForEach{ $S\in \mathcal{S}$}
		{
			\ForEach{ $u\in S$}
			{
				$\overline{\phi}_{h}(u)\leftarrow min\{ \overline{\phi}_{h}(u), max_{v\in S} r(v)\}$;\\
				$\underline{\phi}_{h}(u)\leftarrow max\{ \underline{\phi}_{h}(u), min_{v\in S} r(v)\}$;
			}
		}
		\textbf{return} $\mathcal{S}$, $\overline{\phi}_{h}$, $\underline{\phi}_{h}$;
	}
\end{algorithm}

\subsubsection{Generate Approximate $h$-clique Compact Number}
Inspired by a classical convex programming \cite{Danisch2017LargeSD,ma2022finding}, we propose a convex programming for finding the diminishingly-$h$-clique-dense decomposition, and prove that the optimal solution of our convex programming is exactly the $h$-clique compact number of a graph $G$. 

Intuitively, the aim of CP($G,h$) is that each $h$-clique $\psi_{h}\in \Psi_{h}(G)$ tries to distribute its unit weight among its $h$ vertices such that the sum of the weight received by the vertices are as even as possible. We use $\alpha_{u,\psi_{h}}$ to represent the weight assigned to $u$ from $h$-clique $\psi_{h}$ and $r(u)$ to denote the sum of the weights assigned to $u$ from $h$-cliques that contain $u$. This intuition suggests that we can consider the objective function: $Q_{G, h}(\alpha):=\sum_{u \in V} r(u)^2$, in which $r(u)=\sum_{\psi_{h} \in \Psi_h(G): u \in \psi_{h}} \alpha_{u,\psi_{h}}$, for all $u \in V$. The convex programming is:
$$\mathrm{CP}(G, h):=\min \left\{Q_{G, h}(\alpha): \alpha \in \mathcal{D}(G, h)\right\},$$ 
where the domain is:
$$\mathcal{D}(G, h):=\left\{\alpha\in\prod_{\psi_{h} \in \Psi_h(G)} \mathbb{R}_{+}^{\psi_{h}}: \forall \psi_{h} \in \Psi_h(G), \sum_{u \in \psi_{h}} \alpha_{u,\psi_{h}}=1\right\}.$$

Here, we demonstrate that the $h$-clique compact numbers can be derived from the optimal solution of CP($G,h$).

\begin{theorem}
	Let $(\alpha^{*},r^{*})$ be an optimal solution to CP($G,h$). Then,  $\forall u\in V$, $\phi_{h}(u) = r^{*}(u)$, i.e., each $r^{*}(u)$ in $r^{*}$ is exactly the $h$-clique compact number of $u$.
	\label{theo:opsolution}
\end{theorem}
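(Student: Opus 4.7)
The plan is to use the KKT optimality conditions of CP($G,h$) as the single tool for both directions $\phi_{h}(u)\ge r^{*}(u)$ and $\phi_{h}(u)\le r^{*}(u)$. Introducing Lagrange multipliers $\lambda_{\psi_{h}}$ for each equality constraint $\sum_{u\in \psi_{h}}\alpha_{u,\psi_{h}}=1$ and $\mu_{u,\psi_{h}}\ge 0$ for nonnegativity, stationarity gives $2r(u)=\lambda_{\psi_{h}}+\mu_{u,\psi_{h}}$, and complementary slackness forces the ``water-filling'' rule: $\alpha^{*}_{u,\psi_{h}}>0$ implies $r^{*}(u)=\min_{v\in \psi_{h}} r^{*}(v)$. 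Since $Q_{G,h}$ depends on $\alpha$ only through $r$ and is strictly convex in $r$, the vector $r^{*}$ is unique, so this rule holds for every optimal $\alpha^{*}$.

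For the lower bound I fix $\rho=r^{*}(u)$ and $B=\{v:r^{*}(v)\ge \rho\}$. Two counting identities drop out of the water-filling rule: (i) any clique that has some vertex outside $B$ has its $r^{*}$-minimum outside $B$ and therefore puts zero weight on $B$, whereas cliques entirely inside $B$ deposit their full unit of weight inside $B$; summing gives $\sum_{v\in B} r^{*}(v)=|\Psi_{h}(G[B])|$. (ii) For any $T\subseteq B$, cliques contained in $T$ still deposit their full unit, so $\sum_{v\in T} r^{*}(v)\ge |\Psi_{h}(G[T])|$. Taking an arbitrary $S\subseteq B$ with $T=B\setminus S$ and subtracting yields $|\Psi_{h}(G[B])|-|\Psi_{h}(G[B\setminus S])|\ge \sum_{v\in S} r^{*}(v)\ge \rho |S|$, which is exactly the $h$-clique $\rho$-compactness inequality. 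The connectedness demanded by Definition \ref{def:hcc} is recovered by restricting to the connected component of $G[B]$ that contains $u$, since every $h$-clique of $G[B]$ lies in a single component so the inequality restricts cleanly. Hence $\phi_{h}(u)\ge \rho=r^{*}(u)$.

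For the upper bound I argue by contradiction. Suppose some $h$-clique $\rho'$-compact $G[S]$ satisfies $u\in S$ with $\rho' > r^{*}(u)$, and let $W=\{v\in S: r^{*}(v)<\rho'\}$, which is nonempty since $u\in W$. Every clique $\psi_{h}\subseteq S$ that meets $W$ has $r^{*}$-minimum strictly below $\rho'$, so by the water-filling rule all of its unit weight sits on $\psi_{h}\cap W$; summing over such cliques yields $\sum_{v\in W} r^{*}(v)\ge |\{\psi_{h}\subseteq S:\psi_{h}\cap W\ne \emptyset\}|$. Applying $\rho'$-compactness of $G[S]$ to the removed set $W$ forces the right-hand side to be at least $\rho'|W|$, so $\sum_{v\in W} r^{*}(v)\ge \rho'|W|$, which contradicts $r^{*}(v)<\rho'$ on $W$. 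Combining the two directions yields $\phi_{h}(u)=r^{*}(u)$. The hard part is the KKT step rather than either inequality in isolation: I would carefully verify that the linear constraints satisfy Slater's condition so strong duality and complementary slackness apply, argue uniqueness of $r^{*}$ so the water-filling rule is unambiguous across all optimal $\alpha^{*}$, and track carefully that this rule really upgrades to the \emph{equality} $\sum_{v\in B} r^{*}(v)=|\Psi_{h}(G[B])|$ (not only an inequality), since that equality is what lets the same argument close both bounds tightly.
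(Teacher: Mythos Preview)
Your proof is correct and follows essentially the same approach as the paper: both establish the water-filling rule (you via KKT/complementary slackness, the paper via a direct exchange argument showing that $\alpha^{*}_{v,\psi_{h}}>0$ with $r^{*}(v)>\min_{w\in\psi_{h}}r^{*}(w)$ would let one decrease the objective), then use it on the level set $B=\{v:r^{*}(v)\ge r^{*}(u)\}$ to certify $r^{*}(u)$-compactness and argue the upper bound by contradiction. Your counting identities and your upper-bound argument are in fact more explicit than the paper's, which leaves the second direction as ``Analogously, we can prove\ldots''.
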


\begin{proof}
	For any vertex $u\in V$, let $S^{+}=\{ v\in V | r^{*}(v) > r^{*}(u)\}$, $S^{=}=\{ v\in V | r^{*}(v) = r^{*}(u)\}$, $S^{-}=\{ v\in V | r^{*}(v) < r^{*}(u)\}$, $u\in S^{=}$. $S^{+=}$ denotes the vertices that are contained by $h$-cliques that including vertices both in $S^{+}$ and $S^{=}$.
	We prove $G[S^{+}\cup S^{=}]$ is an $h$-clique $r^{*}(u)$-compact subgraph. 
	First, removing $S^{=}$ from $G[S^{+}\cup S^{=}]$ will result in the removal of $r^{*}(u)\times |S^{=}|$ cliques in $G[S^{+}\cup S^{=}]$. We know that for all $(v,w)\in E\cap (S^{+}\times S^{=})$, $r^{*}(v)>r^{*}(w)$ and $\alpha_{v,\psi_{h}(v,w\in \psi_{h})} = 0$. Otherwise, if there exists $ (v,w) \in E\cap (S^{+}\times S^{=})$ such that $\alpha_{v,\psi_{h}(v,w\in \psi_{h})}>0$, there exists $r^{*}(v)-r^{*}(w)>\epsilon>0$. We can reduce $\alpha_{v,\psi_{h}(v,w\in \psi_{h})}$ by $\epsilon$ and increase $\alpha_{w,\psi_{h}(v,w\in \psi_{h})}$ by $\epsilon$, and the objective function be decreased by $2\epsilon(r^{*}(v)-r^{*}(w)-\epsilon)$, which contradicts the optimality of $r^{*}$. Similarly, we can prove that for all $(v,w)\in E\cap (S^{=}\times S^{-})$, $r^{*}(v)>r^{*}(w)$ and $\alpha_{v,\psi_{h}(v,w\in \psi_{h})} = 0$. Therefore, $r^{*}(u)\times |S^{=}| = \sum_{\psi_{h}\in \Psi_{h}(G): v\in S^{=}, v\in \psi_{h}} \alpha_{v,\psi_{h}} = |\Psi_{h}(G[S^{=}])\cup \Psi_{h}(G[S^{+=}])|$. $r^{*}(u)\times |S^{=}|$ is exactly the number of $h$-cliques to be removed when removing $S^{=}$ from $G[S^{+}\cup S^{=}]$. 
	Meanwhile, for any $S'\subseteq S^{+}\cup S^{=}$,we have that $r^{*}(u)\times |S'| \leq \sum_{\psi_{h}\in \Psi_{h}(G): v\in S', v\in \psi_{h}} \alpha_{v,\psi_{h}} \leq \sum_{\psi_{h}\in \Psi_{h}(G[S^{+}\cup S^{=}]): v\in S', v\in \psi_{h}} 1$, which means removing any $S'\subseteq S^{+}\cup S^{=}$ from $G[S^{+}\cup S^{=}]$ will result in the removal of at least $r^{*}(u)\times |S'|$ $h$-cliques. Therefore, $G[S^{+}\cup S^{=}]$ is an $h$-clique $r^{*}(u)$-compact subgraph.
	Analogously, we can prove that for any other subset $S''$ containing $u$, $G[S'']$ is an $h$-clique $\rho$-compact subgraph, where $\rho\leq r^{*}(u)$, by contradiction. Therefore, $r^{*}(u)$ is the largest $\rho$ such that $u$ is contained in an $h$-clique $\rho$-compact subgraph of $G$, which is exactly the $h$-clique compact number of $u$.
\end{proof}
\begin{figure}[h]
	\centering
	\includegraphics[width=0.91\linewidth]{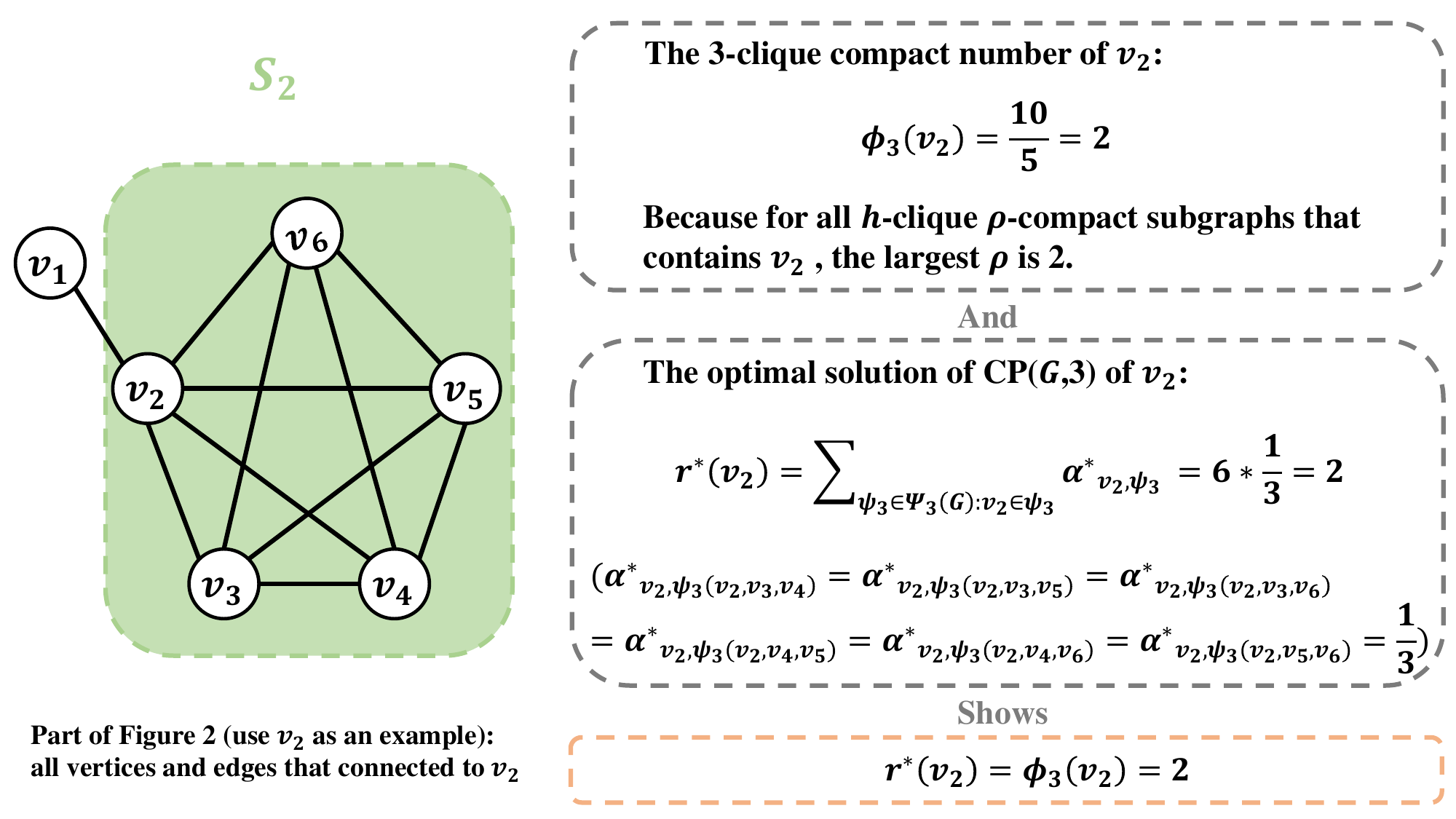}
	\caption{An example of the relationship between $r^{*}(u)$ and $\phi_{h}(u)$ of a vertex $u$ in $G$}
	\label{fig:partexample}
\end{figure}
Consider the convex programming CP($G,3$) for $G$ in Figure \ref{fig:example}, we use $v_{2}$ as an example, shown in Figure \ref{fig:partexample}. The $3$-clique compact number of $v_{2}$ is 2,  
and the optimal solution $r^{*}(v_{2})$ value is also 2. It is clear that for each $u\in V$, $r^{*}(u)$ is exactly $\phi_{h}(u)$.
\begin{theorem}
	The locally $h$-clique densest subgraph problem can be solved in polynomial time for any given positive integer $h$.
	\label{theo:polynomialtime}
\end{theorem}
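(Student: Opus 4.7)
The plan is to use Theorem \ref{theo:opsolution} as the main hammer: if we can solve CP($G,h$) in polynomial time, we obtain $\phi_h(u)$ for every $u\in V$ exactly, and then Theorem \ref{theo:uin} lets us read off L$h$CDS candidates by grouping vertices with identical compact numbers. Top-$k$ selection is then a matter of ranking and (polynomial-time) verification.

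First I would bound the size of CP($G,h$). For fixed $h$, the number of $h$-cliques of $G$ is at most $\binom{n}{h}=O(n^{h})$, so the variables $\{\alpha_{u,\psi_h}\}$ number $O(h\cdot n^h)$ and the constraints (one simplex per clique, plus non-negativity) are also polynomial in $n$. The objective $Q_{G,h}(\alpha)=\sum_{u\in V}r(u)^2$ is a convex quadratic in $\alpha$ over the polytope $\mathcal{D}(G,h)$, so CP($G,h$) is a polynomial-size convex quadratic program and can be solved to the required accuracy in polynomial time by standard techniques (ellipsoid or interior-point). Enumerating the $h$-cliques themselves requires $O(n^h)$ time, which is polynomial for any fixed $h$.

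Next, given the optimal $(\alpha^{*},r^{*})$, Theorem \ref{theo:opsolution} yields $\phi_h(u)=r^{*}(u)$ for all $u\in V$. Sort the vertices in descending order of $\phi_h$ and partition them into classes sharing a common value; by Theorem \ref{theo:uin}, every L$h$CDS is contained in a single such class, so each class (restricted to its connected components) is a candidate. For each candidate $G[S]$ we must check the two conditions of Definition \ref{def:lhcds}: that $G[S]$ is $h$-clique $d_{\psi_h}(G[S])$-compact, and that no proper supergraph is. Both conditions reduce to checking that a certain min-cut/max-flow on a flow network whose size is polynomial in $n$ (with $O(n^h)$ clique-nodes and $O(n)$ vertex-nodes) separates the source and sink in the expected way; this is exactly the verification procedure developed later in the paper and runs in polynomial time per candidate. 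Since Proposition \ref{prop:disjoint} bounds the number of L$h$CDSes by $|V|$, there are at most $O(n)$ candidates, and the total cost remains polynomial.

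The main obstacle, and the reason the rest of the paper is needed, is twofold. On the algorithmic side, naive use of a generic convex solver for CP($G,h$) and of a single-candidate max-flow verification is correct but prohibitively slow in practice; the \texttt{IPPV} pipeline refines both. On the combinatorial side, one must be careful that the partition induced by the optimal $r^{*}$ values is compatible with the clique structure (an $h$-clique may straddle two compact-number classes), which is exactly what the tentative graph decomposition and the subsequent verification are designed to handle. For the present theorem, however, it suffices to observe that all ingredients above are polynomial in $n$ for fixed $h$, which establishes the claim.
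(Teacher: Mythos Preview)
Your proposal is correct and follows essentially the same approach as the paper: invoke Theorem~\ref{theo:opsolution} to reduce the problem to solving the convex quadratic program CP($G,h$), which has polynomially many variables and constraints for fixed $h$ and is therefore polynomial-time solvable. The paper's own proof is considerably terser---it simply asserts that ``the L$h$CDS problem is equivalent to CP($G,h$)'' and cites standard convex-QP solvers---whereas you spell out the intermediate steps (bounding $|\Psi_h(G)|$ by $O(n^h)$, extracting compact-number classes, and polynomial-time verification of each of the at most $|V|$ candidates via Proposition~\ref{prop:disjoint}), which the paper leaves implicit.
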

\begin{proof}
	According to Theorem \ref{theo:opsolution}, the L$h$CDS problem is equivalent to CP($G,h$). CP($G,h$) is a convex quadratic programming problem, which has polynomial time solutions \cite{Goldfarb1991AnOP,Boyd2010ConvexO}. Therefore, the L$h$CDS problem is polynomial-time solvable. 
\end{proof}

However, CP($G,h$) is a problem with $|\Psi_{h}|$ equality constraints and $n\cdot|\Psi_{h}|$ variables. The time complexity of the classical polynomial-time solution of CP($G,h$) is $O((n\cdot|\Psi_{h}|)^{3}\cdot L)$ \cite{Goldfarb1991AnOP}, which requires long running time. Therefore, exactly attaining the $(\alpha^{*},r^{*})$ is difficult, so we use the approximate solution $(\alpha,r)$ of CP($G,h$) to tighten the $h$-clique compact bounds. Frank-Wolfe-based (FW-based) algorithm is efficient for finding approximate solutions to CP($G$) \cite{ma2022finding}. However, FW-based algorithm for $h$-clique densest requires a large amount of memory. \texttt{SEQ-kClist++}\cite{sun2020kclist++} is better for approximately calculating $\alpha_{u,\psi_{h}}$ for each $h$-clique $\psi_{h}$, $u\in \psi_{h}$, as well as $r(u)$ for each vertex $u$. All $\alpha_{u,\psi_{h}}$ are initialized to $\frac{1}{h}$ (Line 6). $r(u)$ stores the sum over all $\alpha_{u,\psi_{h}}$’s such that $\psi_{h}$ contains $u$ (Line 7). At each iteration, $\alpha$ and $r$ are modified simultaneously as follows. For each $h$-clique $\psi_{h}$, we find the minimum $r(v_{min})$ among $\psi_{h}$, and new values for the $\alpha_{v_{min},\psi_{h}}$ and the $r(v_{min})$ are computed as convex combinations (Lines 8-12).

\subsubsection{Tentative Graph Decomposition}
After getting approximate $(\alpha,r)$, we can derive a graph decomposition from the given ($\alpha,r$).
\begin{prop}
	Given an  L$h$CDS $G[S]$ in $G$, $\forall (u,v)\in E$, if $u\in S$ and $v\in V\backslash S$, we have $\phi_{h}(u)>\phi_{h}(v)$.
	\label{prop:vnotin}
\end{prop}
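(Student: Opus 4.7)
The plan is to argue by contradiction: assume $\phi_{h}(v) \geq \phi_{h}(u)$, and build from the edge $(u,v)$ a strictly larger $h$-clique $d_{\psi_{h}}(G[S])$-compact subgraph than $G[S]$, contradicting the maximality condition in Definition~\ref{def:lhcds}.

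First I would invoke Theorem~\ref{theo:uin} to rewrite $\phi_{h}(u) = d_{\psi_{h}}(G[S])$, so the goal becomes $\phi_{h}(v) < d_{\psi_{h}}(G[S])$. Negating this, Definition~\ref{def:hccn} supplies an $h$-clique $\phi_{h}(v)$-compact subgraph $G[T]$ with $v \in T$; since any $\rho$-compact graph is $\rho'$-compact for every $0 \leq \rho' \leq \rho$, $G[T]$ is in particular $h$-clique $d_{\psi_{h}}(G[S])$-compact, and $G[T]$ is connected by Definition~\ref{def:hcc}.

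Next I would show that $G[S \cup T]$ is both connected and $h$-clique $d_{\psi_{h}}(G[S])$-compact. Connectivity is immediate because the edge $(u,v)$ joins $S$ and $T$, while each of $G[S]$ and $G[T]$ is connected. For compactness, fix any $U \subseteq S \cup T$ and bound from below the number of $h$-cliques removed from $G[S \cup T]$ by deleting $U$. In the case $S \cap T = \emptyset$, every $h$-clique contained entirely in $G[S]$ is vertex-disjoint from every $h$-clique contained entirely in $G[T]$; the compactness of $G[S]$ destroys at least $d_{\psi_{h}}(G[S]) \cdot |U \cap S|$ cliques from the first family, the compactness of $G[T]$ destroys at least $d_{\psi_{h}}(G[S]) \cdot |U \cap T|$ from the second, and any further $h$-clique straddling the $S$--$T$ boundary only adds to the count, yielding at least $d_{\psi_{h}}(G[S]) \cdot |U|$. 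The case $S \cap T \neq \emptyset$ is reduced to the same union-of-$\rho$-compact-subgraphs fact used inside the proof of Proposition~\ref{prop:disjoint}.

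Finally, since $v \in T \setminus S$, we have $S \subsetneq S \cup T$, which contradicts the maximality condition in Definition~\ref{def:lhcds} for $G[S]$, and hence $\phi_{h}(v) < \phi_{h}(u)$. I expect the main obstacle to be the compactness verification of $G[S \cup T]$: one must carefully bound below the number of destroyed $h$-cliques without double-counting cliques that straddle the $S$--$T$ boundary, and gracefully reduce the overlap case $S \cap T \neq \emptyset$ to the disjoint one by invoking the union-closure fact implicit in the proof of Proposition~\ref{prop:disjoint}.
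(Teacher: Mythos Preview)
The paper states Proposition~\ref{prop:vnotin} without proof, so there is no argument to compare against; your proposal is the natural one and it is correct. The only point worth tightening is the ``union of $\rho$-compact subgraphs is $\rho$-compact'' step, which the paper also silently relies on in the proof of Proposition~\ref{prop:disjoint}: your disjoint-case argument already generalises verbatim to the overlapping case if you partition $U$ asymmetrically as $U_1=U\cap S$ and $U_2=U\setminus S\subseteq T$. Removing $U_2$ destroys at least $\rho|U_2|$ $h$-cliques lying entirely in $G[T]$ (by compactness of $G[T]$), and removing $U_1$ afterwards destroys at least $\rho|U_1|$ $h$-cliques lying entirely in $G[S]$; these two families cannot overlap because an $h$-clique contained in $G[S]$ has no vertex in $U_2\subseteq V\setminus S$, so the count is at least $\rho|U|$ with no double-counting. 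With that lemma in hand, your contradiction via $G[S\cup T]\supsetneq G[S]$ goes through exactly as you outlined.
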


Considering vertices adjacent to $G[S_{1}]$ but not in $G[S_{1}]$ in Figure \ref{fig:example}, such as $v_{11}$ and $v_{18}$, their $3$-clique compact numbers fulfill Proposition \ref{prop:vnotin}: $\phi_{3}(v_{11}) = \frac{1}{2} < \frac{13}{6}$, $\phi_{3}(v_{18}) = 1 < \frac{13}{6}$. Proposition \ref{prop:vnotin} is helpful for choosing L$h$CDSes from all subgraphs. 

We then propose \texttt{TentativeGD} to generate tentative graph decomposition for proposing L$h$CDS. The vertices in $V$ are sorted based on $r$ values descendingly (Line 15). The initial partition $\hat{\mathcal{S}}$ of the graph is extracted based on the descending order (Lines 16-17). For each $\psi_{h}\in \Psi_{h}(G)$, if the clique $\psi_{h}$ is contained in multiple vertex sets, the vertex set with the largest set index will be recorded as $p$, and the $\alpha$ value of $\psi_{h}$ will be redistributed to vertices in $\hat{S}_{p}$ (Lines 18-22). In other words, for the convenience of partition, the $\alpha$ value of $\psi_{h}$ straddling multiple vertex sets is redistributed to a vertex set with the lowest $r$ value. Finally, the $r$ values of all vertices in $V$ are recalculated (Line 23).

\subsubsection{Stable $h$-clique Group Derivation}
After getting the initial bounds of $h$-clique compact numbers in \texttt{InitializeBd} and a preliminary partition of the graph in \texttt{TentativeGD}, we consider obtaining tighter bounds of $h$-clique compact numbers and a further partition of the graph, to calculate L$h$CDS candidates.
Inspired by two concepts, stable subset \cite{Danisch2017LargeSD} and stable group \cite{ma2022finding}, for solving the $h$-clique densest subgraph problem, we propose the definition of the stable $h$-clique group. 

\begin{definition}[stable $h$-clique group]
	Given a feasible solution $(\alpha,r)$ to CP($G,h$), a stable $h$-clique group with respect to $(\alpha,r)$ is a non-empty vertex group $S\in V$ satisfying the following conditions:
	\begin{enumerate}[leftmargin = \leftmargin - \widthof{non}]
		\item For any $v\in V\backslash S$, $r(v)>max_{u\in S}r(u)$ or $r(v)<min_{u\in S}r(u)$;
		\item For any $v\in V$, if $r(v)>max_{u\in S}r(u)$, $\forall \psi_{h} (u,v \in \psi_{h}), \alpha_{v,\psi_{h}}= 0$;
		\item For any $v\in V$, if $r(v)<min_{u\in S}r(u)$, $\forall \psi_{h} (u,v \in \psi_{h}), \alpha_{u,\psi_{h}}= 0$.
	\end{enumerate}
	\label{def:hcsg}
\end{definition}
\begin{figure}[h]
	\centering
	\includegraphics[width=0.8\linewidth]{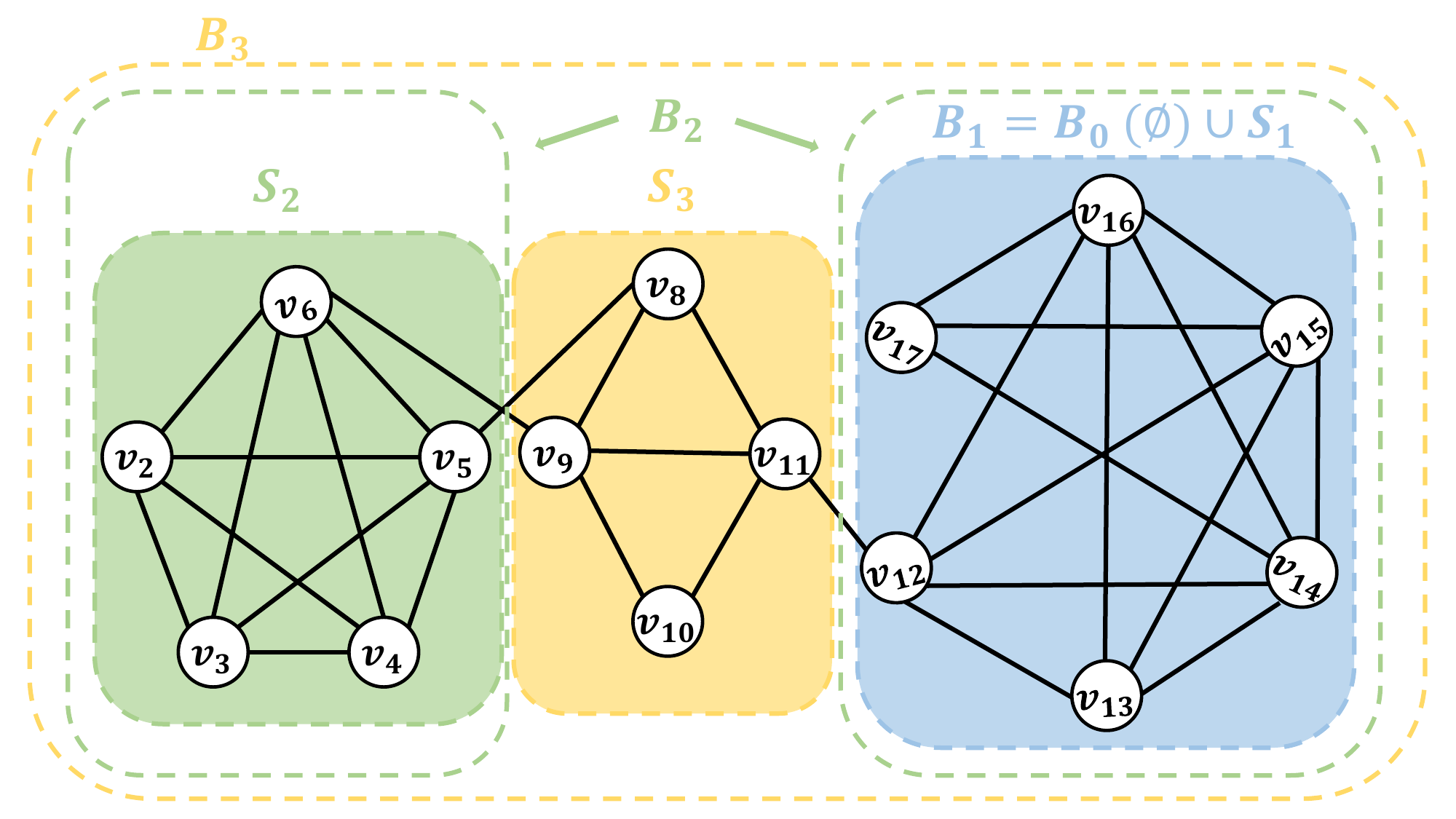}
	\caption{The relationship between stable $3$-clique subset $\mathcal{B}$ and stable $3$-clique group $\mathcal{S}$}
	\label{fig:relation}
\end{figure}
The concept of stable $h$-clique subset $\mathcal{B}$ is related to stable $h$-clique group $\mathcal{S}$, and the relationship between stable $h$-clique subset and stable $h$-clique group can be shown in Figure \ref{fig:relation} with $h=3$. All stable $h$-clique groups are disjoint, and a stable $h$-clique subset is the union of the previous stable $h$-clique subset and the first stable $h$-clique group outside this previous stable $h$-clique subset. Either $\mathcal{B}$ or $\mathcal{S}$ can form a consecutive subsequence of the whole sequence, and we only use the stable $h$-clique group in our algorithm.

\begin{theorem}
	Given a feasible solution $(\alpha,r)$ to CP($G,h$) and a stable $h$-clique group $S$ with respect to $(\alpha,r)$, for all $v \in S$, we have that $min_{u\in S}r(u)\leq \phi_{h}(v)\leq max_{u\in S}r(u)$.
	\label{theo:bound}
\end{theorem}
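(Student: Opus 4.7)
The plan is to prove the two bounds separately, mirroring the structure of the proof of Theorem~\ref{theo:opsolution} but working with a general feasible $(\alpha,r)$ rather than an optimum. Set $r_{\min}:=\min_{u\in S}r(u)$ and $r_{\max}:=\max_{u\in S}r(u)$. Condition~(1) of Definition~\ref{def:hcsg} partitions $V$ into three strata $S^{+}:=\{w:r(w)>r_{\max}\}$, $S$, and $S^{-}:=\{w:r(w)<r_{\min}\}$. Throughout I will use the identity $r(u)=\sum_{\psi_{h}\ni u}\alpha_{u,\psi_{h}}$ together with the feasibility constraint $\sum_{u\in\psi_{h}}\alpha_{u,\psi_{h}}=1$, which let us convert vertex-weight inequalities into clique-count inequalities and hence into compactness statements.

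For the lower bound $\phi_{h}(v)\geq r_{\min}$, the natural witness is $G[T]$ with $T:=S^{+}\cup S$, which contains $v$. I would prove $G[T]$ is $h$-clique $r_{\min}$-compact by showing that for every nonempty $X\subseteq T$, the number of $h$-cliques in $G[T]$ meeting $X$ is at least $r_{\min}\cdot|X|$. On the one hand, $\sum_{u\in X}r(u)\geq r_{\min}|X|$ because every $u\in T$ has $r(u)\geq r_{\min}$. On the other hand, conditions~(2) and (3) of Definition~\ref{def:hcsg} are used to redirect the $\alpha$-mass at $X$: condition~(3) forces $\alpha_{u,\psi_{h}}=0$ for $u\in X\cap S$ whenever $\psi_{h}$ touches $S^{-}$, and condition~(2) forces $\alpha_{u,\psi_{h}}=0$ for $u\in X\cap S^{+}$ whenever $\psi_{h}$ touches $S$; combining these restrictions with $\sum_{u\in\psi_{h}}\alpha_{u,\psi_{h}}=1$ bounds $\sum_{u\in X}r(u)$ above by the number of cliques $\psi_{h}\subseteq T$ that intersect $X$, giving the desired compactness.

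For the upper bound $\phi_{h}(v)\leq r_{\max}$, I would argue by contradiction. Assume some $h$-clique $\rho$-compact subgraph $G[W]$ with $v\in W$ satisfies $\rho>r_{\max}$. Summing $\sum_{u\in\psi_{h}}\alpha_{u,\psi_{h}}=1$ over $\psi_{h}\subseteq W$ and using Proposition~\ref{prop:densandcompa} yields $\sum_{u\in W}r(u)\geq|\Psi_{h}(G[W])|\geq\rho|W|>r_{\max}|W|$, so $W\cap S^{+}\neq\emptyset$. Applying $\rho$-compactness of $G[W]$ to the subset $W\setminus S^{+}$ (which still contains $v$) gives a lower bound of $\rho\cdot|W\setminus S^{+}|$ on the cliques removed. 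I then re-use the stable-group conditions to argue that cliques removed at this step carry $\alpha$-mass bounded by $r_{\max}\cdot|W\setminus S^{+}|$, contradicting $\rho>r_{\max}$.

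The main obstacle is the bookkeeping in the lower-bound step, specifically the $h$-cliques that contain a vertex of $S^{+}$ together with a vertex of $S^{-}$ but miss $S$: neither condition~(2) nor (3) of Definition~\ref{def:hcsg} directly kills their $\alpha$-contribution. I expect to handle them by a careful re-attribution that exploits $\sum_{u\in\psi_{h}}\alpha_{u,\psi_{h}}=1$ together with the connectivity requirement of Definition~\ref{def:hcc}, or by replacing the witness $G[T]$ with an inductive layering built from successively processed stable $h$-clique groups so that such cross-layer cliques are already accounted for when the present group is analysed. Making this accounting airtight is the delicate part of the full proof; once it is settled, the symmetric argument for the upper bound follows by essentially the same weight-balance mechanism applied to $W\setminus S^{+}$.
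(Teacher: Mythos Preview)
Your approach differs from the paper's, and the obstacle you flag is a genuine gap. The paper does not construct an explicit $r_{\min}$-compact witness at all; instead it invokes Theorem~\ref{theo:opsolution} to identify $\phi_{h}$ with the optimal vector $r^{*}$ and then compares the given feasible $r$ to $r^{*}$. Assuming $r^{*}(v)<r_{\min}$ for some $v\in S$, total-mass conservation $\sum_{u}r(u)=\sum_{u}r^{*}(u)$ produces a vertex $w$ with $r^{*}(w)>r(w)$; the stable-group conditions then locate $w$ so that an $\epsilon$-shift of $\alpha^{*}$-mass from $w$ toward $v$ strictly decreases $Q_{G,h}(\alpha^{*})$, contradicting optimality of $r^{*}$. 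The upper bound is argued symmetrically. No compact subgraph containing $v$ is ever exhibited.

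The reason your direct route stalls is exactly what you isolate: conditions~(2) and~(3) of Definition~\ref{def:hcsg} only constrain $h$-cliques that meet $S$, so an $h$-clique contained in $S^{+}\cup S^{-}$ and missing $S$ may carry arbitrary $\alpha$-mass on its $S^{+}$ vertices. This simultaneously breaks your lower-bound witness (removing $S^{+}$ from $G[S^{+}\cup S]$ need not delete $r_{\min}|S^{+}|$ cliques, since that mass lives outside $G[T]$) and your upper-bound count (deleting $W\setminus S^{+}$ from $G[W]$ removes such cliques without their mass appearing in $\sum_{u\in W\setminus S^{+}}r(u)$). Neither proposed fix closes this: re-attribution via $\sum_{u\in\psi_{h}}\alpha_{u,\psi_{h}}=1$ gives no control over where that unit sits, and the ``inductive layering'' presumes $S^{+}$ is itself a stable $h$-clique subset, which is true in the algorithmic construction of Figure~\ref{fig:relation} but is not implied by Definition~\ref{def:hcsg} for a single stable group in isolation. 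The missing idea is precisely the comparison to the optimum $r^{*}$: optimality supplies global control over mass redistribution without any structural hypothesis on $S^{+}$, and this is what the paper exploits.
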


\begin{proof}
	According to Theorem \ref{theo:opsolution}, for all $u\in V, r^{*}(u) = \phi_{h}(u)$. Suppose there exists a vertex $v\in S$ such that $r^{*}(v) = \phi_{h}(v) < min_{u\in S}r(u) \leq r(v)$. Since $\sum_{u\in V}r(u)=\sum_{u\in V}r^{*}(u)$, correspondingly, there must exist another vertex $w\in V$, $r^{*}(w) = \phi_{h}(w)>r(w)$. The difference between $r(w)$ and $r^{*}(w)$ means that there exists $\psi_{h}$ containing both $v$ and $w$, $\alpha_{v,\psi_{h}}>0$. Since $S$ is a stable $h$-clique group, according to the third condition in Definition \ref{def:hcsg}, $r(w) > min_{u\in S}r(u)$. There exists $\epsilon > 0$, we can increase $r^{*}(v)$ by $\epsilon$ and decrease $r^{*}(w)$ by $\epsilon$ to decrease the value of the objective function. This contradicts that $r^{*}$ is the optimal solution to CP($G,h$). By the same token, for all $u\in V, r^{*}(u) = \phi_{h}(u)\leq max_{u\in S}r(u)$.
\end{proof}

Based on Theorem \ref{theo:bound}, the stable $h$-clique groups can give tighter bounds of $h$-clique compact numbers, so we propose \texttt{DeriveSG} algorithm to derive the stable $h$-clique groups, which are our L$h$CDS candidates. 
In \texttt{DeriveSG}, the subsets in $\hat{\mathcal{S}}$ are checked one by one; if the subset is a stable $h$-clique group, it will be pushed into the set of stable $h$-clique groups $\mathcal{S}$; otherwise, in the next iteration, the current subset $S$ will be merged with the next subset $S'$ (Lines 26–28). Then, the upper and lower bounds of $h$-clique compact numbers are updated based on Theorem \ref{theo:bound} (Lines 29–32).

\subsection{Pruning for Candidate L$h$CDS Derivation}
We prove that the following proposition can help to prune invalid vertices that are certainly not contained by any L$h$CDS.
\begin{prop}
	For any $v\in V$, $v$ is not contained by any L$h$CDS in $G$ if either of the following two conditions is satisfied.
	\begin{enumerate}[leftmargin = \leftmargin - \widthof{any}]
		\item If there exists $(u,v)\in E$, such that $\underline{\phi}_{h}(u)>\overline{\phi}_{h}(v)$, $v$ is invalid;
		\item Let $G'$ denote the graph after pruning all invalid vertices in condition (1). $\overline{\phi}_{h}^{G}(u)$ is the upper bound of $\phi_{h}^{G}(u)$ in $G$. For any $u$ in $G'$, if $\overline{\phi}_{h}^{G'}(v)<\underline{\phi}_{h}(v)$, $v$ is invalid.
	\end{enumerate}
	\label{prop:prunrule}
\end{prop}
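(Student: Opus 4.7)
The plan is to prove each condition separately, in both cases by contradiction: assume $v$ lies in some L$h$CDS $G[S]$ and derive a contradiction from the hypothesis together with Theorem~\ref{theo:uin} and Proposition~\ref{prop:vnotin}.

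For condition~(1), the first step is to convert the inequality on bounds into a strict inequality on the true compact numbers: since $\underline{\phi}_h$ is a lower bound and $\overline{\phi}_h$ an upper bound, $\phi_h(u)\ge\underline{\phi}_h(u)>\overline{\phi}_h(v)\ge\phi_h(v)$, so $\phi_h(u)>\phi_h(v)$. Now case-split on the neighbour $u$ of $v$. If $u\in S$, then Theorem~\ref{theo:uin} forces $\phi_h(u)=d_{\psi_h}(G[S])=\phi_h(v)$, contradicting $\phi_h(u)>\phi_h(v)$. If $u\notin S$, then Proposition~\ref{prop:vnotin}, applied to the edge $(u,v)$ with $v$ inside and $u$ outside the L$h$CDS, yields $\phi_h(v)>\phi_h(u)$, again contradicting the same inequality. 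Hence $v$ cannot belong to any L$h$CDS, which justifies removing it.

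For condition~(2), the key preliminary observation is that part~(1) has already certified every vertex it discards as being outside every L$h$CDS; therefore any L$h$CDS $G[S]$ of $G$ is entirely contained in $G'$, i.e.\ $S\subseteq V(G')$. Suppose towards a contradiction that $v\in S$. Compactness of $G[S]$ in the sense of Definition~\ref{def:hcc} is intrinsic to the induced subgraph $G[S]$ (it only counts $h$-cliques removed from inside $G[S]$), so $G[S]$ is an $h$-clique $d_{\psi_h}(G[S])$-compact subgraph also when regarded inside $G'$. By Definition~\ref{def:hccn} applied in $G'$, this forces
\[
\phi_h^{G'}(v)\;\ge\;d_{\psi_h}(G[S])\;=\;\phi_h^{G}(v)\;\ge\;\underline{\phi}_h(v).
\]
Since $\overline{\phi}_h^{G'}(v)\ge\phi_h^{G'}(v)$ by definition of the upper bound, we obtain $\overline{\phi}_h^{G'}(v)\ge\underline{\phi}_h(v)$, contradicting the hypothesis $\overline{\phi}_h^{G'}(v)<\underline{\phi}_h(v)$.

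The main obstacle is the transfer step in condition~(2): one must be sure that reducing $G$ to $G'$ does not destroy any L$h$CDS, for otherwise the inequality $\phi_h^{G'}(v)\ge \phi_h^G(v)$ inside an L$h$CDS could fail. This safety is provided precisely by condition~(1), whose correctness has just been established and which guarantees $S\subseteq V(G')$; beyond that the proof is just bookkeeping with the definitions of upper/lower bounds and the two earlier structural facts (Theorem~\ref{theo:uin} and Proposition~\ref{prop:vnotin}).
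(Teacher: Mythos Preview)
Your proof is correct and follows essentially the same approach as the paper's. For condition~(1) the paper simply states that $\underline{\phi}_h(u)>\overline{\phi}_h(v)$ implies $\phi_h(u)>\phi_h(v)$ and then invokes Proposition~\ref{prop:vnotin}; your explicit case split on whether $u\in S$ or $u\notin S$ (using Theorem~\ref{theo:uin} in the first case) spells out what that invocation actually entails. For condition~(2) the paper argues informally that any $\underline{\phi}_h(v)$-compact subgraph through $v$ would need already-pruned vertices; your version formalises this as the chain $\overline{\phi}_h^{G'}(v)\ge\phi_h^{G'}(v)\ge d_{\psi_h}(G[S])=\phi_h^G(v)\ge\underline{\phi}_h(v)$, which is the same argument made precise.
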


\begin{proof}
	First, we prove condition (1). For any $u,v\in V$, $(u,v)\in E$, if $\underline{\phi}_{h}(u)>\overline{\phi}_{h}(v)$, then $\phi_{h}(u)>\phi_{h}(v)$. According to Proposition \ref{prop:vnotin}, $v$ is not contained in L$h$CDS, i.e., $v$ is invalid.
	
	For condition (2), $\overline{\phi}_{h}^{G'}(u)<\underline{\phi}_{h}(u)$ means that to form an $h$-clique $\underline{\phi}_{h}(u)$-compact subgraph containing $u$, some already pruned vertices are needed. So using the vertices in $G'$ only cannot form an $h$-clique $\underline{\phi}_{h}(u)$-compact subgraph containing $u$. Therefore, $u$ cannot be contained by any L$h$CDS in $G$, i.e., $v$ is invalid.  
\end{proof}

According to Proposition \ref{prop:prunrule}, we design Pruning Rule to prune invalid vertices by condition (1) and condition (2). 
An example can be seen in Figure \ref{fig:example} with $h=3$. $v_{9}$ and $v_{11}$ can be pruned, because for edge $(v_{6}, v_{9})$, $\underline{\phi}_{3}(v_{6}) = 2 > \overline{\phi}_{3}(v_{9}) = \frac{1}{2}$; for edge $(v_{11}, v_{12})$, $\underline{\phi}_{3}(v_{12}) = \frac{13}{6} > \overline{\phi}_{3}(v_{11}) = \frac{1}{2}$. Analogously, the vertices $v_{1}$, $v_{7}$, $v_{18}$, $v_{19}$, and $v_{20}$ are also pruned by condition (1). 

We denote the graph after pruning by $G'$. Some vertices in $G'$ become invalid vertices, because any L$h$CDS in $G$ containing these vertices needs to include some already pruned vertices, which can not form an L$h$CDS in $G'$. Therefore, we utilize condition (2). Based on Proposition \ref{prop:corebound}, for any vertex $u \in V(G')$, $core_{G}(u,\psi_{h})$ provides an upper bound of $\phi_{h}^{G'}(u)$. For example, after $v_{9}$ and $v_{11}$ are pruned, the upper bounds of $3$-clique compact numbers of $v_{8}$ and $v_{10}$ in graph $G'$ are $\overline{\phi}_{3}^{G'}(v_{8}) = \overline{\phi}_{3}^{G'}(v_{10}) = 0 < \underline{\phi}_{3}(v_{8}) = \underline{\phi}_{3}(v_{10}) = \frac{1}{2}$. So $v_{8}$ and $v_{10}$ are pruned using condition (2). 

We propose \texttt{Prune} algorithm shown in Algorithm \ref{alg:prune}. $G$ is replicated to $G'$ for pruning (Line 1). Condition (1) is used to remove invalid vertices in $G'$ (Lines 2-3); after computing the $h$-clique core numbers for all vertices in $G'$ (Line 4), condition (2) is applied to further remove invalid vertices in $G'$ (Lines 5-7). Finally, the L$h$CDS candidates are updated from the intersection of $h$-clique stable groups and the unpruned vertex sets (Line 8).

\begin{algorithm}\small
	\caption{Pruning algorithm: \texttt{Prune}}
	\label{alg:prune}
	\KwIn {$G=(V,E)$, $\mathcal{S}, \overline{\phi}_{h}, \underline{\phi}_{h}$} 
	\KwOut {$\mathcal{S}$} 
	$G'=(V(G'),E(G'))\leftarrow G$;\\
	\ForEach{$(u,v)\in E$}
	{
		\lIf {$\overline{\phi}_{h}(v) < \underline{\phi}_{h}(u)$}
		{
			remove $v$ from $G'$
		}
	}
	\lForEach{$u\in V(G')$}
	{
		compute $core_{G'}(u,\psi_{h})$
	}
	\While{there exists $u\in V(G'), core_{G'}(u,\psi_{h})<\underline{\phi}_{h}(u)$}
	{
		remove $u$ from $G'$;\\
		update $h$-clique-core numbers of vertices adjacent to $u$;
	}
	\lForEach{L$h$CDS candidate  $S\in \mathcal{S}$}
	{
		$S\leftarrow S\cap V(G')$
	}
	return $\mathcal{S}$;
\end{algorithm}

\subsection{L$h$CDS Verification}
Since candidate L$h$CDSes are obtained approximately, we need to confirm whether the candidates are L$h$CDSes.
\begin{prop} An L$h$CDS must satisfy the following properties: 
	\begin{enumerate}[leftmargin = \leftmargin-\widthof{nei}]
		\item Any subgraph of an L$h$CDS cannot be denser than itself;
		\item An L$h$CDS itself is compact, and any supergraph of an L$h$CDS cannot be more compact than itself.
	\end{enumerate}
	\label{prop:theproperty}
\end{prop}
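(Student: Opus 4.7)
The plan is to establish both properties by short contradiction arguments that invoke only Definition \ref{def:lhcds} and the monotonicity observation already stated before Proposition \ref{prop:densandcompa} (namely, that if $G$ is $h$-clique $\hat\rho$-compact and $\hat\rho > \rho$, then $G$ is also $h$-clique $\rho$-compact). Let $G[S]$ denote the L$h$CDS under consideration and write $\rho_0 := d_{\psi_h}(G[S])$ for brevity. By condition~1 of Definition \ref{def:lhcds}, $G[S]$ is $h$-clique $\rho_0$-compact, which already gives the first half of property~2 for free.

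For property~1, I would take an arbitrary proper subgraph $G[S'']$ with $S'' \subsetneq S$ and consider the removed set $T := S \setminus S''$. Applying the $\rho_0$-compactness of $G[S]$ to this $T$ gives
\[
|\Psi_h(G[S])| - |\Psi_h(G[S''])| \;\geq\; \rho_0\,|T| \;=\; \rho_0\,(|S|-|S''|).
\]
Substituting $|\Psi_h(G[S])| = \rho_0\,|S|$ and $|\Psi_h(G[S''])| = d_{\psi_h}(G[S''])\cdot|S''|$ and simplifying collapses the inequality to $d_{\psi_h}(G[S'']) \leq \rho_0$, which is exactly the required non-denseness. This is essentially the same algebraic manipulation already used inside the proof of Theorem \ref{theo:uin}, so I would just reuse that trick rather than redo it from scratch.

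For the second half of property~2, suppose for contradiction that some supergraph $G[S^{*}]$ with $S^{*} \supsetneq S$ has compactness $\rho^{*} > \rho_0$. Then $G[S^{*}]$ is $h$-clique $\rho^{*}$-compact, and by the monotonicity observation above it is also $h$-clique $\rho_0$-compact. But then $G[S^{*}]$ is a strictly larger $h$-clique $\rho_0$-compact supergraph of $G[S]$, directly contradicting condition~2 of Definition \ref{def:lhcds}. Hence no such supergraph exists.

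The only subtlety I anticipate is being precise about the word ``compact'' in property~2: the statement compares the \emph{compactness value} (the supremal $\rho$) of a supergraph against $\rho_0$, not a specific $\rho$-compactness witness. Handling this cleanly requires explicitly citing the monotonicity $\hat\rho>\rho \Rightarrow \hat\rho$-compact implies $\rho$-compact, so that ``more compact'' really does contradict the L$h$CDS maximality clause. Once that bridge is made, both properties reduce to one-line applications of Definition \ref{def:lhcds}, so no heavy machinery or new lemmas are needed.
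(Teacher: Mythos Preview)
Your proposal is correct and matches the paper's proof essentially line for line. The only cosmetic difference is that the paper proves property~1 by contradiction (assume a strictly denser subgraph exists and show the removal of $S\setminus S'$ kills fewer than $\rho_0|U|$ cliques), whereas you run the same inequality directly to conclude $d_{\psi_h}(G[S''])\le\rho_0$; for property~2 the paper simply asserts it follows from Definition~\ref{def:lhcds}, while you helpfully spell out the monotonicity bridge.
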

\begin{proof}
	(2) of Proposition \ref{prop:theproperty} can be directly obtained from the definition of L$h$CDS. We prove (1) by contradiction. Suppose that there is a subgraph $G[S']$ in an L$h$CDS $G[S]$, $S'\subset S$, such that $d_{\psi_{h}}(G[S']) > d_{\psi_{h}}(G[S]) $. By removal of the set $U =S\backslash S'$from $G[S]$, we remove $|\Psi_{h}(G[S])|-|\Psi_{h}(G[S'])|$ $h$-cliques. Note that $|\Psi_{h}(G[S])|-|\Psi_{h}(G[S'])| = d_{\psi_{h}}(G[S])|S|-d_{\psi_{h}}(G[S'])|S'| < \linebreak d_{\psi_{h}}(G[S])(|S|-|S'|) = d_{\psi_{h}}(G[S])|U|$, which contradicts the fact that $G[S]$ is an L$h$CDS, i.e. $h$-clique $d_{\psi_{h}}(G[S])$-compact. 
\end{proof}

We need to verify: \textbf{1)} whether a candidate L$h$CDS $G[S]$ is self-densest and \textbf{2)} whether $G[S]$ is a maximal $h$-clique $d_{\psi_{h}}(G[S])$-compact subgraph in $G$. We use \texttt{IsDensest} \cite{sun2020kclist++} algorithm to check whether a candidate L$h$CDS $G[S]$ is self-densest. In this section, we focus on the verification of the second property, to verify whether $G[S]$ is a connected component of maximal $h$-clique $d_{\psi_{h}}(G[S])$-compact subgraphs in $G$.
We design a basic verification algorithm, and we further propose a fast algorithm by reducing the scale of the flow network. The correctness of both algorithms is proved. 

\subsubsection{Basic Verification Algorithm}
Given an L$h$CDS candidate $G[S]$, we propose an innovative flow network to derive maximal $h$-clique $d_{\psi_{h}}(G[S])$-compact subgraph $G'$ in $G$. If $G[S]$ is a connected component of $G'$, $G[S]$ is indeed maximal $h$-clique $d_{\psi_{h}}(G[S])$-compact subgraph and an L$h$CDS in $G$; otherwise, $G[S]$ is not an L$h$CDS. The flow network $\mathcal{F}(V_{\mathcal{F}}, E_{\mathcal{F}})$ is shown in Figure \ref{fig:flownetwork1}. The vertex set of $\mathcal{F}$ is $\{s\}\cup V\cup\Psi_{h}\cup\{t\}$. The arc set of $\mathcal{F}$ is given as follows. For each $h$-clique $\psi_{h}^{j}$, we add $h$ incoming arcs of capacity $1$ from the vertices which form $\psi_{h}^{j}$, and $h$ outgoing arcs of capacity of $h-1$ to the same set of vertices. For each vertex $v_{i}\in V$, we add an incoming arc of capacity $deg_{G}(v_{i},\psi_{h})$ from the source vertex $s$, and an outgoing arc of capacity $\rho*h$ to the sink vertex $t$. Given a parameter $\rho$, we prove that the flow network in \texttt{DeriveCompact} can be used to derive maximal $h$-clique $\rho$-compact subgraphs in $G$ according to Theorem \ref{theo:derivecompact}. 
\begin{figure}[htbp]
	\centering
	\includegraphics[width=0.9\linewidth]{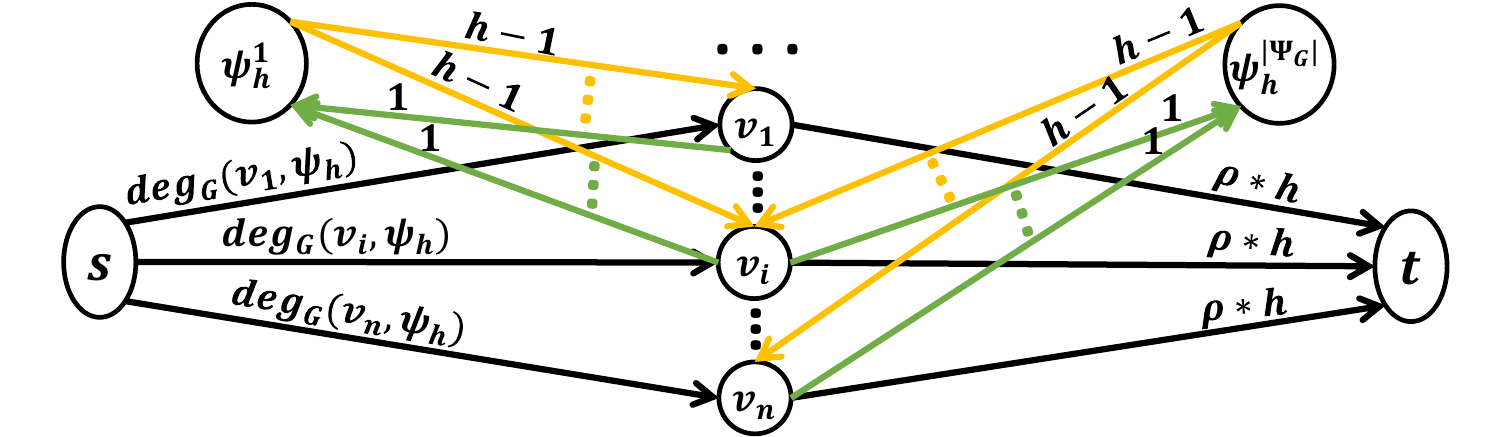}
	\caption{The flow network of \texttt{DeriveCompact}$(G,\rho,\emptyset)$}
	\label{fig:flownetwork1}
\end{figure}

\begin{theorem}
	If $G$ contains maximal $h$-clique $\rho$-compact subgraphs, then the result returned by DeriveCompact ($G,\rho - \frac{1}{|V|^{2}},\emptyset$) is the set of all maximal $h$-clique $\rho$-compact subgraphs in $G$.
	\label{theo:derivecompact} 
\end{theorem}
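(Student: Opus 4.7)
The plan is to exploit the classical correspondence between minimum $s$-$t$ cuts in this vertex-clique flow network and maximizers of a pseudo-density function, and then use a rational-precision (``density gap'') argument to show that the $1/|V|^2$ offset is exactly the resolution needed to recover all maximal $\rho$-compact subgraphs from one min-cut computation.

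\smallskip

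\textbf{Step 1: Reduce the cut capacity to a clean form.} I would start by taking an arbitrary $s$-$t$ cut whose source side is $\{s\}\cup S\cup C$ with $S\subseteq V$ and $C\subseteq \Psi_h(G)$, and enumerate the four kinds of cut arcs: $(s,v)$ for $v\notin S$, $(v,t)$ for $v\in S$, $(v,\psi)$ for $v\in S$ and $\psi\notin C$ with $v\in\psi$, and $(\psi,v)$ for $\psi\in C$ and $v\in\psi\setminus S$. For fixed $S$, a clique $\psi$ with $k=|\psi\cap S|$ contributes $k$ if placed on the sink side and $(h-1)(h-k)$ if placed on the source side; comparing, the choice minimizing the capacity is $\psi\in C$ iff $\psi\subseteq S$. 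Substituting this back and using the identity $\sum_v \deg_G(v,\psi_h)=h|\Psi_h(G)|$ together with $\sum_{v\in S}\deg_G(v,\psi_h)=h|\Psi_h(G[S])|+\sum_{\psi\not\subseteq S}|\psi\cap S|$, the partial sum over partially-inside cliques cancels and the total capacity collapses to
\[
\mathrm{cap}(S)=h|\Psi_h(G)|+h\bigl(\rho'|S|-|\Psi_h(G[S])|\bigr),
\]
where $\rho'=\rho-1/|V|^2$ is the parameter passed to DeriveCompact. Hence every min-cut source side is exactly an $S$ maximizing $f_{\rho'}(S):=|\Psi_h(G[S])|-\rho'|S|$.

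\smallskip

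\textbf{Step 2: Density-gap lemma.} Every $h$-clique density is a rational number $|\Psi_h(G[S])|/|S|$ with denominator at most $|V|$, so any two distinct such fractions differ by at least $1/|V|^2$. Consequently, for any $S$ with $|\Psi_h(G[S])|/|S|\geq\rho$ we get $f_{\rho'}(S)\geq |S|/|V|^2>0$, while for any $S$ with density strictly less than $\rho$ we get $f_{\rho'}(S)\leq 0$. Thus the offset $1/|V|^2$ is fine enough to strictly separate the ``dense enough'' subsets (density $\geq\rho$) from the rest, yet coarse enough that no spurious low-density subset is ever favoured by $f_{\rho'}$.

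\smallskip

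\textbf{Step 3: Identify the min-cut source side with the union of maximal $\rho$-compact subgraphs.} Let $S^\ast$ denote the inclusion-wise largest maximizer of $f_{\rho'}$ (standard max-flow gives this as the source side). If $G[S]$ is any maximal $h$-clique $\rho$-compact subgraph, then by Definition~\ref{def:hcc} every $T\subseteq S$ satisfies $|\Psi_h(G[S])|-|\Psi_h(G[T])|\geq \rho(|S|-|T|)$, which rewrites as $f_\rho(T)\leq f_\rho(S)$ and (using $\rho'<\rho$) implies $f_{\rho'}$ is non-decreasing when adjoining $S$ to any $T\subseteq V\setminus S$ that is disjoint from $S$ in the $h$-clique structure; combined with the density-gap of Step 2 this gives $S\subseteq S^\ast$. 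For the reverse containment, if a vertex $v\in S^\ast$ lay in no maximal $\rho$-compact subgraph, removing $v$ from $S^\ast$ could not decrease $f_{\rho'}$ (otherwise $v$ together with its ``support'' in $S^\ast$ would form a $\rho$-compact subgraph, contradicting the assumption, aided by Propositions~\ref{prop:disjoint} and~\ref{prop:densandcompa}). Finally, the connected components of $G[S^\ast]$ are the individual maximal $\rho$-compact subgraphs: each component is connected and inherits $\rho$-compactness from $S^\ast$ via a local exchange on $f_{\rho'}$, and maximality follows since any strict extension would contradict the maximizer property of $S^\ast$.

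\smallskip

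\textbf{Main obstacle.} Step~1 is a routine cut-accounting, and Step~2 follows from elementary number theory on rational denominators. The hard part is Step~3: one has to pin down \emph{which} maximizer of $f_{\rho'}$ the min-cut actually returns and show it equals the union of all maximal $\rho$-compact subgraphs, not a smaller or slightly larger set. This requires combining the tie-breaking behaviour of max-flow, the density-gap discretization, and the compactness/maximality definitions in a way that simultaneously excludes ``extra'' vertices and certifies compactness of each individual component.
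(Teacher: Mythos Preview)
Your Steps~1 and~2 are sound and in fact more explicit than the paper, which simply cites Goldberg and Fang et al.\ for the fact that the min-cut source side is the inclusion-largest maximizer of $f_{\rho'}(S)=|\Psi_h(G[S])|-\rho'|S|$ and never spells out the density-gap role of the $1/|V|^2$ offset. (One caveat: your Step~2 implicitly assumes $\rho$ is itself a realizable $h$-clique density with denominator at most $|V|$; this holds in every call the algorithm makes but is not part of the theorem statement.)

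Step~3 is where your sketch has a real gap. Your argument that a maximal $\rho$-compact $S$ satisfies $S\subseteq S^\ast$ rests on adjoining $S$ to a set $T$ that is ``disjoint from $S$ in the $h$-clique structure''; but you have no reason to assume $S^\ast\setminus S$ is clique-disjoint from $S$, so the additivity you invoke fails. The paper avoids this by working with $S_1$, the union of \emph{all} maximal $\rho$-compact subgraphs, at once: having first shown $S^\ast\subseteq S_1$ (each connected component of $S^\ast$ must be $\rho$-compact, else trimming a witnessing subset strictly raises $f_{\rho'}$, contradicting maximality), it then uses that $S_1$ itself satisfies the removal condition, so $f_{\rho}(S_1)\geq f_{\rho}(S^\ast)$, and since $S^\ast$ is chosen as the \emph{largest} maximizer this forces $S_1\subseteq S^\ast$. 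The ingredient missing from your version is the supermodularity of $|\Psi_h(\cdot)|$: it gives $f_{\rho'}(X\cup S)\geq f_{\rho'}(X)$ for any $\rho$-compact $S$ and \emph{arbitrary} $X$, with no disjointness hypothesis, and then your inclusion follows immediately from $S^\ast$ being the largest maximizer. Your reverse inclusion via individual vertices and their unspecified ``support'' is likewise vaguer than necessary; the paper's component-level argument is both simpler and directly yields that the connected components of $S^\ast$ are precisely the maximal $\rho$-compact subgraphs.
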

\begin{proof}
	Based on Proposition \ref{prop:disjoint}, two L$h$CDSes are disjoint. We use $G[S_{1}]$ to represent the union of all maximal $h$-clique $\rho$-compact subgraphs in $G$. $G[S_{2}]$ denotes the subgraph returned by DeriveCompact ($G,\rho - \frac{1}{|V|^{2}},\emptyset$), which is the largest subgraph in $G$ with maximum $|\Psi_{h}(G[S_{2}])|-\rho\times|S_{2}|$ \cite{Goldberg1984FindingAM}\cite{Fang2019EfficientAF}. We prove that $G[S_{1}]$ and $G[S_{2}]$ are the same. First, we prove that $G[S_{2}]$ is a subgraph of $G[S_{1}]$ by contradiction. Suppose a connected component $G[S]$ of $G[S_{2}]$ is not $h$-clique $\rho$-compact, then there exists a subset $S'\subseteq S$ such that removing $S'$ from $S$ will result in removing less $h$-cliques than $\rho\times |S'|$, then $|\Psi_{h}(G[S])|-|\Psi_{h}(G[S\backslash S'])|<\rho\times |S'|=\rho\times(|S|-|S\backslash S'|)$. We have $|\Psi_{h}(G[S])|-\rho\times|S|<|\Psi_{h}(G[S\backslash S'])|-\rho\times|S\backslash S'|$. Therefore, replacing $G[S]$ by its subgraph $G[S\backslash S']$ in $G[S_{2}]$ will enlarge the value of $|\Psi_{h}(G[S_{2}])|-\rho\times|S_{2}|$, which contradicts the condition that $G[S_{2}]$ has the maximum $|\Psi_{h}(G[S_{2}])|-\rho\times|S_{2}|$. Second, we prove that $G[S_{1}]$ is a subgraph of $G[S_{2}]$ by contradiction. Suppose $G[S_{1}]$ is not a subgraph of $G[S_{2}]$, according to the result before, we have $S_{2}\subset S_{1}$. There exists a subset $S\neq \emptyset$ and $S = S_{1}\backslash S_{2}$. Removing $S$ from $G[S_{1}]$ will result in removing at least $\rho\times |S|$ $h$-cliques, then $|\Psi_{h}(G[S_{1}])|-|\Psi_{h}(G[S_{2}])|\geq \rho\times|S|=\rho\times(|S_{1}|-|S_{2}|)$. We have $|\Psi_{h}(G[S_{1}])|-\rho\times|S_{1}|\geq |\Psi_{h}(G[S_{2}])|-\rho\times|S_{2}|$, so enlarging $G[S_{2}]$ to $G[S_{1}]$ will not decrease the value of $|\Psi_{h}(G[S_{2}])|-\rho\times|S_{2}|$, which contradicts the condition that $G[S_{2}]$ has the maximum $|\Psi_{h}(G[S_{2}])|-\rho\times|S_{2}|$. Therefore, the theorem is proved.
\end{proof}

\begin{algorithm}\small
	\caption{Basic L$h$CDS verification algorithm}
	\label{alg:compact}
	\KwIn {$G(V,E),S$} 
	\KwOut {VerifyL$h$CDS} 
	$\rho \leftarrow d_{\psi_{h}}(G[S])$, VerifyL$h$CDS $\leftarrow$ True;\\
	$G'\leftarrow$ \texttt{DeriveCompact} ($G,\rho-\frac{1}{|V|^{2}},\emptyset$);\\
	\textbf{return} $G[S]$ is a connected component in $G'$;\\
	\SetKwFunction{Fcomp}{DeriveCompact}
	\SetKwProg{Fn}{Procedure}{}{}
	\Fn{\Fcomp{$G,\rho,P$}}{
		$cnt\leftarrow 0$; $\Psi_{h}\leftarrow$  all the instances of $h$-clique $\psi_{h}$ in $G$;\\
		$V_{\mathcal{F}}\leftarrow \{s\}\cup V\cup\Psi_{h}\cup P\cup \{t\}$;\\
		\ForEach{$\psi_{h}\in \Psi_{h}$}
		{
			\ForEach{$v\in \psi_{h}$}
			{
				add an edge $\psi_{h}\rightarrow v$ with capacity $h-1$;\\
				add an edge $v\rightarrow \psi_{h}$ with capacity $1$;\\
			}
		}
		\ForEach{$\psi_{h}\in P$}
		{
			$cnt\leftarrow cnt$ of $\psi_{h}$;\\
			\ForEach{$v\in \psi_{h}$ and $v\in G$}
			{
				add an edge $\psi_{h}\rightarrow v$ with capacity $h-1$;\\
				add an edge $v\rightarrow \psi_{h}$ with capacity $1+\frac{h-cnt}{cnt}$;\\
				$deg_{G}(v,\psi_{h})\leftarrow deg_{G}(v,\psi_{h})+1+\frac{h-cnt}{cnt}$;
			}
		}
		\ForEach{$v\in V$}
		{
			add an edge $v\rightarrow t$ with capacity $\rho*h$;\\
			add an edge $s\rightarrow v$ with capacity $deg_{G}(v,\psi_{h})$;\\
		}
		Compute the minimum $s-t$ cut $(\mathcal{S},\mathcal{T})$ from the flow network $\mathcal{F}(V_{\mathcal{F}}, E_{\mathcal{F}})$;\\
		\textbf{return} $G[\mathcal{S}\setminus s]$;
	}	
\end{algorithm}

In Algorithm \ref{alg:compact}, we first derive all connected components of the $h$-clique $d_{\psi_{h}}(G[S])$-compact subgraph $G'$ in $G$ by \texttt{DeriveCompact} (Line 2). If $G[S]$ is a connected component of $G'$, the algorithm returns True (Line 3). In \texttt{DeriveCompact}, all the instances of $h$-clique is collected (Line 5). To build a flow network $\mathcal{F}(V_{\mathcal{F}}, E_{\mathcal{F}})$, a vertex set $V_{\mathcal{F}}$ is created, and vertices in $V_{\mathcal{F}}$ are linked by directed edges with different capacities (Lines 6-19). Then, the minimum $s-t$ cut $(\mathcal{S},\mathcal{T})$ is computed (Line 20). 

\subsubsection{Fast Verification Algorithm}
Although the basic verification algorithm can successfully verify whether a given subset is L$h$CDS, the scale of the flow network in algorithm \ref{alg:compact} is large, and the running time is long in large-scale graphs. We prove that the verification can be done by verifying only the subgraph $G[S]$ and the vertices around the subgraph $G[S]$, which is denoted by $G[T]$. Since $G[T]$ is much smaller than $G$, checking the minimum cut in $G[T]$ is much more efficient. Considering the complexity of the overlap of cliques, we propose a fast verification algorithm by constructing a smaller flow network based on $G[T]$. Based on the fact that only the $h$-cliques at the boundary of $G[T]$ affect $h$-clique compact numbers in $G[T]$ compared to the $h$-clique compact numbers in $G$, we use a set $P$ to record these $h$-cliques. For each $\psi_{h}^{P_{r}}\in P$, the number of vertices that are contained in both $\psi_{h}^{P_{r}}$ and $G[T]$ is $cnt_{P_{r}}$. The flow network $\mathcal{F}(V_{\mathcal{F}}, E_{\mathcal{F}})$ is shown in Figure \ref{fig:flownetwork2}. The vertex set of $\mathcal{F}$ is $\{s\}\cup V\cup\Psi_{h}\cup P\cup \{t\}$. We add the boundary $h$-clique set $P$ into $V_{\mathcal{F}}$ to ensure that the results of solving the flow network of $G[T]$ are precisely consistent with that of $G$. The arc set of $\mathcal{F}$ is given as follows. The arcs for vertices and $h$-cliques in $G[T]$ are the same as the former flow network. For each $\psi_{h}^{P_{r}}\in P$, we add $cnt_{P_{r}}$ incoming arcs of capacity $1+\frac{h-cnt_{P_{r}}}{cnt_{P_{r}}}$ from the vertices that both in $\psi_{h}^{P_{r}}$ and $G[T]$, and $cnt_{P_{r}}$ outgoing arcs of capacity of $h-1$ to the same set of vertices. 
\begin{figure}[htbp]
	\centering
	\includegraphics[width=0.9\linewidth]{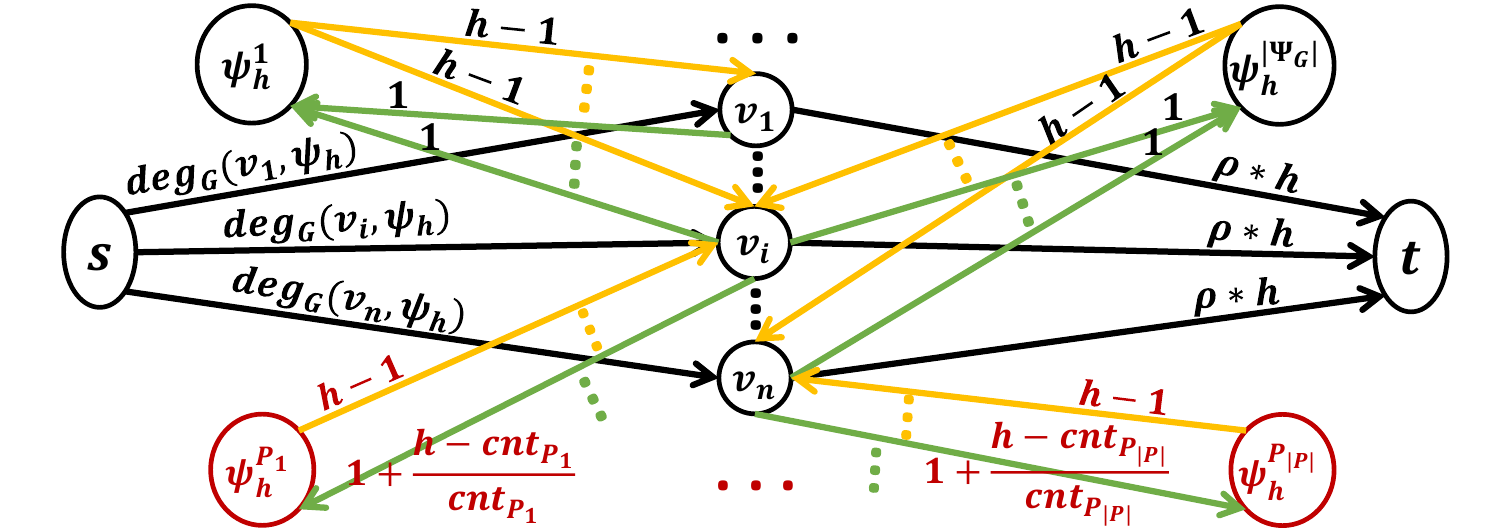}
	\caption{The flow network of \texttt{DeriveCompact}$(G,\rho,P)$}
	\label{fig:flownetwork2}
\end{figure}
\begin{algorithm}\small
	\caption{Fast L$h$CDS verification algorithm}
	\label{alg:verify}
	\KwIn {$G=(V,E)$, $S,\Psi_{h}(G), \overline{\phi}_{h}, \underline{\phi}_{h}$} 
	\KwOut {VerifyL$h$CDS} 
	$\rho \leftarrow d_{\psi_{h}}(G[S])$, VerifyL$h$CDS $\leftarrow$ True, Valid $\leftarrow$ True;\\
	$U\leftarrow$ an empty queue, $P\leftarrow \emptyset$, $T\leftarrow \emptyset$, $W\leftarrow \emptyset$, $cnt\leftarrow 0$;\\
	\ForEach{$u\in S$}
	{
		\lIf {$u \notin T$}
		{
			push $u$ to $U$, insert $u$ into $T$
		}
		\While {$U$ is not empty}
		{
			$v\leftarrow$ pop out the front vertex in $U$;\\
			\ForEach{$\psi_{h}\in\Psi_{h}(G)$ where $v\in \psi_{h}$}
			{
				Valid $\leftarrow$ True;\\
				\If{$\psi_{h}\notin W$}
				{
					\ForEach{$w\in \psi_{h}$}
					{
						\lIf{$\overline{\phi}_{h}(w)<\rho$}
						{
							Valid $\leftarrow$ False
						}
					}
					insert $\psi_{h}$ into $W$;
				}	
				\lElse
				{
					Valid $\leftarrow$ False
				}
				\If{Valid}
				{
					$cnt \leftarrow 1$;\\
					\ForEach{$w\in \psi_{h}$ and $w\neq v$}
					{
						\If{$w\notin T$ and $w$ is in any L$h$CDS}
						{
							VerifyL$h$CDS $\leftarrow$ False; 
						}
						\If{$\underline{\phi}_{h}(w)\leq\rho$}
						{
							\If{$w\notin T$}
							{
								push $w$ to $U$, insert $w$ into $T$;
							}
							$cnt \leftarrow cnt + 1$;
						}
					}
					\If{$cnt \neq h$ and $\psi_{h}\notin P$}
					{
						insert $\psi_{h}$ and $cnt$ into $P$;\\
						VerifyL$h$CDS $\leftarrow$ False; 
					}
				}
			}
			\ForEach{$(v,w)\in E$}
			{
				\If{$w\notin T$ and $\underline{\phi}_{h}(w)>\rho$}
				{
					VerifyL$h$CDS $\leftarrow$ False; 
				}
				\ElseIf{$w\notin T$ and $\overline{\phi}_{h}(w)>\rho$}
				{
					push $w$ to $U$, add $w$ into $T$;
				}
			}
		}
	}
	\lIf {VerifyL$h$CDS}
	{
		\textbf{return} True
	}
	$G'\leftarrow$ \texttt{DeriveCompact} ($G[T],\rho-\frac{1}{|V(G[T])|^{2}},P$);\\
	\textbf{return} $G[S]$ is a connected component in $G'$;
\end{algorithm}

In Algorithm \ref{alg:verify}, the $h$-clique density of $G[S]$ is assigned to $\rho$ (Line 1).  Then, a breadth-first search is performed. $U$ is used to store the vertices to be traversed (Line 4). The first vertex $v$ from $U$ is popped out (Line 6), and all $h$-cliques containing $v$ are iterated (Lines 7-25). For each $\psi_{h}$ that is not in $W$, if any vertex $w$ in $\psi_{h}$ satisfying $\overline{\phi}_{h}(w)<\rho$, $\psi_{h}$ will not affect the $h$-clique compact number of $w$ (Lines 9-13); if any vertex $w$ is in any outputted L$h$CDS, False is assigned to VerifyL$h$CDS (Lines 17-18); the number of vertices in $\psi_{h}$ satisfying $\underline{\phi}_{h}(w)\leq\rho$ is recorded and $\psi_{h}$ is added into $P$ (Lines 19-25). All neighbors of $v$ are iterated (Lines 26-30). 
For each neighbor $w$ that is not in $T$, if $\underline{\phi}_{h}(w)>\rho$, False is assigned to VerifyL$h$CDS (Line 28). 
If $\underline{\phi}_{h}(w) \leq \rho < \overline{\phi}_{h}(w)$, $w$ will be added into $U$ and $T$ (Line 30). 
If VerifyL$h$CDS is False, a subgraph $G[T]$ induced by $T$ and peripheral $h$-cliques in $P$ are used to compute all $h$-clique $\rho$-compact subgraphs in $G[T]$ via min-cut (Line 32). 
Finally, True is returned if $G[S]$ is maximal $h$-clique $\rho$-compact; otherwise, the algorithm returns False (Line 33). The
flow network here is much smaller.

\begin{theorem}
	Given a graph $G$ and a self-densest subgraph $G[S]$, $G[S]$ is an L$h$CDS of $G$ if and only if the fast L$h$CDS verification algorithm returns True.
	\label{theo:verify}
\end{theorem}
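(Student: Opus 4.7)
The plan is to prove the biconditional by reducing correctness of the fast algorithm to correctness of the basic algorithm (which is guaranteed by Theorem \ref{theo:derivecompact} applied to \texttt{DeriveCompact}). Concretely, I will show that the BFS/bound‑based early exit is sound and complete, and that when the algorithm falls through to a flow computation, the reduced flow network $\mathcal{F}$ built on $G[T]$ with the boundary‑clique set $P$ returns exactly the same connected component containing $S$ as would the full flow network on $G$ used by Algorithm \ref{alg:compact}.

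First I would handle the ``only if'' direction. Suppose $G[S]$ is an L$h$CDS with $\rho = d_{\psi_h}(G[S])$. By Theorem \ref{theo:uin}, every $u \in S$ has $\phi_h(u) = \rho$, and by Proposition \ref{prop:vnotin} every vertex $v \in V\setminus S$ adjacent to $S$ has $\phi_h(v) < \rho$, hence $\underline{\phi}_h(v) \le \rho$. I then walk through the BFS: at each step, vertices pushed into $T$ satisfy $\underline{\phi}_h(w) \le \rho < \overline{\phi}_h(w)$, cliques flagged into $P$ are precisely those straddling the frontier, and none of the three events that set \texttt{VerifyL$h$CDS}$\leftarrow$False can occur without contradicting either Proposition \ref{prop:disjoint} (for the overlap with a previously output L$h$CDS) or $\phi_h(u)=\rho$ for $u\in S$ (for $\underline{\phi}_h(w)>\rho$ at a neighbor $w\notin T$). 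If the early flag stays True the algorithm returns True; otherwise the fall‑through call to \texttt{DeriveCompact}$(G[T], \rho - 1/|V(G[T])|^2, P)$ must still return $G[S]$ as a connected component, which will follow from the equivalence lemma below.

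For the ``if'' direction I would argue the contrapositive via the equivalence lemma: the set of maximal $h$-clique $\rho$-compact subgraphs of $G$ that intersect $S$ coincides with the output of \texttt{DeriveCompact}$(G[T], \rho - 1/|V(G[T])|^2, P)$ restricted to components meeting $S$. The key observation is that by Proposition \ref{prop:corebound} and the tightened bounds, any vertex $v\notin T$ that is reachable from $S$ satisfies $\overline{\phi}_h(v)\le \rho$ and hence cannot lie in any maximal $h$-clique $\rho'$-compact subgraph with $\rho' \ge \rho$ touching $S$; so restricting to $G[T]$ loses no relevant vertices. For each clique $\psi_h \in P$ that spans $T$ and $V\setminus T$ with $cnt$ vertices inside $T$, the $h-cnt$ ``outside'' vertices contribute to the $h$-clique degree of the inside vertices in the full flow network; in $\mathcal{F}$ this contribution is recreated by adding the inflated capacity $1 + \frac{h-cnt}{cnt}$ on the $v\to\psi_h$ arcs (and by incrementing $\deg_G(v,\psi_h)$ on the $s\to v$ arcs by the same amount), so the in/out flow budgets at each vertex and each clique in $T$ match what they would be in the full network. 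Hence the min‑cut on $\mathcal{F}$ selects the same side‑of‑cut status for every vertex of $T$ as the min‑cut on the full network; combined with Theorem \ref{theo:derivecompact}, this establishes the equivalence.

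The main obstacle will be making this capacity‑equivalence argument watertight: I need to verify that for every vertex $v\in T$ appearing in a boundary clique $\psi_h \in P$, the total capacity entering and leaving $v$ (accounting for both the augmented $s\to v$ capacity and the augmented $v\to\psi_h$ arcs) is identical to what the full network on $G$ would assign, and that cliques entirely outside $T$ contribute nothing to any min‑cut involving $S$ because their incident vertices have $\overline{\phi}_h \le \rho$ and are already separated from $s$ on the source side only if they would join $S$, which the bounds forbid. Once this flow‑network equivalence is proved, both directions of the theorem follow immediately: the fast algorithm returns True iff $G[S]$ is a connected component of the maximal $h$-clique $\rho$-compact subgraph of $G$, iff (by Definition \ref{def:lhcds} and self‑densest assumption) $G[S]$ is an L$h$CDS.
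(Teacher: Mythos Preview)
Your overall architecture is the same as the paper's---reduce the fast verification on $G[T]$ with the boundary-clique set $P$ to the basic verification on $G$, and then invoke Theorem~\ref{theo:derivecompact}. The paper's own proof is extremely terse: it argues directly at the level of compact subgraphs (``$G[S]$ is an L$h$CDS in $G$ iff it is one in $G[T]$, because the only way the maximal $\rho$-compact subgraph containing $S$ could grow beyond $T$ is through a clique in $P$, and the excluded vertices of such a clique have $\underline{\phi}_h>\rho$, so adjoining them would already enlarge the $\rho$-compact subgraph in $G$, contradicting that $G[S]$ is an L$h$CDS''). Your attempt to make the flow-network capacity equivalence explicit is a more rigorous route to the same conclusion and would add real value over what the paper provides.

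That said, two of your intermediate claims are wrong as stated and need repair. First, it is \emph{not} true that ``none of the three events that set \texttt{VerifyL$h$CDS}$\leftarrow$False can occur'' when $G[S]$ is an L$h$CDS. The BFS is allowed to walk out of $S$ into $T\setminus S$ (via line~30 whenever $\underline{\phi}_h(w)\le\rho<\overline{\phi}_h(w)$), and from such a vertex it may well hit a neighbour $w'$ with $\underline{\phi}_h(w')>\rho$, or a vertex lying in a previously output L$h$CDS, or a clique with $cnt\ne h$. None of these contradicts Proposition~\ref{prop:disjoint} or $\phi_h(u)=\rho$ for $u\in S$, because at that point you are no longer at a vertex of $S$. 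The early exit is merely a sound shortcut, not a guaranteed one; correctness in the ``only if'' direction really rests on the fall-through flow computation, so drop the contradiction claims and argue only the fall-through.

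Second, your characterization of $V\setminus T$ is incomplete: you assert that every reachable $v\notin T$ has $\overline{\phi}_h(v)\le\rho$, but the BFS also \emph{refuses} to add vertices with $\underline{\phi}_h(v)>\rho$ (lines~19 and~28 never push them). So $V\setminus T$ splits into the ``low'' vertices with $\overline{\phi}_h\le\rho$ and the ``high'' vertices with $\underline{\phi}_h>\rho$; the set $P$ is exactly the cliques that straddle $T$ and the high part. Your capacity-equivalence argument must therefore treat the two cases separately: low vertices can be discarded outright (they cannot lie in any $\rho$-compact subgraph, hence are on the $t$-side in both networks), while high vertices are simulated by the inflated $1+\frac{h-cnt}{cnt}$ arcs in $P$. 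The paper handles the high case by the one-line observation above (adjoining them already contradicts maximality in $G$); if you want to stay at the flow level, you need to argue that in the full network a high vertex incident to a clique in $P$ is always on the $s$-side of the min-cut, so its incoming unit of capacity to $\psi_h$ is always saturated, which is exactly what the $\frac{h-cnt}{cnt}$ surcharge replicates. Once you patch these two points, your proof goes through and is strictly more detailed than the paper's.
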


\begin{proof}
	On the one hand, if $G[S]$ is an L$h$CDS of $G$, $G[S]$ is still an L$h$CDS in $G[T]$, because only the $h$-cliques in $P$ might increase the $h$-clique compact numbers in $G[T]$ compared to the $h$-clique compact numbers in $G$. Otherwise, there exists a vertex $v$ with $h$-cliques in $P$ contained in the maximal $h$-clique $d_{\psi_{h}}(G[S])$-compact subgraph containing $G[S]$, and we can construct a larger $h$-clique $d_{\psi_{h}}(G[S])$-compact subgraph in $G$ by adding vertices with $\underline{\phi}_{h}(w) > d_{\psi_{h}}(G[S])$ connected to $v$, which contradicts that  $G[S]$ is an L$h$CDS. On the other hand, if $G[S]$ is not an L$h$CDS of $G$, we will find a larger $h$-clique $d_{\psi_{h}}(G[S])$-compact subgraph containing $G[S]$ in $G[T]$. Therefore, $G[S]$ is not an L$h$CDS in $G[T]$, and the algorithm returns False. Therefore, the algorithm returns True only when $G[S]$ is an L$h$CDS of $G$.
\end{proof}

%
%
%

\subsection{The L$h$CDS Discovery Algorithm (\texttt{IPPV})}
Combining all the algorithms above, we derive the L$h$CDS discovery algorithm, called the \texttt{IPPV} algorithm shown in Algorithm \ref{alg:LhCDS}. An empty stack $st$ is initialized, and $G'$ is assigned to $G$ (Line 1). The bounds of $h$-clique compact numbers are initialized via \texttt{InitializeBd} (Line 2). L$h$CDS candidates are derived via \texttt{ProposeCL} and \texttt{Prune} (Line 4-5). Next, the L$h$CDS candidates in $\mathcal{S}$ are reversely pushed into $st$, and the first L$h$CDS candidate in $st$, the one with the highest $\phi_{h}$ value, is popped out (Lines 6-7). The L$h$CDS candidate is verified by \texttt{IsDensest} (Line 8) and \texttt{VerifyL$h$CDS} (line 9). If $G[S]$ is an L$h$CDS, it will be outputted, and $k$ is decreased by 1 (Line 10). If $G[S]$ is not an L$h$CDS but is self-densest, $S$ is updated as the top L$h$CDS candidate from $st$ (Line 12). Then, $G[S]$ is assigned to $G'$ for the next iteration (Line 13). The above process is repeated until top-$k$ L$h$CDSes are found (line 3) or the stack is empty (Line 11). Our algorithms can also be extended to find all L$h$CDSes. 

\begin{algorithm}\small
	\caption{Iterative propose-prune-and-verify algorithm based on convex programming (\texttt{IPPV})}
	\label{alg:LhCDS}
	\KwIn {$G=(V,E)$, number of iterations $T$, an integer $k$} 
	\KwOut {top-$k$ L$h$CDSes} 
	$st\leftarrow$ an empty stack; $G'\leftarrow G$;\\
	$\overline{\phi}_{h}, \underline{\phi}_{h}\leftarrow$\texttt{InitializeBd}($G', h$);\\
	\While{$k>0$}
	{
		$\mathcal{S}, \overline{\phi}_{h}, \underline{\phi}_{h}\leftarrow$ \texttt{ProposeCL} ($G',T, \overline{\phi}_{h}, \underline{\phi}_{h}$) ;\\
		$\mathcal{S} \leftarrow$ \texttt{Prune} ($G', \mathcal{S}, \overline{\phi}_{h}, \underline{\phi}_{h}$);\\
		\lForEach{$S\in \mathcal{S}$ reversely}
		{
			push $S$ into $st$
		}
		$S\leftarrow$ pop out the top stable group from $st$ ;\\
		\If{\texttt{IsDensest} ($G[S]$)}
		{
			\If{\texttt{VerifyL$h$CDS} ($G,S,\overline{\phi}_{h}, \underline{\phi}_{h}$)}
			{
				output $G[S]$ ; $k\leftarrow k - 1$ ;
			}
			\lIf{$st$ is empty}
			{
				break
			}
			$S\leftarrow$ pop out the top stable group from $st$ ;\\
		}
		$G'\leftarrow G[S]$ ;\\
	}
\end{algorithm}
\begin{theorem}
	The L$h$CDS discovery algorithm \texttt{IPPV} is an exact algorithm, i.e., it can output all L$h$CDSes correctly.
	\label{theo:exactalgorithm}
\end{theorem}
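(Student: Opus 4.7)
The plan is to split exactness into \emph{soundness} (every subgraph written to the output is an L$h$CDS of $G$) and \emph{completeness} (when \texttt{IPPV} halts it has produced the top-$k$ L$h$CDSes in descending density order). Both directions chain results already established in the paper, and the delicate step will be showing that no true L$h$CDS can be dropped during the \texttt{Prune}--recurse iterations.

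Soundness follows directly from the guards in Algorithm~\ref{alg:LhCDS}: before $G[S]$ is emitted, the algorithm requires \texttt{IsDensest}$(G[S])$ to return True, which establishes property~(1) of Proposition~\ref{prop:theproperty}, and \texttt{VerifyL$h$CDS} to return True, which by Theorem~\ref{theo:verify} is equivalent to $G[S]$ being a maximal $h$-clique $d_{\psi_h}(G[S])$-compact subgraph, i.e.\ property~(2). Together these match Definition~\ref{def:lhcds}, so every output is indeed an L$h$CDS. Repeated outputs are ruled out because the inner recursion searches strictly within the popped candidate, and distinct L$h$CDSes are disjoint by Proposition~\ref{prop:disjoint}.

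Completeness is the main obstacle, and I would prove it by induction on the number of L$h$CDSes already emitted. Let $G[S^\star]$ be the currently densest L$h$CDS not yet output, living inside the working graph $G'$. By Theorem~\ref{theo:uin} every $u\in S^\star$ satisfies $\phi_h(u)=d_{\psi_h}(G[S^\star])$; Proposition~\ref{prop:corebound} sandwiches this common value between the \texttt{InitializeBd} bounds; Theorem~\ref{theo:bound} shows that the tighter bounds carried by any stable $h$-clique group still contain it and force $S^\star$ to lie inside exactly one such group; and the two rules of Proposition~\ref{prop:prunrule} only discard vertices whose upper bound is strictly dominated --- by a neighbor's lower bound, or by the best compactness attainable in the pruned residue --- a situation no vertex of $S^\star$ can satisfy. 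Hence $S^\star$ survives \texttt{Prune} and is contained in a candidate pushed onto $st$, which will eventually be verified and output.

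It remains to argue the ordering and termination. Candidates are pushed onto $st$ in decreasing order of the approximate compact numbers returned by \texttt{ProposeCL} (the stable $h$-clique groups inherit the descending $r$-ordering of \texttt{TentativeGD}), so the candidate popped next dominates every remaining one in estimated density; whenever a self-densest candidate fails verification, the algorithm recurses on $G'\leftarrow G[S]$, which by Proposition~\ref{prop:vnotin} and Proposition~\ref{prop:disjoint} still contains every L$h$CDS reachable from that region. Combined with soundness, this yields that the outer loop halts exactly when the top-$k$ L$h$CDSes have been enumerated in the required order (or when all L$h$CDSes have been listed, if fewer than $k$ exist), completing the exactness proof.
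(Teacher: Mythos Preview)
Your proposal is correct and follows essentially the same approach as the paper's own proof: both chain Theorem~\ref{theo:bound} (candidates form a superset of all L$h$CDSes), Proposition~\ref{prop:prunrule} (pruning never removes an L$h$CDS vertex), and Theorems~\ref{theo:derivecompact}/\ref{theo:verify} (verification is sound and complete), then appeal to the descending ordering of candidates. Your soundness/completeness split and the explicit induction on already-emitted L$h$CDSes are more detailed than the paper's terse argument, but the underlying logic is the same; one minor inaccuracy is that when a self-densest candidate fails \texttt{VerifyL$h$CDS}, Algorithm~\ref{alg:LhCDS} actually pops the \emph{next} candidate from $st$ rather than recursing into the failed one (recursion into $G[S]$ happens when \texttt{IsDensest} fails), though this does not affect the overall exactness argument.
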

\begin{proof}
	According to Theorem \ref{theo:bound}, the set of L$h$CDS candidates proposed by \texttt{IPPV} is a superset of all L$h$CDSes with tight bounds of $h$-clique compact number. Based on Proposition \ref{prop:prunrule}, the pruning part only prunes vertices that are not in any L$h$CDS and guarantees that the vertices in any L$h$CDS will not be pruned. Theorem \ref{theo:derivecompact} and Theorem \ref{theo:verify} state that the verification part verifies whether a candidate is exactly an L$h$CDS without misjudgment. And all candidates are repeated through the above bound updating, pruning, and verification operations until $k$ L$h$CDSes are outputted. Since the set of L$h$CDS candidates are arranged in descending order, the outputted $k$ L$h$CDSes must be the top-$k$ L$h$CDSes. Therefore, the proposed algorithm is an exact algorithm to output the top-$k$ L$h$CDSes. 
\end{proof}

\textbf{Complexity Analysis.} We use $T$ to denote the number of iterations that \texttt{SEQ-kClist++} needs. Each iteration of \texttt{SEQ-kClist++} costs $O(n+|\Psi_{h}|)$. We use $N_{CL}$ to represent the total number of L$h$CDS candidates, $N_{CL} \ll n$. Each iteration of verifying an L$h$CDS candidate costs $O(n+|\Psi_{h}|)$. $N_{Flow}$ is the number of times \texttt{IsDensest} and \texttt{VerifyL$h$CDS} are called. The time complexity of max-flow computation, which is $O((n+|\Psi_h|)^2\cdot (n+|\Psi_h|\cdot h))$ for \texttt{IsDensest} and \texttt{VerifyL$h$CDS} when Dinic Algorithm is Applied. The time complexity of \texttt{IPPV} is $O((T+N_{CL})\cdot(n+|\Psi_{h}|) + N_{Flow}\cdot(n+|\Psi_h|)^2\cdot(n+|\Psi_h|\cdot h))$. As $(T+N_{CL})\cdot(n+|\Psi_{h}|) \ll N_{Flow}\cdot(n+|\Psi_h|)^2\cdot(n+|\Psi_h|\cdot h)$, the time complexity of \texttt{IPPV} is $O(N_{Flow}\cdot(n+|\Psi_h|)^2\cdot(n+|\Psi_h|\cdot h))$. The memory complexity is $(n+|\Psi_{h}|)$. Note that the time complexity of the classical exact solution of only computing $h$-clique compact number is $O((n\cdot|\Psi_{h}|)^{3}\cdot L)$ \cite{Goldfarb1991AnOP}, which is much greater than that of our algorithm.

\section{L$hx$PDS Discovery}
A pattern (also known as a motif) \cite{Wuchty2003EvolutionaryCO,Leskovec2006PatternsOI,Hu2019DiscoveringMM} is a small connected subgraph that appears frequently in a larger graph, which can be considered as a basic module. Figure \ref{fig:pattern} shows all kinds of patterns with four vertices: $4a$-pattern, $\ldots$ ,$4f$-pattern.
\begin{figure}[h]
	\centering
	\includegraphics[width=\linewidth]{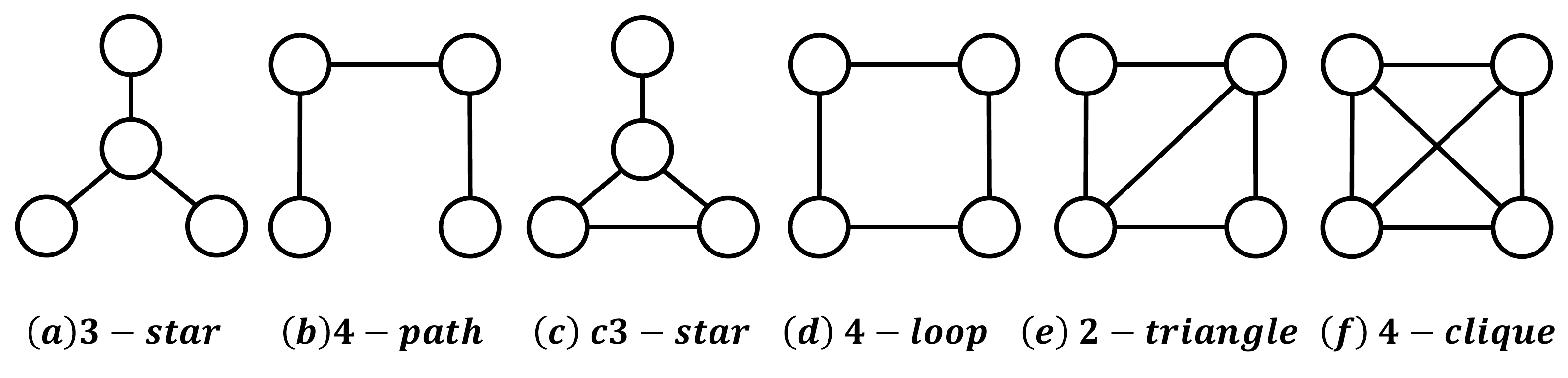}
	\caption{An example of all patterns with four vertices}
	\label{fig:pattern}
\end{figure}

We further show that the algorithm for the locally $h$-clique densest subgraph discovery problem can be extended to solve the locally general pattern densest subgraph discovery problem, which contributes to a deeper understanding of the organizational principles and functional modules within complex networks. 

\subsection{Densest Supermodular Set Decomposition}
In this section, we discuss the feasability of extending the $h$-clique problem to a general pattern problem. The convex programming of $h$-clique can be further generalized to the convex programming of supermodular sets, so that the convex programming for the general pattern densest subgraph problem and the corresponding compact number can be derived. A function $f : 2^{V} \rightarrow \mathbb{R}_{+}$ is said to be supermodular if $\forall A,B\subseteq V, f(A) + f(B) \leq f(A \cup B) + f(A\cap B)$. Harb et al. \cite{Harb2022FasterAS} proposed the densest supermodular subset (DSS) problem: given a normalized, nonnegative monotone supermodular function $f : 2^{V} \rightarrow \mathbb{R}_{+}$, return $S\subseteq V$ that maximizes $\frac{f(S)}{|S|}$. According to our observation, when $f(S)=|E(S)|$ and $f(S)=|\Psi_{h}(G[S])|$, the DSS problem is the DS and CDS problem, respectively. When $f(S)$ represents the number of a particular pattern in a graph, the problem is the densest problem of the proposed pattern. The convex program\cite{Harb2022FasterAS} for the densest supermodular set decomposition is $\mathrm{CP}(G):=\min \left\{\sum_{u \in V} r(u)^2\right\},$
subject to:
$r\in \left\{x\in \mathbb{R}^{V}|x\geq 0, x(S)\geq f(S) \text{ for all } S\subseteq V, x(V)=f(V) \right\}$.

With supermodularity, there is a property that each graph has a unique nested diminishingly decomposition for each type of density. The analysis of the generalization of CDS problem to DSS problem has triggered our thinking on the solution of locally general pattern densest problem.

\subsection{Locally General Pattern Densest Subgraph Problem}
Given an undirected graph $G=(V,E)$, $\psi_{hx}(V_{\psi_{hx}}, E_{\psi_{hx}})$ denotes a particular kind of pattern $x$ with $h$ vertices and $\Psi_{hx}(G)$ is the collection of the $hx$-patterns of $G$. $d_{\psi_{hx}}(G)=\frac{|\Psi_{hx}(G)|}{|V|}$ denotes the $hx$-pattern density of $G$. $deg_{G}(v,\psi_{hx})$ is the $hx$-pattern degree of $v$, i.e., the number of $hx$-patterns containing $v$. A graph $G=(V,E)$ is $hx$-pattern $\rho$-compact if $G$ is connected, and removing any subset of vertices $S\subseteq V$ will result in the removal of at least $\rho\times|S|$ $hx$-patterns in $G$. We can formally define a locally $hx$-pattern densest subgraph as follows.

\begin{definition}[Locally $hx$-pattern densest subgraph (L$hx$PDS)]
	A subgraph $G[S]$ of $G$ is an L$hx$PDS of $G$ if $G[S]$ is $hx$-pattern $d_{\psi_{hx}}(G)$-compact, and there does not exist a supergraph $G[S']$ of $G[S]$ ($S'\supsetneq S$), such that $G[S']$ is also $hx$-pattern $d_{\psi_{hx}}(G)$-compact.
	\label{def:lhxpds}
\end{definition}

Similarly, we formulate the locally $hx$-pattern densest subgraph problem as follows.

\begin{definition} [Locally $hx$-pattern densest subgraph Problem (L$hx$PDS Problem)]
	Given a graph $G$, an integer $h$, a pattern $x$ and an integer $k$, the L$hx$PDS problem is to compute the top-$k$ L$hx$PDSes ranked by the $hx$-pattern density in $G$.
\end{definition}

\begin{algorithm}\small
	\caption{The \texttt{IPPV} algorithm for L$hx$PDS}
	\label{alg:pattern}
	\KwIn {$G=(V,E)$, number of iterations $T$, an integer $k$} 
	\KwOut {top-$k$ L$hx$PDS} 
	$st\leftarrow$ an empty stack; $G'\leftarrow G$;\\
	$\overline{\phi}_{hx}, \underline{\phi}_{hx}\leftarrow$ \texttt{InitializeBd} for $hx$-pattern ($G', h, x$);\\
	\While{$k>0$}
	{
		$\mathcal{S}, \overline{\phi}_{hx}, \underline{\phi}_{hx}\leftarrow$ \texttt{ProposeCL} for $hx$-pattern ($G',T, \overline{\phi}_{hx}, \underline{\phi}_{hx}$);\\
		$\mathcal{S} \leftarrow$ \texttt{Prune} for $hx$-pattern ($G', \mathcal{S}, \overline{\phi}_{h}, \underline{\phi}_{h}$);\\
		\ForEach{$S\in \mathcal{S}$ reversely}
		{
			push $S$ into $st$;\\
		}
		$S\leftarrow$ pop out the top stable group from $st$;\\
		\If{\texttt{IsDensest} ($G[S]$)for $hx$-pattern}
		{
			\If{\texttt{VerifyL$hx$PDS} ($G,S,\overline{\phi}_{hx}, \underline{\phi}_{hx}$)}
			{
				output $G[S]$; $k\leftarrow k - 1$;\\
			}
			\If{$st$ is empty}
			{
				break;\\
			}
			$S\leftarrow$ pop out the top stable group from $st$;\\
		}
		$G'\leftarrow G[S]$;\\
	}
\end{algorithm}

\begin{figure*}[htbp]
	\centering
	
	\subfigure[{fb-pages-company}]{\includegraphics[width=0.24\textwidth]{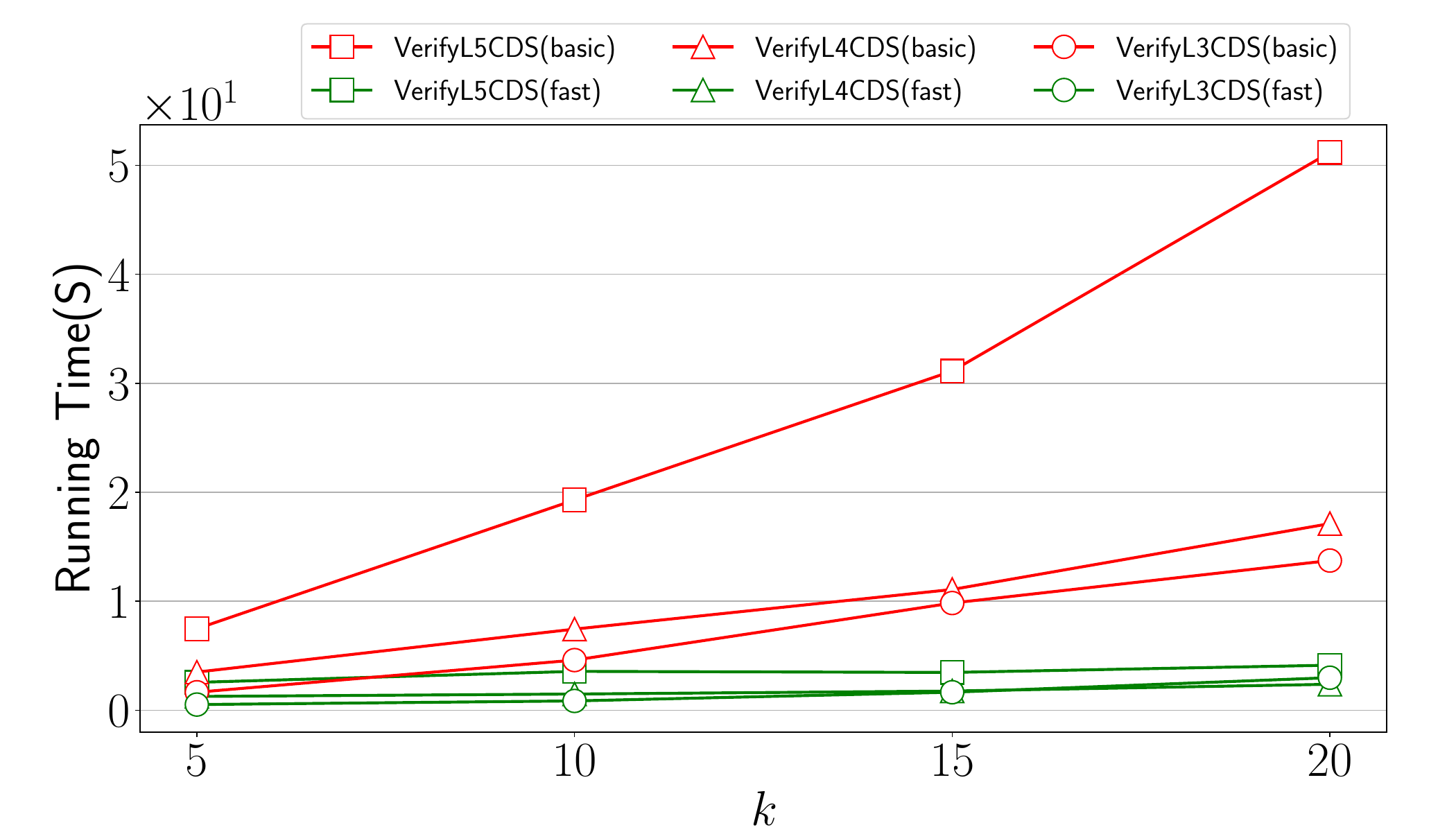}}
	\subfigure[{soc-hamsterster}]{\includegraphics[width=0.24\textwidth]{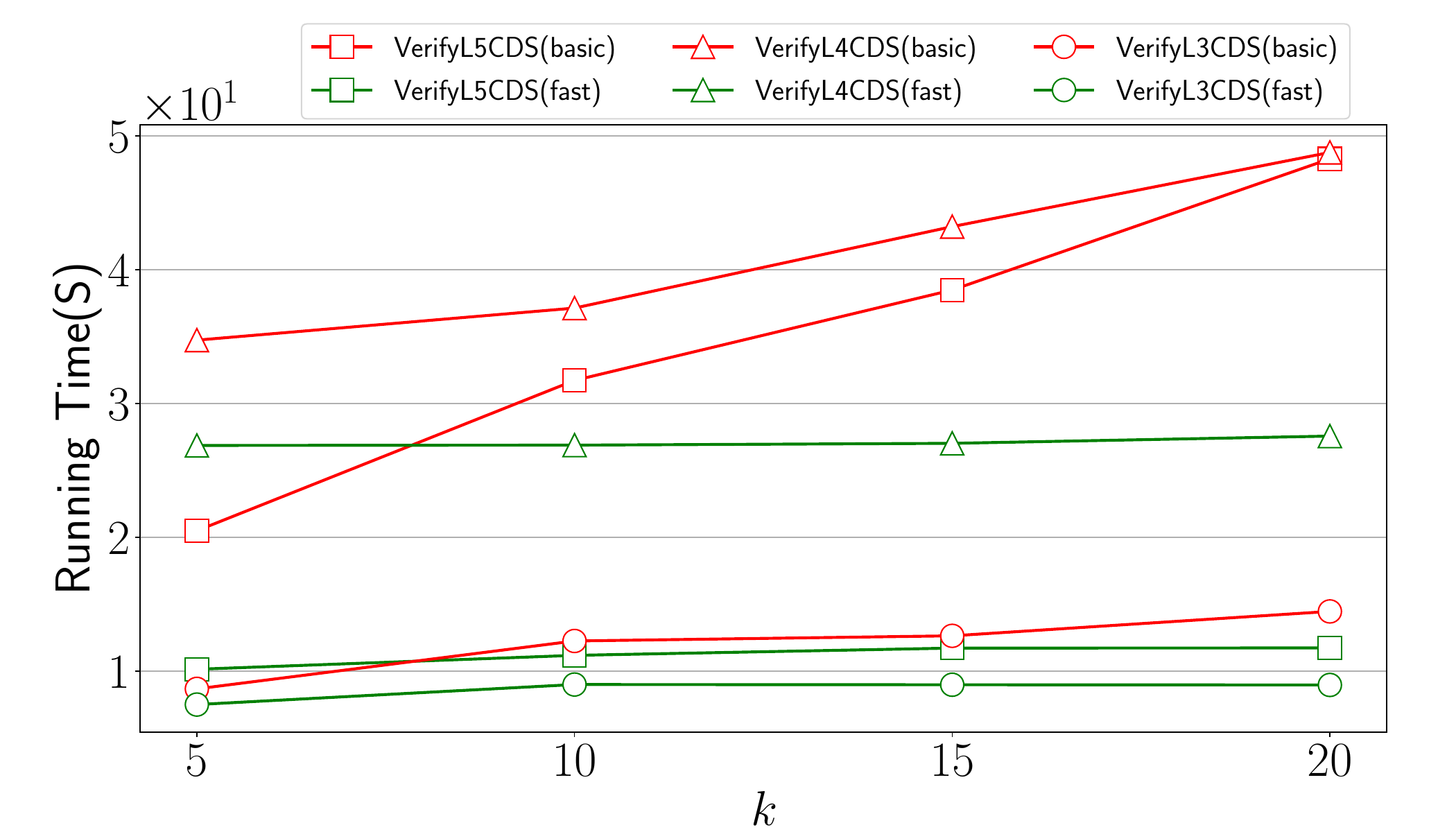}}
	\subfigure[{soc-epinions}]{\includegraphics[width=0.24\textwidth]{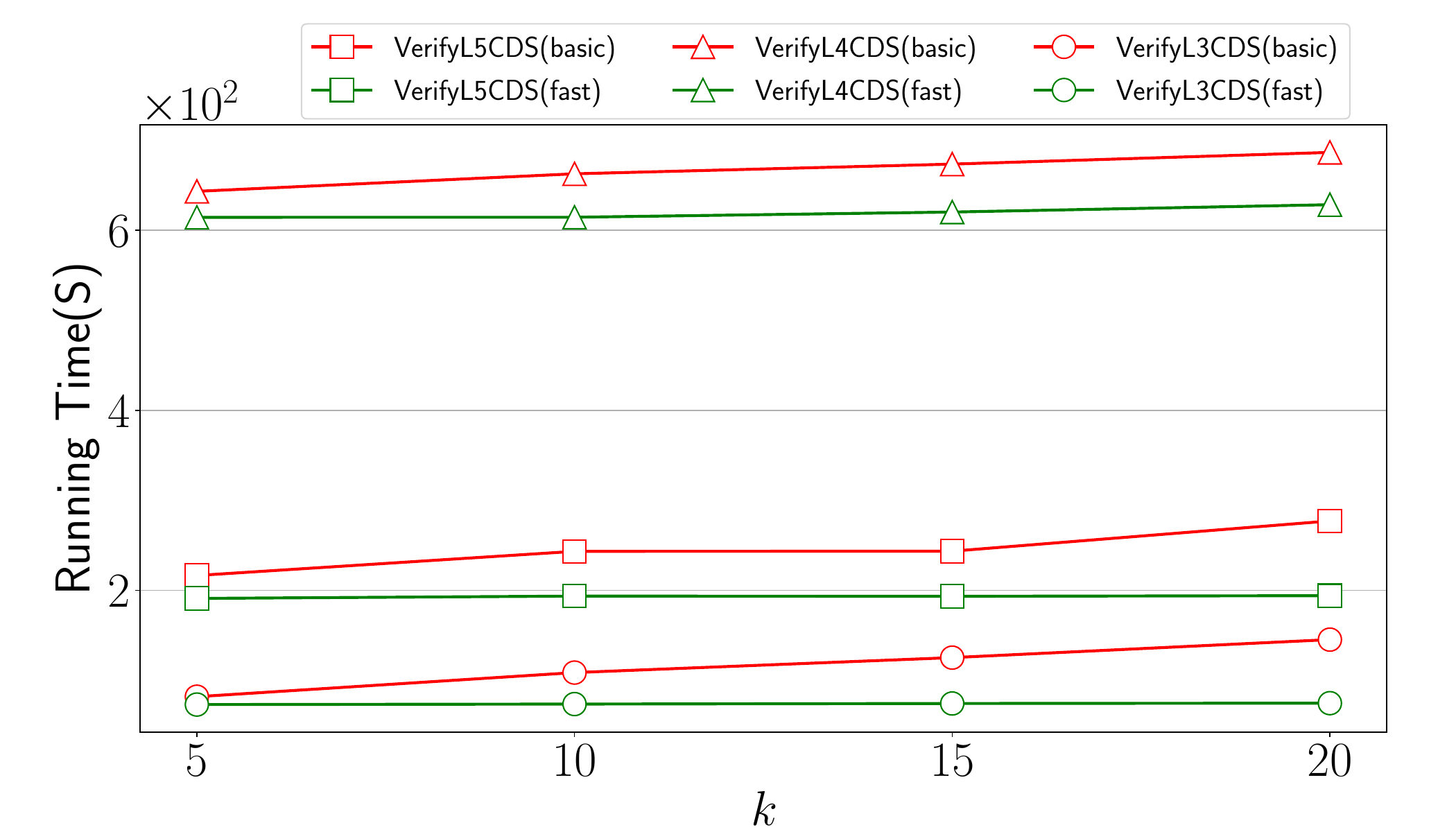}}
	\subfigure[{Email-Enron}]{\includegraphics[width=0.24\textwidth]{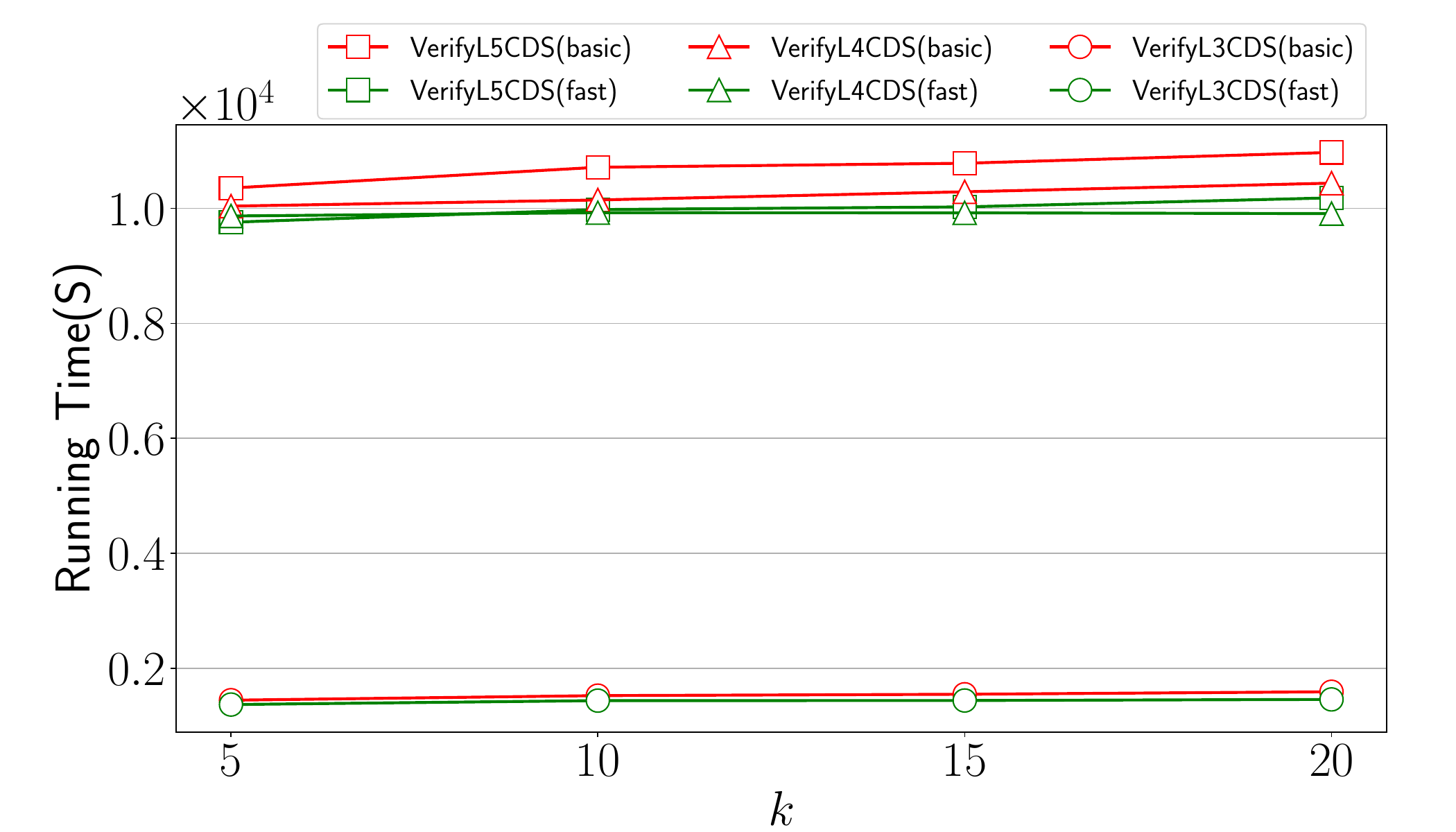}}
	
	\subfigure[{loc-gowalla}]{\includegraphics[width=0.24\textwidth]{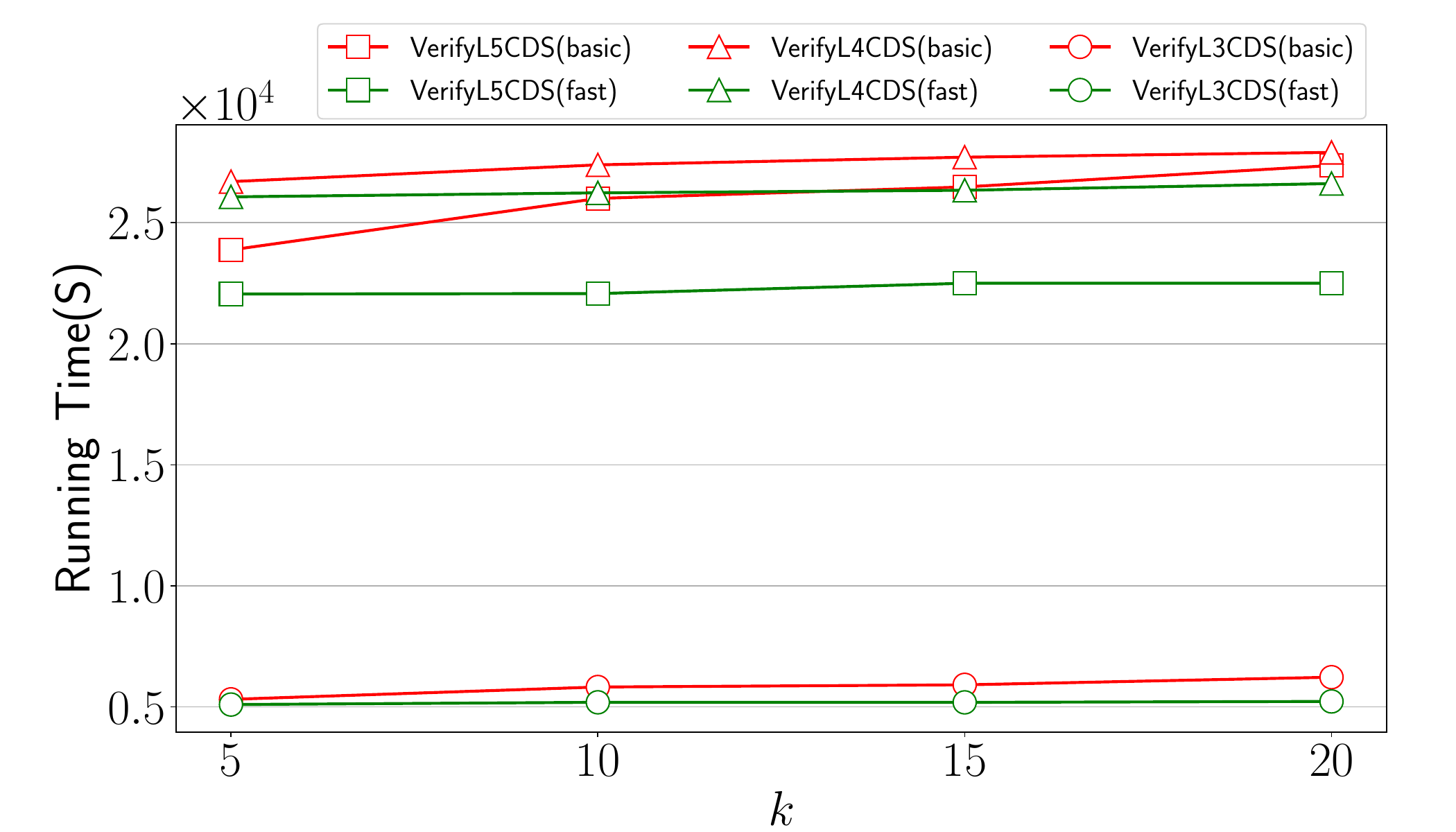}}
	\subfigure[{CA-CondMat}]{\includegraphics[width=0.24\textwidth]{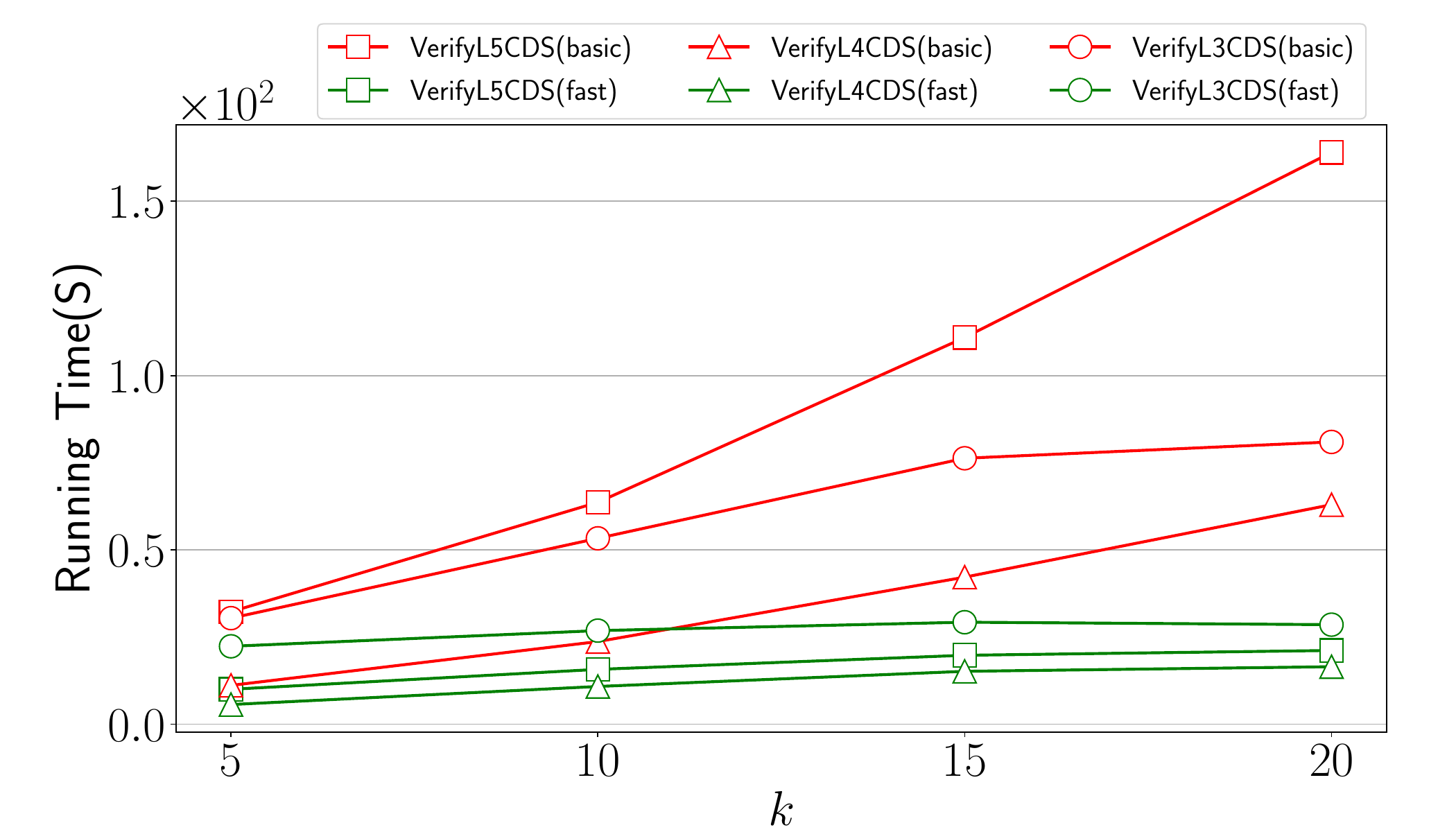}}
	\subfigure[{CA-GrQc}]{\includegraphics[width=0.24\textwidth]{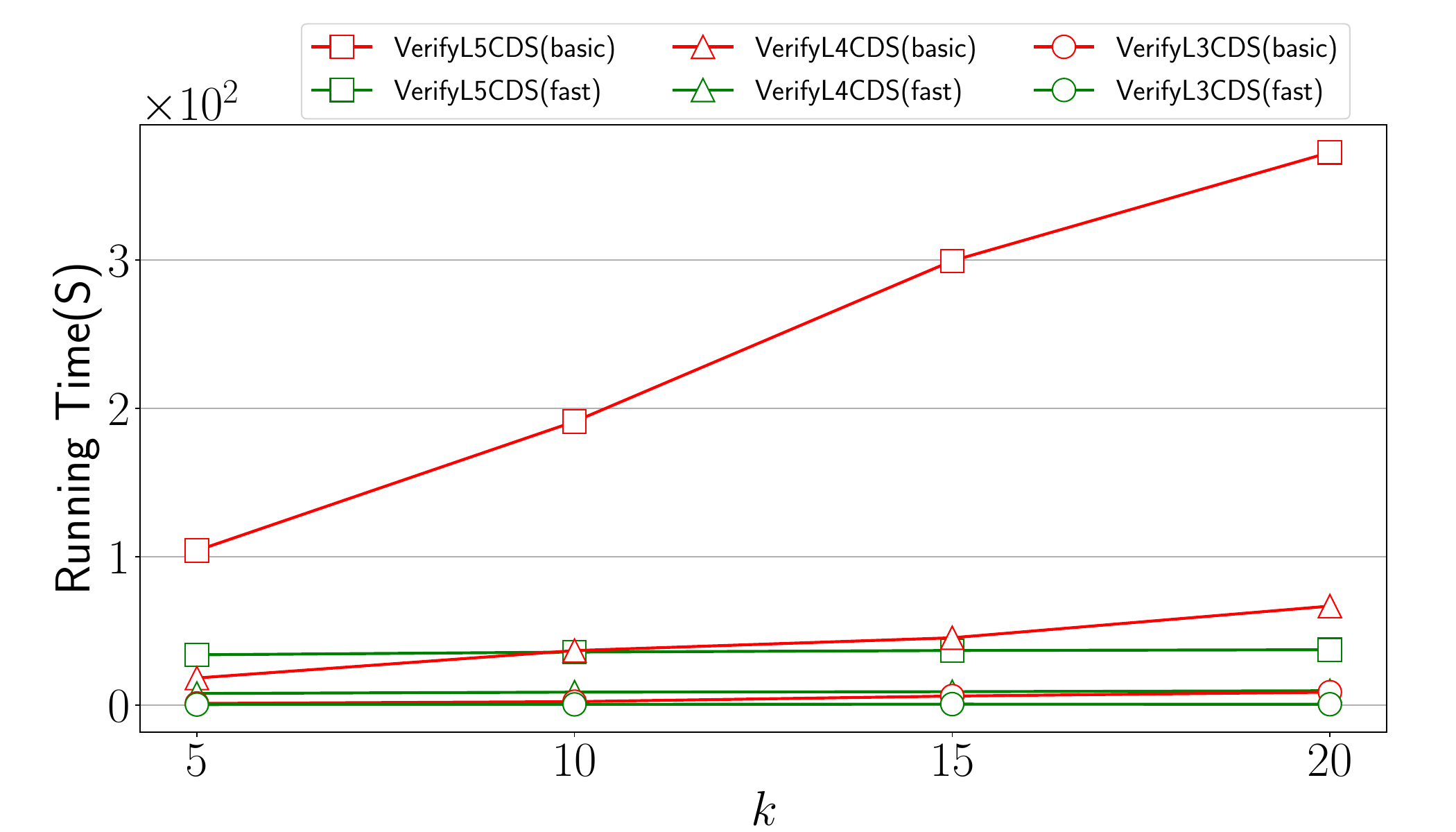}}
	\subfigure[{Amazon}]{\includegraphics[width=0.24\textwidth]{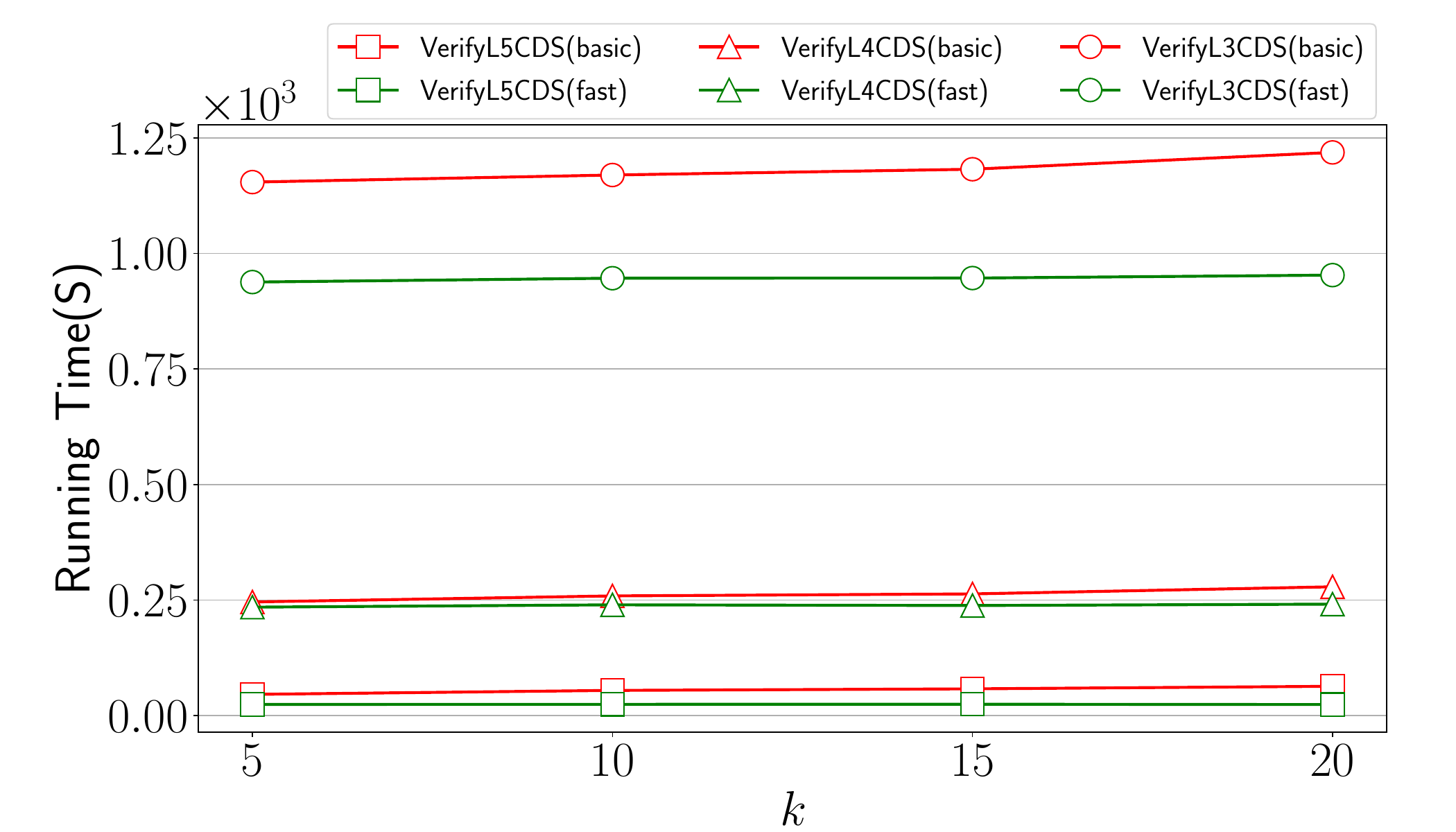}}
	\caption{Running time of algorithms with different $h$ (= 3,4,5) and $k$. \textcolor{red}{Red} is VerifyL$h$CDS(basic), \textcolor{myGreen}{Green} is VerifyL$h$CDS(fast)}
	\label{fig:running_time}
\end{figure*}
\begin{figure*}[htbp]
	\centering
	\includegraphics[width=0.96\textwidth]{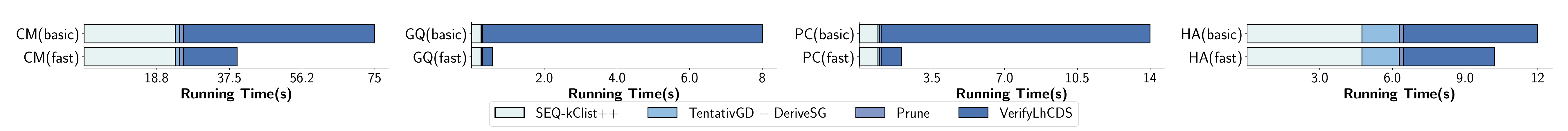}
	\caption{Running time of each part of \texttt{IPPV} with $h=3$ and $k=20$}
	\label{fig:time_proportion}
\end{figure*}

Here, we utilize our ``iterative propose-prune-and-verify'' pipeline to solve the L$hx$PDS problem. To apply the $hx$-pattern subgraph, there are some differences between Algorithm \ref{alg:LhCDS} and Algorithm \ref{alg:pattern} in the algorithmic details. In Algorithm \ref{alg:pattern}, we need to count $hx$-pattern graphs for Seq-kClist++ algorithm and derive candidate L$hx$PDS algorithm. In pruning part, the computation of $hx$-pattern graph cores is different for diverse kinds of patterns. Unlike $h$-clique, there may be more than one $hx$-pattern on a graph with $h$ vertices. In verification part, the methods for reducing the size of subgraph to compute the min-cut need small adjustments for different patterns. In general, the process of extending our algorithm to general patterns is concise and clear. In addition, our method may also support expressive graph models as directed graph, attributed graph, etc. These graph models also have meaningful patterns, such as directed closed loop in directed graph. The difficulty of generalizing \texttt{IPPV} to different graph models is related to the properties of the models.

\section{Experiments}
\subsection{Experimental Setup}
The datasets we use are undirected real-world graphs \cite{snapnets,nr}, including social networks, biological networks, web graphs, and collaboration networks. All datasets are listed in Table \ref{tab:setup}.

We compare the performances of the following algorithms: \\
\indent\textbf{\texttt{IPPV} :} the top-$k$ L$h$CDS discovery algorithm proposed by us.  \\
\indent\textbf{\texttt{LTDS} \cite{samusevich2016local} :} the top-$k$ LTDS discovery algorithm based on the maximum-flow, which solves the L$h$CDS problem with $h=3$.\\
\indent\textbf{\texttt{LDSflow} \cite{qin2015locally} :} the top-$k$ LDS discovery algorithm based on the maximum-flow, which solves the L$h$CDS problem with $h=2$.\\
\indent\textbf{\texttt{Greedy} :} the top-$k$ CDS discovery algorithm based on \texttt{KClist++} \cite{sun2020kclist++} using greedy approach. It has no guarantee on the locally densest property.

\footnotesize
\begin{table}[htbp]
	\caption{Datasets used in our experiments}
	\label{tab:setup}
	\begin{tabular}{l|l|l|l|l|l}
		\toprule
		\textbf{Name} & \textbf{Abbr.} & \textbf{$|V|$} & \textbf{$|E|$} & \textbf{$|\Psi_{3}|$} & \textbf{$|\Psi_{5}|$}\\ 
		\midrule
		soc-hamsterster & HA & 2,426 & 16,630 & 53,251 & 298,013\\
		CA-GrQc & GQ & 5,242 & 14,484 & 48,260 & 2,215,500\\
		fb-pages-politician & PP & 5,908 & 41,706 & 174,632 & 2,002,250\\
		fb-pages-company & PC & 14,113 & 52,126 & 56,005	& 207,829\\
		web-webbase-2001 & WB & 16,062 & 25,593 & 21,115 & 382,674\\
		CA-CondMat & CM & 23,133 & 93,439 & 173,361 & 511,088\\
		soc-epinions & EP & 26,588 & 100,120 & 159,700 & 521,106\\
		Email-Enron& EN& 36,692 & 183,831& 727,044 & 5,809,356\\
		loc-gowalla&GW &196,591& 950,327& 2,273,138 & 14,570,875\\
		
		DBLP & DB & 317,080 & 1,049,866 & 2,224,385 & 262,663,639\\
		Amazon & AM & 334,863 & 925,872 & 667,129 & 61,551\\
		
		soc-youtube & YT &495,957 & 1,936,748 & 2,443,886 & 5,306,643 \\
		soc-lastfm & LF &1,191,805 & 4,519,330 & 3,946,207& 10,404,656
		\\
		soc-flixster& FX &2,523,386 & 7,918,801 & 7,897,122&96,315,278 \\
		soc-wiki-talk &WT& 2,394,385 &	4,659,565 &9,203,519 & 382,777,822 \\
		\bottomrule
	\end{tabular}
\end{table}
\normalsize

All algorithms are implemented in C++ and compiled by g++ compiler at -O3 optimization level. All experiments are evaluated on a machine with Intel(R) Xeon(R) CPU 3.20GHz processor and 128GB memory, with Ubuntu operating system. Algorithms running for more than 48 hours are forcibly terminated.

\subsection{Efficiency under Parameter Variations}
In this section, we summarize the influence of different parameter changes on the running time. 
\subsubsection{Efficiency improvement by fast verification algorithm} We use VerifyL$h$CDS(basic) to represent the \texttt{IPPV} algorithm with Algorithm \ref{alg:compact} and VerifyL$h$CDS(fast) to represent the \texttt{IPPV} algorithm with Algorithm \ref{alg:verify}. Their running times are compared in Figure \ref{fig:running_time}. 
The fast verification algorithm with a smaller flow network is much faster than basic verification method. Especially as $k$ increases, the efficiency gap between the two algorithms becomes more apparent. We also compare the running time of the two verification algorithms in the total running time in Figure \ref{fig:time_proportion}. The acceleration effect of the fast algorithm is obvious. The results demonstrate the importance and benefit of optimizing the verification algorithm.

\subsubsection{Running time trends with varying $k$}
Parameter $k$ has a more pronounced impact on the running time of the algorithm than $h$. The experiments in Figure \ref{fig:running_time} indicate a direct relationship, where an increase in $k$ corresponds to a proportional increase in execution time. This trend is consistently observed across different datasets, which strengthens the premise that $k$ is an important factor in computational complexity. The running time of both algorithms increases significantly for incremental values of $k$. The only deviation is observed in the Email-Enron dataset, where the running time remains relatively static despite changes in $k$, due to the fact that the total number of L$h$CDSes in this dataset is smaller than $k$.

\subsubsection{Running time trends with varying $h$}
We took $h=3, 4, 5$ to compare the impact of $h$ on the running time. The results are shown in Figure \ref{fig:running_time}. When $h=5$, the running time is generally longer. The reason is that when $h=5$, the number of $5$-cliques is larger, as shown in Table \ref{tab:setup}. On Amazon, the running time is shorter when $h=5$, because the number of $5$-cliques is smaller. The running time is proportional to the number of $h$-cliques with different $h$.

\subsubsection{Influence of structural property}
We discuss how the density property ($|E|/|V|$) of graph datasets influences the efficiency of \texttt{IPPV}. As shown in Figure \ref{fig:varydensity}, we select four datasets and randomly sample edges of these datasets in different proportions. According to the outputs from these synthetically
generated graphs, the running time increases as the density of datasets increases. The reason is that when density is higher, the number of $h$-cliques increases, lengthening the total running time.

\begin{figure}[h]
	\centering
	\includegraphics[width=0.8\linewidth]{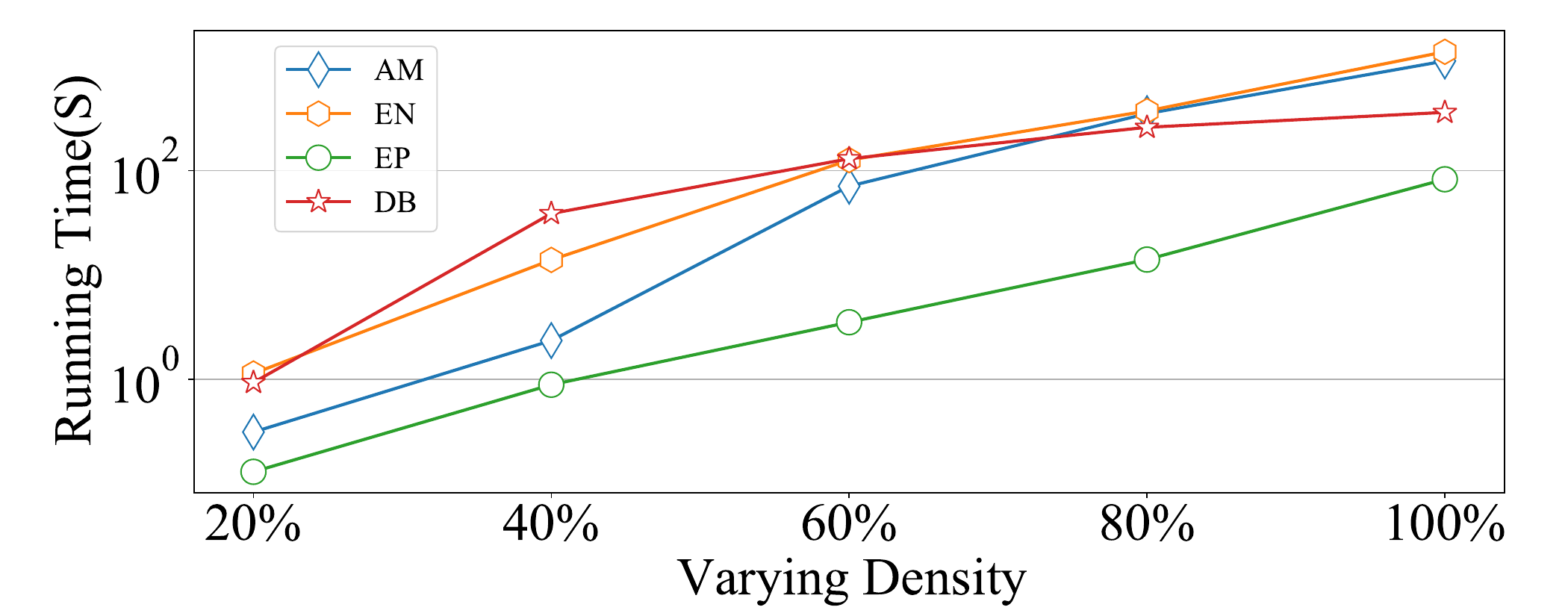}
	\caption{The running time on four datasets with varying density when $h=3$ and $k=5$}
	\label{fig:varydensity}
\end{figure}

\subsection{Efficiency v.s. Existing Algorithms}
Since LDSflow \cite{qin2015locally} outputs L$h$CDSes with $h = 2$, and LTDS \cite{samusevich2016local} outputs L$h$CDSes with $h = 3$, we compare IPPV with them by setting $h = 2$ and $h = 3$ respectively. 

\subsubsection{Efficiency: \texttt{IPPV} v.s. \texttt{LDSflow}}
We set $h=2$, $k=5$ to observe the running time of the two algorithms. The results are shown in Figure \ref{fig:ldsflow}. 
We compare \texttt{IPPV} with \texttt{LDSflow} on eight datasets, and \texttt{IPPV} has efficiency improvements on all the datasets. The bottleneck of \texttt{LDSflow} is the loose upper and lower bounds. 
\begin{figure}[h]
	\centering
	\includegraphics[width=0.8\linewidth]{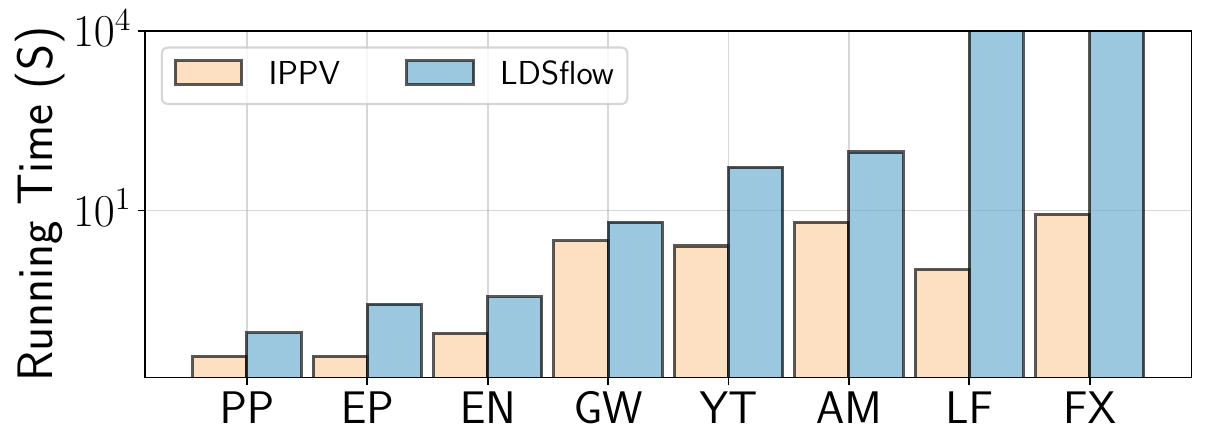}
	\caption{Efficiency of \texttt{IPPV} (h=2) and \texttt{LDSflow}}
	\label{fig:ldsflow}
\end{figure}

\subsubsection{Efficiency: \texttt{IPPV} v.s. \texttt{LTDS}}
We set $k=5$ and $h=3$ in experiments and the results are shown in Table \ref{tab:runningtime}. \texttt{IPPV} and \texttt{LTDS} are compared on all the datasets. There are significant efficiency improvements on all datasets. The main bottleneck of LTDS is the time-consuming verification part, and the reason is that the upper and lower bounds of \texttt{LTDS} are not as tight as that of \texttt{IPPV}, so there will be more failures in the verification part. The running time of our algorithm is closely related to the size of the graph and the number of $h$-cliques. A rise in the number of $h$-cliques can cause an increase in running time.
\footnotesize
\begin{table}[htbp]
	\caption{Efficiency of \texttt{IPPV} (h=3) and \texttt{LTDS}}
	\label{tab:runningtime}
	\begin{tabular}{l|l|l|l}
		\toprule
		\textbf{Dataset} & \textbf{IPPV (h=3)} & \textbf{LTDS} & \textbf{Speedup}\\ 
		\midrule
		
		soc-hamsterster  & \cellcolor{red!20}\textbf{7.50(s)} & 46.54 & 6.20$\times$ \\
		CA-GrQc & \cellcolor{red!20}\textbf{0.38} & 18.97 &  49.92 $\times$ \\
		fb-pages-politician & \cellcolor{red!20}\textbf{32.32} & 436.30 & 13.50$\times$   \\
		fb-pages-company & \cellcolor{red!20}\textbf{2.56} & 51.48 & 20.11$\times$  \\
		web-webbase-2001 & \cellcolor{red!20}\textbf{0.14} & 12.20 & 87.14$\times$  \\
		CA-CondMat & \cellcolor{red!20}\textbf{21.63} & 541.63 & 25.04$\times$  \\
		soc-epinions & \cellcolor{red!20}\textbf{82.54} & 558.91 & 6.77$\times$  \\
		Email-Enron& \cellcolor{red!20}\textbf{1369.84} & 2253.14 & 1.64$\times$  \\
		loc-gowalla& \cellcolor{red!20}\textbf{5095.63} & 68216.14 & 13.39$\times$  \\
		DBLP & \cellcolor{red!20}\textbf{360.49} & 4888.93 & 13.56$\times$  \\
		Amazon & \cellcolor{red!20}\textbf{1118.08} & 1308.53 &  1.17$\times$  \\
		soc-youtube & \cellcolor{red!20}\textbf{9070.89} & 42821.99 & 4.72$\times$  \\
		soc-lastfm & \cellcolor{red!20}\textbf{11223.13} & $\ge$172, 800 & $\ge$ 15.40 $\times$ \\
		soc-flixster& \cellcolor{red!20}\textbf{3018.62} & $\ge$ 172, 800 &  $\ge$ 57.24 $\times$ \\
		soc-wiki-talk & \cellcolor{red!20}\textbf{57382.42} & $\ge$ 172, 800 & $\ge$ 3.011 $\times$ \\
		\bottomrule
	\end{tabular}
\end{table}
\normalsize

\subsection{Characteristics of the Detected L$h$CDSes}
We adopt different quality measures to show the characteristics of the detected L$h$CDSes, and then we compare our detected L$h$CDSes with those detected by the \texttt{Greedy} algorithm.

\subsubsection{Visualization of  L$h$CDSes with varying $h$}
We use a network of books about US politics which were sold by Amazon \cite{Krebs2004Books} to visualize the characteristics of detected L$h$CDSes with varying $h$. 
The vertices represent different books. Figure \ref{fig:clique_case}(a) shows the books fall into neutral(green), liberal(blue), and conservative(red) categories. The edges represent frequent co-purchasing of books by the same buyers, which indicate ``customers who bought this book also bought the other books''  on Amazon. 
Figure \ref{fig:clique_case}(b)-Figure \ref{fig:clique_case}(e) visualize the detected L$h$CDSes with varying $h$.
The set of steelblue vertices is the top-$1$ L$h$CDS, and the set of orange vertices if exists, is the top-$2$ L$h$CDS. The vertices in the dataset have clear attributes, which is convenient for us to measure whether \texttt{IPPV} has the ability to mine diverse communities. As visualized by the results, L$h$CDSes with larger $h$ are closer to a clique. Besides, when $h$ is larger, L$h$CDSes can find multiple dense communities in different categories. L$4$CDSes contain both liberal and conservative book communities, whereas LDSes (i.e., L$2$CDSes) only contain liberal book community. The results show the potential of \texttt{IPPV} to mine diverse categorical communities.

\begin{figure}[htbp]
	\centering
	\subfigure[categories]{\includegraphics[width=0.18\linewidth]{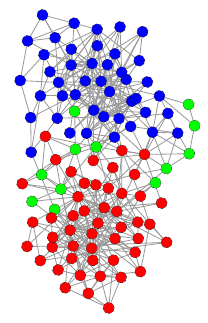}\label{fig:examplelabel}}
	\subfigure[$h=2$]{\includegraphics[width=0.18\linewidth]{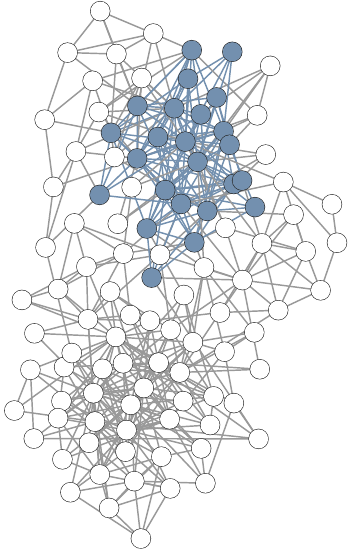}\label{fig:pol2_clique}}
	\subfigure[$h=3$]{\includegraphics[width=0.18\linewidth]{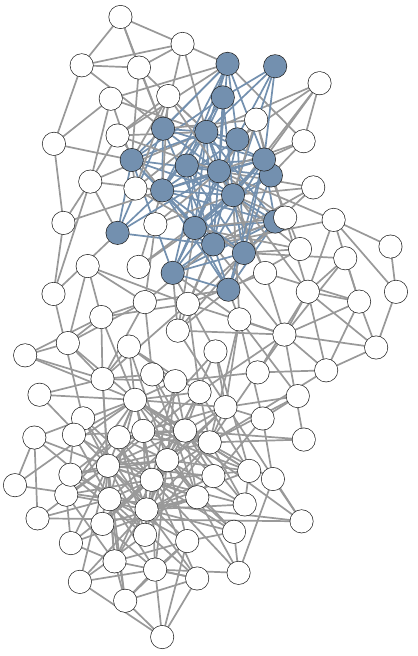}\label{fig:pol3_clique}}
	\subfigure[$h=4$]{\includegraphics[width=0.18\linewidth]{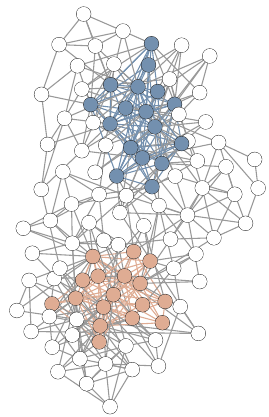}\label{fig:pol4_clique}}
	\subfigure[$h=5$]{\includegraphics[width=0.18\linewidth]{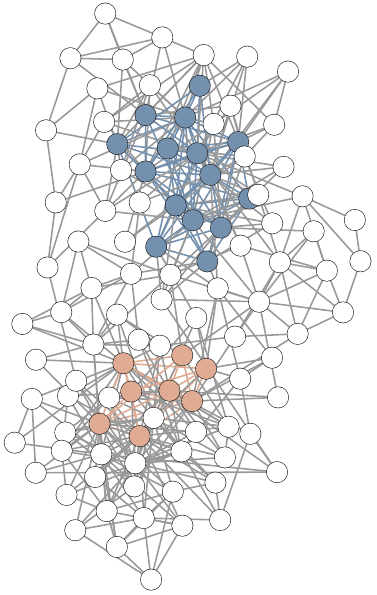}\label{fig:pol5_clique}}
	\caption{L$h$CDS case study on real network (the top-$1$ L$h$CDS: steelblue; the top-$2$ L$h$CDS: orange vertices)}
	\label{fig:clique_case}
\end{figure}

\subsubsection{Edge density of L$h$CDSes with varying $h$}
We compare the average edge density ($(2\times|E|)/[|V|\times(|V|-1)]$) of top-$5$ L$h$CDSes for varying $h$ in Table \ref{tab:avgd}. When $h$ is larger, L$h$CDSes generally have higher edge density. This is aligned with our general purpose.  
\footnotesize
\begin{table}[H]
	\caption{Average edge density and diameter with varying $h$}
	\label{tab:avgd}
	\resizebox{\linewidth}{!}{
		\begin{tabular}{c|c|cccc|c|cccc}
			\hline
			& \multicolumn{5}{c
				|}{Average Edge Density} &  \multicolumn{5}{c}{Average Diameter}                                                         \\ \cline{2-11} 
			\multirow{-2}{*}{dataset} & $h$=2 & $h$=3   & $h$=5                          & $h$=7                          & $h$=9    &  $h$=2 & $h$=3   & $h$=5                          & $h$=7                          & $h$=9                      \\ \hline
			PC          & \cellcolor{red!10}0.752    & \cellcolor{red!20}0.858   & \cellcolor{red!30}0.891                          & \cellcolor{red!40}0.894                         & \cellcolor{red!50}\textbf{0.957 } &2.00 &2.00 	&2.00 	&2.00 	&2.00   \\
			HA           & \cellcolor{red!10}0.805    & \cellcolor{red!20}0.869   & \cellcolor{red!40}0.987                          & \cellcolor{red!30}0.982                         & \cellcolor{red!50}\textbf{0.995 } &1.80	&1.60	&1.20	&1.40	&1.40 \\
			PP       & \cellcolor{red!30}0.709     & \cellcolor{red!10}0.685   & \cellcolor{red!20}0.696                         & \cellcolor{red!40}0.729                         & \cellcolor{red!50}\textbf{0.827 } &2.20	&2.00	&2.00	&2.00	&2.00 \\
			CM                & \cellcolor{red!10}0.984    & \cellcolor{red!20}0.984   & \cellcolor{red!30}0.986  & \cellcolor{red!40}0.986                         & \cellcolor{red!50}\textbf{0.986 }   &1.20	&1.20	&1.40	&1.40	&1.40       \\
			EP              & \cellcolor{red!10}0.487    & \cellcolor{red!20}0.724   & \cellcolor{red!50}\textbf{0.799}                        & \cellcolor{red!40}0.770                         & \cellcolor{red!30}0.755  &2.60	&1.80	&1.80	&1.67	&2.00 \\
			WB          & \cellcolor{red!10}0.987     & \cellcolor{red!20}0.987  & \cellcolor{red!30}0.997                         & \cellcolor{red!40}0.997  & \cellcolor{red!50}\textbf{0.997 }     &1.40	&1.40	&1.20	&1.20	&1.20      \\
			GQ                   & \cellcolor{red!10}0.972    & \cellcolor{red!20}0.972  & \cellcolor{red!30}0.972  & \cellcolor{red!40}\textbf{0.975}                        & OOM      &1.60	&1.60	&1.60	&1.60	&OOM          \\ \hline
		\end{tabular}
	}
\end{table}
\normalsize
\subsubsection{Diameter of L$h$CDSes with varying $h$}
We list the average diameter (the longest distance of all pairs of nodes in a graph) of top-$5$ L$h$CDSes of different $h$ in Table \ref{tab:avgd}. It is obvious that the diameters of the subgraphs produced by L$h$CDS ($h\geq 3$) do not exceed $2$. Coupled with the edge density results, while L$h$CDS can find subgraphs with relatively higher edge density compared to LDS, the diameters of the subgraphs found by L$h$CDS are as small as those found by LDS. This suggests that L$h$CDSes are cohesive, and the nodes within L$h$CDSes are highly interconnected.

\begin{figure}[h]
	\centering
	\includegraphics[width=\linewidth]{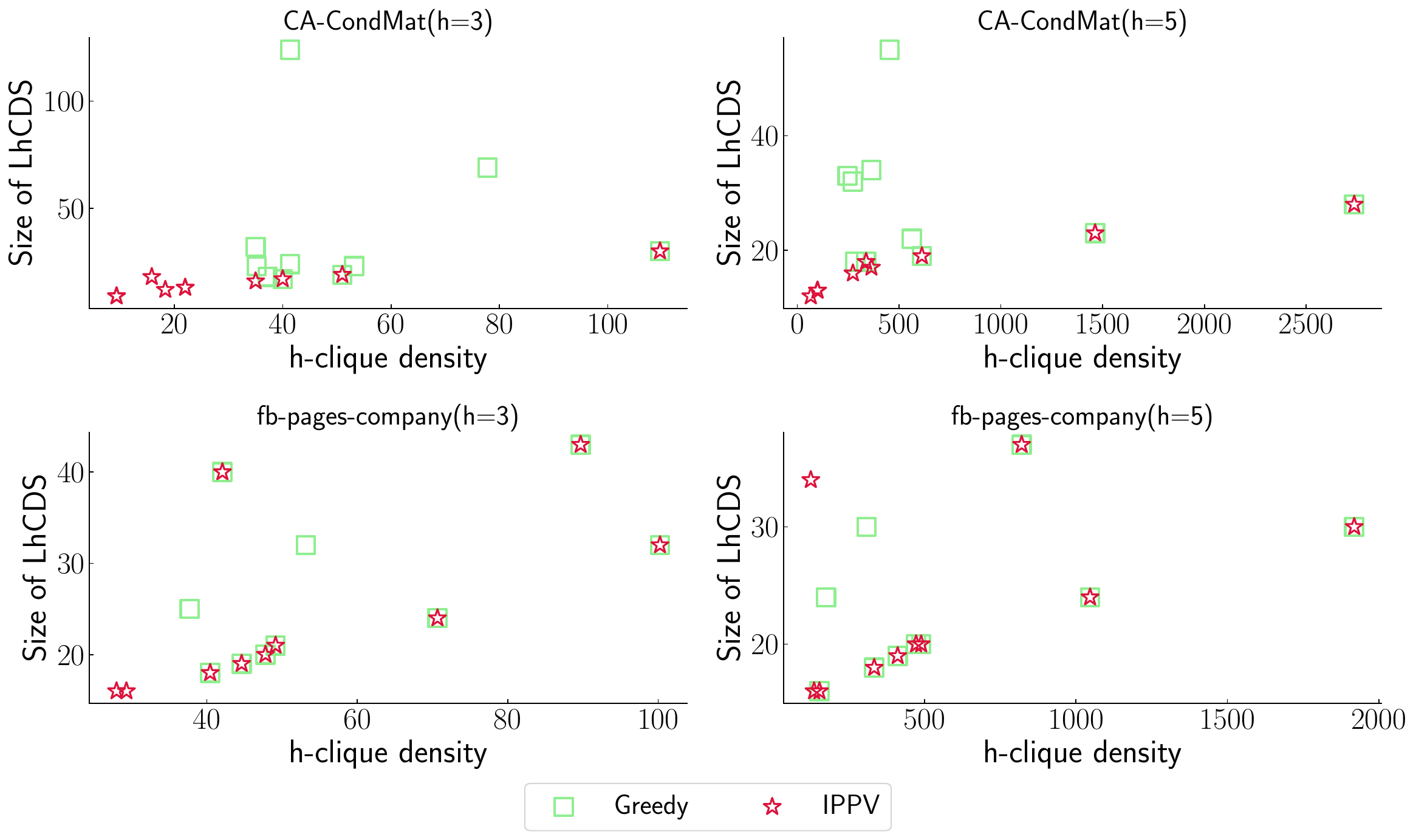}
	\caption{Subgraph statistics of $h$-clique density and size}
	\label{fig:exp2}
\end{figure}

\subsubsection{Size and $h$-clique density of subgraphs: \texttt{IPPV} vs \texttt{Greedy}}
Next, we compare the L$h$CDSes detected by \texttt{IPPV} and the $h$-clique densest subgraphs found by the \texttt{Greedy} algorithm. We select $h=3, 5$ on two datasets, and the results are shown in Figure \ref{fig:exp2}.
First, the results of the two algorithms overlap to a certain extent, among which the top-$1$ CDS is the same because the first L$h$CDS must be the $h$-clique densest subgraph in the whole graph. 
Second, there is a certain difference between the returned subgraphs of the \texttt{Greedy} algorithm and \texttt{IPPV}. For example, in CA-CondMat ($h$=3), the $h$-clique density of the second output subgraph is 51 for \texttt{IPPV}, and 78 for \texttt{Greedy}, but the returned subgraphs of \texttt{Greedy} is adjacent to the first output subgraph without the guarantee of the locally densest property. Therefore, the \texttt{Greedy} algorithm cannot solve the L$h$CDS problem well. The two algorithms totally overlap if and only if the top-$k$ $h$-clique densest subgraph belongs to different regions occasionally.

\subsection{Clustering Coefficient of Different $h$}
Since near clique is an important criterion for evaluating dense subgraphs, we evaluate how  L$h$CDSes of different $h$ are close to the clique structure. In graph theory, \emph{clustering coefficient} is a measure of the degree to which vertices in a graph tend to cluster together,  which is a direct measure to the degree of near clique. For each vertex $u\in V$, which has $k_{u}$ neighbors $N_{u}$ ($|N_{u}| = k_{u}$), the clustering coefficient of $u$ is  $C_{u}=\frac{2|\{e_{vw}: v,w\in N_{u}, e_{vw}\in E\}|}{k_{u}(k_{u}-1)}$. We compare the average $C_{u}$ of all the L$h$CDSes of different $h$ in Table \ref{tab:avgcc}.

\footnotesize
\begin{table}[H]
	\caption{Average clustering coefficient of different $h$ values}
	\label{tab:avgcc}
	\resizebox{\linewidth}{!}{
		\begin{tabular}{c|c|cccc}
			\hline
			& \multicolumn{5}{c}{Average Clustering Coefficient}                                                            \\ \cline{2-6} 
			\multirow{-2}{*}{dataset} & $h$=2 & $h$=3   & $h$=5                          & $h$=7                          & $h$=9                          \\ \hline
			fb-pages-company          & \cellcolor{red!10}0.582    & \cellcolor{red!20}0.852  & \cellcolor{red!30}0.895                         & \cellcolor{red!40}0.915                         & \cellcolor{red!50}\textbf{0.930} \\
			soc-hamsterster           & \cellcolor{red!10}0.480    & \cellcolor{red!20}0.910  & \cellcolor{red!40}0.990                         & \cellcolor{red!30}0.984                        & \cellcolor{red!50}\textbf{0.995} \\
			fb-pages-politician       & \cellcolor{red!10}0.583    & \cellcolor{red!20}0.683  & \cellcolor{red!30}0.776                        & \cellcolor{red!40}0.798                        & \cellcolor{red!50}\textbf{0.835} \\
			CA-CondMat                & \cellcolor{red!10}0.567    & \cellcolor{red!20}0.977  & \cellcolor{red!50}\textbf{0.992} & \cellcolor{red!50}0.992                         & \cellcolor{red!40}0.991                         \\
			soc-epinions              & \cellcolor{red!10}0.231    & \cellcolor{red!40}0.722  & \cellcolor{red!20}0.701                        & \cellcolor{red!30}0.705                         & \cellcolor{red!50}\textbf{0.773} \\
			web-webbase-2001          & \cellcolor{red!10}0.831    & \cellcolor{red!20}0.884 & \cellcolor{red!40}0.989                        & \cellcolor{red!50}\textbf{0.992} & \cellcolor{red!30}0.979                        \\
			CA-GrQc                   & \cellcolor{red!10}0.533    & \cellcolor{red!20}0.975 & \cellcolor{red!30}0.982 & \cellcolor{red!40}\textbf{0.985}                       & OOM               \\ \hline
		\end{tabular}
	}
\end{table}
\normalsize

According to the results shown in Table \ref{tab:avgcc}, when $h$ is larger, the average $C_{u}$ is generally larger, showing the L$h$CDSes with larger $h$ are closer to clique. In addition, there is a big difference between $h=3$ and $h=2$ (L$2$CDS is LDS), which shows that LDS is less dense than other L$h$CDS. Our algorithm is important for finding near-clique subgraphs, which cannot be replaced by LDS.

\subsection{Memory Overheads}
We compare the memory utilization for the \texttt{IPPV} and \texttt{LTDS} algorithms across all datasets ($h = 3$, $k = 5$).
Figure \ref{fig:exp3} illustrates a clear correlation between memory usage and dataset size. 
\texttt{IPPV} strategically reduces the size of candidate subgraphs through a pruning mechanism prior to evaluating self-compactness. The verifying part often dominates the memory consumption.

\begin{figure}[h]
	\centering
	\includegraphics[width=0.8\linewidth]{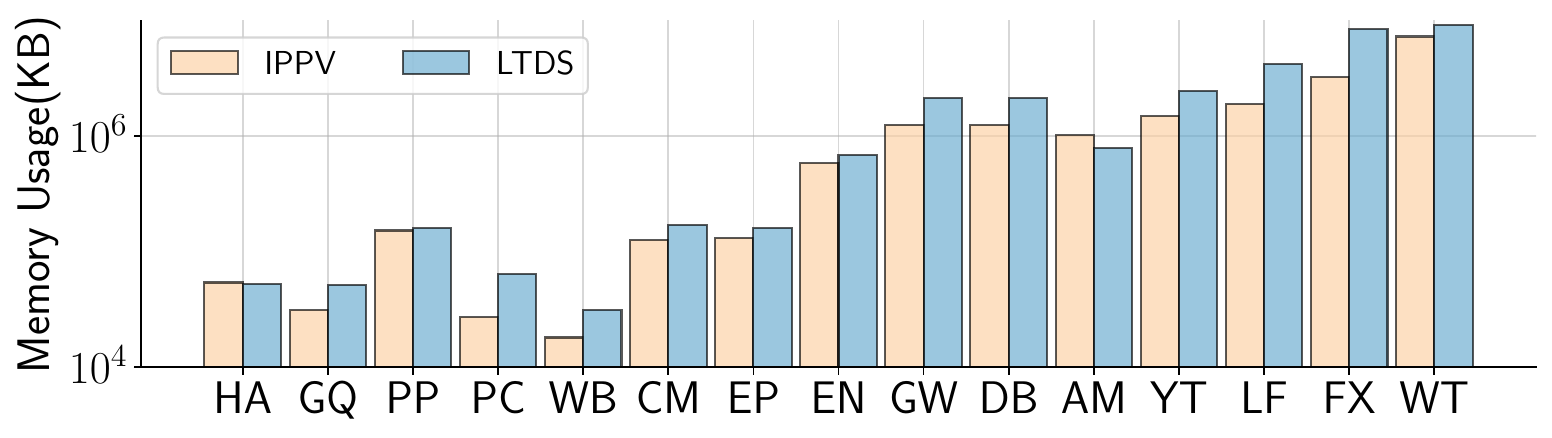}
	\caption{Memory usage of algorithms}
	\label{fig:exp3}
\end{figure}

\subsection{The Number of Iterations}
To choose the optimal number of iterations $T$ of \texttt{SEQ-kClist++}, we set different $T$ on the \texttt{IPPV} algorithm. We select $T=5, 10, 15, 20, 40,$ $60, 80, 100$, as shown in Figure \ref{fig:varyt}. The experiment on eight datasets shows that the optimal performance is between $15$ and $20$ iterations. In our experiments, we choose $T=20$.
\begin{figure}[h]
	\centering
	\includegraphics[width=\linewidth]{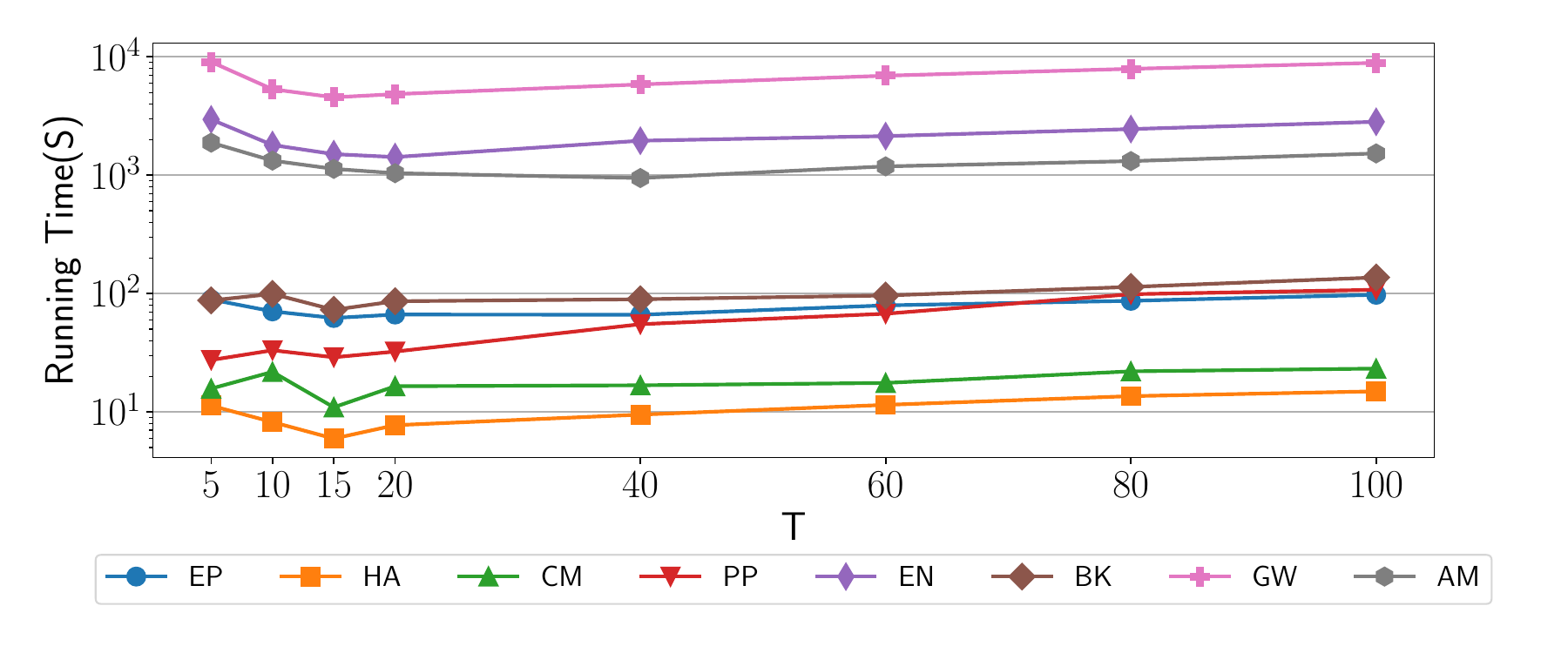}
	\caption{Running time of eight datasets with different $T$}
	\label{fig:varyt}
\end{figure}

\subsection{Case Study of L$hx$PDS}
We utilize the same real dataset \cite{Krebs2004Books} to experimentally illustrate the L$hx$PDS problem. For each pattern depicted in Figure \ref{fig:pattern}, we compute the results of L$4x$PDS. In Figure \ref{fig:pattern_case}, the set of steelblue vertices is the top-$1$ L$hx$PDS, and the set of orange vertices if exists, is the top-$2$ L$hx$PDS of the pattern $hx$. It is evident that the L$4x$PDS corresponding to various patterns exhibit differences in terms of the number of L$4x$PDS, the number of vertices, and the position of vertices. 
To delve deeper into graph analysis, tasks such as community clustering can be extended to explore the L$hx$PDS subgraph.
\begin{figure}[htbp]
	\centering
	\subfigure[$3$-star]{
		\includegraphics[width=0.14\linewidth]{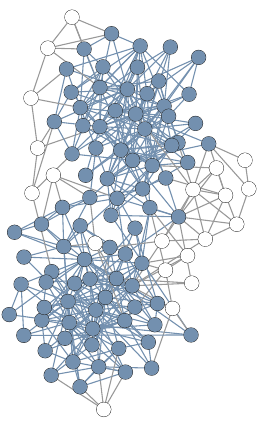}
		\label{fig:pol_3_star}
	}
	\subfigure[$4$-path]{
		\includegraphics[width=0.14\linewidth]{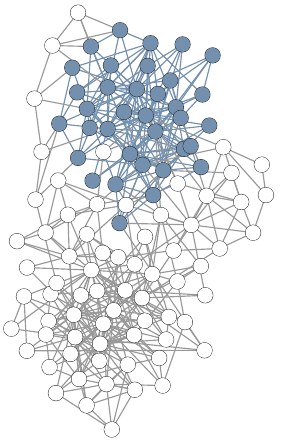}
		\label{fig:pol_4_path}
	}
	\subfigure[c$3$-star]{
		\includegraphics[width=0.14\linewidth]{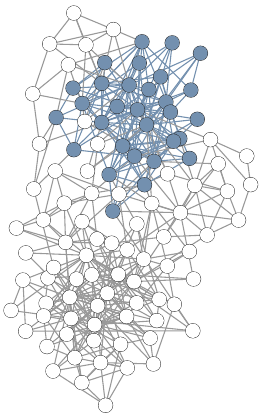}
		\label{fig:pol_c3_star}
	}
	\subfigure[$4$-loop]{
		\includegraphics[width=0.14\linewidth]{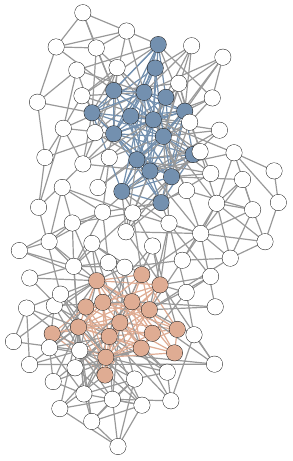}
		\label{fig:pol_4loop}
	}
	\subfigure[$2$-triangle]{
		\includegraphics[width=0.14\linewidth]{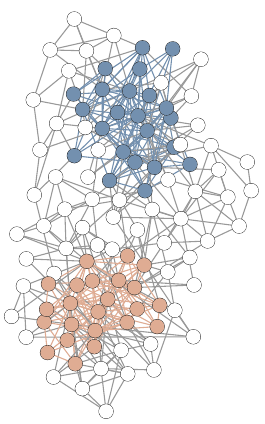}
		\label{fig:pol2_triangle}
	}
	\subfigure[$4$-clique]{
		\includegraphics[width=0.14\linewidth]{figure/4cliques.pdf}
		\label{fig:pol_4_clique}
	}
	\caption{L$4x$PDS case study on real network (the top-$1$ L$hx$PDS: steelblue; the top-$2$ L$hx$PDS: orange vertices)}
	\label{fig:pattern_case}
\end{figure}

\section{CONCLUSION}
In this paper, we study how to discover locally $h$-clique densest subgraphs in a graph $G$, i.e., the L$h$CDS problem. We present \texttt{IPPV}, an iterative propose-prune-and-verify pipeline for top-$k$ L$h$CDS detection. The $h$-clique compact number bounds and graph decomposition method which help to derive L$h$CDS candidates efficiently are proposed. A new optimized verification algorithm is designed, and its correctness is proved. The extension of our algorithm to solve the locally general pattern densest subgraph problem is feasible and promising. Extensive experiments on real datasets show the high efficiency and scalability of our proposed algorithm. When $h$ is large in large-scale graphs, there is still room for further optimizing \texttt{IPPV}. 
We will continue to optimize the algorithm and further explore the L$hx$PDS problem in our future work.

\begin{acks}
Dr. Wang is supported in part by the National Natural Science Foundation of China Grant No. 61972404, Public Computing Cloud, Renmin University of China, and the Blockchain Lab. School of Information, Renmin University of China. Dr. Li is supported in part by the National Natural Science Foundation of China Grant No.12071478. The authors are grateful to Professor Lijun Chang for his help with the revision of this paper. The authors would like to thank the anonymous reviewers and shepherd for providing constructive feedback and valuable suggestions.
\end{acks}

\bibliographystyle{ACM-Reference-Format}
\bibliography{LhCDS}

\end{document}